\providecommand{\U}[1]{\protect\rule{.1in}{.1in}}
\newtheorem{theorem}{Theorem}[section]
\newtheorem{conjecture}[theorem]{Conjecture}
\newtheorem{corollary}[theorem]{Corollary}
\newtheorem{definition}[theorem]{Definition}
\newtheorem{lemma}[theorem]{Lemma}
{\theoremstyle{definition}

}
\theoremstyle{definition}
\newtheorem{remark}[theorem]{Remark}
\begin{document}

\title{Searching the Nodes of a Graph: Theory and Algorithms}
\author{Ath. Kehagias, G. Hollinger and A. Gelastopoulos\thanks{emails:
kehagiat@auth.gr, geoff.hollinger@gmail.com, alexgelas@hotmail.com.}}
\date{19 May, 2009}
\maketitle

\begin{abstract}
One or more \emph{searchers} must \emph{capture} an invisible \emph{evader}
hiding in the \emph{nodes} of a graph. We study this version of the
\emph{graph search} problem under additional restrictions, such as
monotonicity and connectedness. We emphasize that we study \emph{node search},
i.e., the capture of a \emph{node-located }evader; this problem has so far
received much less attention than \emph{edge search}, i.e., the capture of an
\emph{edge-located} evader.

We show that in general graphs the problem of node search is easier than that
of edge search, Namely, every edge clearing search is also node clearing, but
the converse does not hold in general (however node search is NP-complete,
just like edge search). Then we concentrate on the \emph{internal monotone
connected }(IMC)\ node search of \emph{trees} and show that it is essentially
equivalent to IMC\ edge search; hence Barriere's tree search algorithm
\cite{Barriere1}, originally designed for edge search, can also be used for
node search.

We return to IMC\ node search on general graphs and present (several variants
of) a new algorithm: \emph{GSST }(Guaranteed Search by Spanning Tree). GSST
clears a graph $\mathbf{G}$ by performing all its clearing moves along a
spanning tree $\mathbf{T}$ of $\mathbf{G}$. Because spanning trees can be
generated and cleared very quickly, GSST\ can test a large number of spanning
trees and find one which clears $\mathbf{G}$ with a small (though not
necessarily minimal)\ number of searchers. We prove the existence of
probabilistically complete variants of GSST (i.e., these variants are
guaranteed to find a minimal IMC node clearing schedule if run for
sufficiently long time). Our experiments also indicate that GSST can
efficiently node-clear large graphs given only a small running time. \emph{An
implementation of GSST\ }(running on Windows and Linux computers) \emph{is
also provided and made publicly available}.

\end{abstract}

\section{Introduction}

\label{sec01}

In this paper we study a variant of \emph{graph search}. Our motivation comes
from applied robotics problems. As an introduction, consider the following two
problems, which can both be modeled as graph searches.

\begin{enumerate}
\item An \emph{evader} is hiding in a \emph{building} and a team of one or
more \emph{searchers} are trying to \emph{capture} him. The building is
represented by a graph $\mathbf{G}$: nodes are \emph{rooms} and edges are
\emph{doors} between rooms. Both searchers and evader occupy the nodes of the
graph and move from node to node by sliding along the edges. The evader is
captured when he is located in the same node as a searcher. It is assumed that
the evader has the following characteristics:\ (a)\ he wants to avoid capture,
(b)\ he is invisible to the searchers (unless he is located in the same node
as one of them), (c)\ he is always aware of the searchers' locations and
(d)\ he is arbitrarily fast.

\item An evader and a team of searchers are located inside a \emph{cave}. The
cave is again represented by a graph $\mathbf{G}$: edges are \emph{tunnels}
and nodes are the \emph{intersections} of tunnels. The searchers move from
node to node, sliding along the edges; capture takes place if a searcher
slides along an edge occupied by the evader, or if the evader moves through a
node occupied by a searcher. The previously mentioned properties of the evader
are assumed to hold in this case as well.
\end{enumerate}

The two problems are very similar but differ in one important respect. In the
first problem the evader is assumed to reside in the \emph{nodes} of the graph
(the rooms); in the second problem he is assumed to reside in the \emph{edges}
(the tunnels). Accordingly, we will call the first problem \emph{node search}
and the second \emph{edge search}. Edge- or node-location of the evader is an
important aspect of the graph search problem and can be used to categorize the
related literature. Historically, research has concentrated on the problem of
\textquotedblleft\emph{edge-located} evader\textquotedblright. To the best of
our knowledge, very little work has been done on the problem of
\textquotedblleft\emph{node-located} evader\textquotedblright. Interestingly,
the term \textquotedblleft node search\textquotedblright\ has been used in the
past to denote a version of the graph search for edge-located evader
\cite{Kiroussis}. Hence, to avoid confusion, let us repeat that \emph{we use
\textquotedblleft node search\textquotedblright\ to denote the search for a
node-located evader}.\emph{ }

Another important aspect of the graph search problem is the \textquotedblleft
visibility\textquotedblright\ of the evader. Namely, we say that the evader is
\emph{visible} if the searchers always know his location; we say that the
evader is \emph{invisible} if the evaders may become aware of his location
only when he is in the same node (or edge) as they.

Additional aspects of the graph search include whether it is \emph{internal}
(the searchers can only move along the edges of the graph) or not (a searcher
can, in one step, move to any node of the graph -- we call this
\textquotedblleft\emph{teleporting}\textquotedblright), \emph{monotone }(the
evader can never return to a searched part of the graph) and \emph{connected}
(the searched parts form a connected subgraph). These terms will be defined
rigorously in Section \ref{sec02}.

We now present a brief overview of the graph search literature. Early papers
include \cite{Breisch} where the problem of cave searching was posed and
\cite{Parsons1} where the first mathematical study of the problem was
presented. The problem was independently rediscovered a little later in
\cite{Petrov}. The edge search problem was placed in the context of graph
theory in \cite{Megiddo}. The already mentioned \textquotedblleft node
search\textquotedblright\ variant \ (actually dealing with edge-located
evader)\ appeared a little earlier in \cite{Kiroussis}. Another variant,
called \textquotedblleft mixed search\textquotedblright\ (and again dealing
with edge-located evader)\ appeared in \cite{Bienstock} and was further
studied in \cite{Takahashi,Yang1}. The study of \emph{connected }edge search
started relatively recently; see \cite{Barriere1, Barriere2, Fomin,
Fraigniaud1, Fraigniaud2, Yang2}. There is very little published on
\textquotedblleft true\textquotedblright\ node search, i.e., the search for a
node-located evader. There is a considerable literature on pursuit of
\emph{visible} node-located evaders, which we will not discuss, since visible
evader search is outside the scope of the current paper; let us simply mention
that an important early paper is \cite{Nowakowski}. Graph search is also
related to several \emph{graph parameters}, most notably \ \emph{pathwidth}
and \emph{vertex separation}. These connections are discussed in many papers,
e.g., in
\cite{BienstockReview,Dendris,Ellis,Kinnersley,Kiroussis,ThilikosParams}.
Finally there are several reviews of the graph search literature:\ an old and
deep one is \cite{BienstockReview}, a more recent one is \cite{Alspach} and a
very recent and very extensive one is \cite{FominThilikos}.

The above mentioned papers adopt a theoretical point of view. While there is
much discussion about graph search algorithms, we have found few actually
implemented algorithms which can tackle relatively large graphs (with the
exception of \emph{tree} search algorithms). A much more applied line of
research appears in the \emph{robotics} literature, for example in
\cite{Gerkey1,Gerkey2,Guibas,Lavalle} and the book \cite{LavalleBook}. These
papers present actual implementations of algorithms, as well as numerical and
even physical experiments, but they provide little (if any)\ theoretical
justification of their methods. In particular, the distinction between edge
and node search often appears to be misunderstood. We attempt to clarify this
distinction and also combine the robotics and graph theoretic points in two
technical reports we have previously published \cite{Hollinger1,Kehagias1}.

In the current paper we study graph search for an invisible, node-located
evader. We also examine connections to the problem of invisible, edge-located
evader. We do \emph{not} deal with the case of visible evader. We are mainly
interested in \emph{internal, monotone, connected }(IMC)\ node search. This
version of the problem is the one most relevant to robotics applications
which, as already stated, provide our main motivation. In general, we try to
provide a balanced combination of theory and implementation.

The main contribution of the paper is the introduction of (several variants
of)\ the node-search algorithm GSST (\emph{Guaranteed Search by Spanning
Tree}). This algorithm is presented in detail, theoretically motivated and
analyzed and also practically evaluated by a number of numerical
experiments\footnote{An implementation of the algorithm (for Windows and Linux
computers)\ is available in the public domain; for details see Appendix
\ref{secD}.}. The \emph{main idea }behind the algorithm is that every
IMC\ node search generates a spanning tree of the graph; conversely, every
spanning tree can be used to perform a node search. Trees can be searched much
more efficiently than general graphs; GSST\ exploits this fact to quickly
compute a large number of searches and then output the one which requires the
least number of searchers\footnote{In fact GSST\ is an \textquotedblleft
anytime\textquotedblright\ algorithm \cite{Zilberstein}. Namely, the longer it
runs the better solutions are provided, but a reasonable solution becomes
available even in the initial stages of the algorithm.}. \ Our experiments
also indicate that GSST can efficiently node-clear large graphs in a small
running time. From the theoretical point of view, we establish several results
about node search and its relation to edge search. These results are used to
motivate the GSST\ algorithm and also to prove that some variants of GSST are
probabilistically complete (i.e., they are guaranteed to find a minimal IMC
node clearing schedule if given sufficient time).

The paper is organized as follows.\ Preliminary concepts and notation are
introduced in Section \ref{sec02}; edge search and node search are compared in
Section \ref{sec03}; IMC\ node and edge search on \emph{trees} is studied in
Section \ref{sec04}; IMC\ node search on general graphs is studied in Section
\ref{sec05} where the GSST\ search algorithm is also introduced; the algorithm
is evaluated by numerical experiments in Section \ref{sec06}; conclusions and
future research directions appear in Section \ref{sec07}. Some edge search
results by Barriere et al. \cite{Barriere1,Barriere2} are presented in
Appendix \ref{secA}; the connection of node search to \emph{mixed} edge search
is discussed in Appendix \ref{secB}; connections to \emph{pathwidth} are
discussed in Appendix \ref{secC}; an implementation of GSST as an actual
executable program (which runs on Windows and Linux computers)\ is discussed
in Appendix \ref{secD}.

\section{Preliminaries}

\label{sec02}

\subsection{Basic concepts and notation}

We denote graphs by boldface letters, e.g. $\mathbf{G}=\left(  V,E\right)  $
where $V$ is the \emph{node set} and $E$ is the \emph{edge set}. We will
always label the nodes of $\mathbf{G}$ as $V=\left\{  1,2,...,N\right\}  $
(hence the graph contains $N$ nodes, i.e. $\left\vert V\right\vert =N$). Edges
are denoted as $\left\{  u,v\right\}  $ where $u,v\in V$; often we will write
$uv$ instead, for simplicity, but note that $vu$ and $uv$ are the same edge
(i.e. we study \emph{undirected} graphs). We only consider connected graphs
without loops or multiple edges. \emph{Also, to avoid some trivial cases, we
will always consider graphs with at least one edge (and at least two nodes).}

Nodes $u,v$ are said to be \emph{neighbors} iff $uv\in E$.

Given a node sequence $u_{1}u_{2}...u_{L}$, where $u_{i}u_{i+1}\in E$ for $i=$
$1,2,...,L-1,$ we say that $u_{1}u_{2}...u_{L}$ is

\begin{enumerate}
\item a \emph{path} iff $u_{i}\neq u_{j}$ for $i,j\in\left\{
1,2,...,L\right\}  $ and $i\neq j$;

\item a \emph{cycle} iff $u_{i}\neq u_{j}$ for $i,j\in\left\{
1,2,...,L\right\}  $ (and $i\neq j$) except that $u_{1}=u_{L}$.
\end{enumerate}

A \emph{tree} is a connected graph without any cycles. Equivalent definitions
are:\ a tree is a graph for which there is a unique path between every pair of
nodes; a tree is a connected graph with $N$ nodes and $N-1$ edges. The
\emph{leaves }of the tree are the nodes which have exactly one neighbor.

A \emph{rooted tree} is a tree with a distinguished node $u_{0}$ called the
\emph{root }of the tree. Given a tree $\mathbf{T=}\left(  V,E\right)  $, we
will denote the same tree rooted at $u_{0}$ by $\mathbf{T}_{u_{0}}=\left(
V,E,u_{0}\right)  $. Take any node $u_{L}\neq u_{0}$ and let $u_{0}%
u_{1}...u_{L-1}u_{L}$ be the unique path from $u_{0}$ to $u_{L}$; then
$u_{L-1}$ is the \emph{parent }of $u_{L}$; $u_{L}$ is the \emph{child }of
$u_{L-1}$. Given a node $x$, its children, their children and so on are the
\emph{descendants} of $x$. Given a rooted tree $\mathbf{T}_{x}=\left(
V,E,x\right)  $ and a node $y$, consider the node set
\[
V\left[  y\right]  =\left\{  z:\text{ }z\text{ is }y\text{ or a descendant of
}y\right\}
\]
and the edge set%
\[
E\left[  y\right]  =\left\{  zu:zu\in E\text{ and }z,u\in V\left[  y\right]
\right\}  ;
\]
note that both $V\left[  y\right]  $ and $E\left[  y\right]  $ are determined
by \emph{both} $y$ and $x$. The graph $\left(  V\left[  y\right]  ,E\left[
y\right]  \right)  $ is a tree (a \emph{sub-tree }of $\mathbf{T}_{x}$). The
\emph{rooted }tree $\left(  V\left[  y\right]  ,E\left[  y\right]  ,y\right)
$ will be denoted by $\mathbf{T}_{x}\left[  y\right]  $.

A \emph{search schedule} on the graph $\mathbf{G}=\left(  V,E\right)  $ is a
sequence of \emph{ordered }pairs of nodes:
\[
\mathbf{S}=(\left(  u\left(  1\right)  ,v\left(  1\right)  \right)  ,\left(
u\left(  2\right)  ,v\left(  2\right)  \right)  ,...,\left(  u\left(
t_{fin}\right)  ,v\left(  t_{fin}\right)  \right)  ),
\]
subject to $\left\{  u\left(  t\right)  ,v\left(  t\right)  \right\}  \in E$
for $t=1,2,...,t_{fin}$. We call $\mathbf{S}\left(  t\right)  =\left(
u\left(  t\right)  ,v\left(  t\right)  \right)  $ the $t$-th move (or
step)\ of the search schedule. The $t$-th move can be any one of the following

\begin{enumerate}
\item \emph{placing }a searcher at node $v$ (in which case $u\left(  t\right)
=0$ and $v\left(  t\right)  \in V$) or

\item \emph{sliding} a searcher from node $u\left(  t\right)  \in V$ to node
$v\left(  t\right)  \in V$ or

\item \emph{removing }a searcher from node $u$ (in which case $u\left(
t\right)  \in V$ and $v\left(  t\right)  =0$).
\end{enumerate}

We will also use the more evocative notation $\mathbf{S}\left(  t\right)
=\left(  u\rightarrow v\right)  $, i.e. a searcher is moved from node $u$ to
node $v$. Moves of the form: $0\rightarrow v$ (a new searcher is introduced in
the graph)\ and $u\rightarrow0$ (a searcher is removed from the graph) involve
the fictitious \textquotedblleft source\ node\textquotedblright\ 0 (it is not
an element of $V$), in which searchers are kept whenever they are not actively
involved in the graph search; the evader does not have access to the source node.

Given a search $\mathbf{S}$, the number of searchers inside the graph at time
$t$ will be denoted by $sn\left(  \mathbf{S},t\right)  $. The maximum number
of searchers used by $\mathbf{S}$ will be denoted by $\overline{sn}\left(
\mathbf{S}\right)  $, i.e.
\[
\overline{sn}\left(  \mathbf{S}\right)  =\max_{t}sn\left(  \mathbf{S,}%
t\right)  .
\]
These numbers must not be confused with the search number of a \emph{graph},
which will be defined in Section \ref{sec0202}.

Finally, we say that a node $u$ is \emph{guarded} at time $t$ iff a searcher
is located at $u$ (at time $t$); otherwise we say $u$ is \emph{unguarded}. \ A
path $u_{1}u_{2}...u_{L}$ is called \emph{n-unguarded }(node unguarded) iff
nodes $u_{i}$ ($i=1,2,...,L$) are unguarded (at time $t$); otherwise it is
called \emph{n-guarded.} The path is called \emph{e-unguarded }(edge
unguarded) iff nodes $u_{i}$ ($i=2,3,...,L-1$) are unguarded (at time $t$) and
e-guarded otherwise. It is easy to see that%
\begin{align*}
u_{1}u_{2}...u_{L}\text{ is e-unguarded}  &  \Rightarrow u_{2}u_{3}%
...u_{L-1}\text{ is n-unguarded,}\\
u_{1}u_{2}...u_{L}\text{ is n-unguarded}  &  \Rightarrow u_{1}u_{2}%
...u_{L}\text{ is e-unguarded.}%
\end{align*}
The reason for the two different definitions is that the first pertains to
\emph{node} recontamination and the second to \emph{edge }recontamination, as
will be seen in Section \ref{sec0202}.

\subsection{Node and edge search}

\label{sec0202}

We repeat the assumptions introduced in Section \ref{sec01} regarding the
evader. Namely, the evader wants to avoid capture, he is invisible to the
searchers (unless located in the same node), he is always aware of the
searchers' locations and arbitrarily fast. The net result of all these
assumptions is that (in both node and edge search)\ the evader can (and will)
always avoid capture if an escape route is available. Hence, from the
searchers' point of view, we can think of graph search as a process of
eliminating escape routes. This is expressed as follows:\ 

\begin{enumerate}
\item a node / edge is \emph{dirty} if it can \emph{possibly} contain the
evader and \emph{clear} otherwise (e.g., a node occupied by a searcher is clear);

\item a previously clear node / edge can become dirty (e.g., when a previously
guarded path between a clear and a dirty node becomes unguarded) -- this is
called \emph{recontamination};

\item graph search is the process of gradually decreasing the \emph{dirty set}
(of nodes or edges) until it becomes the empty set (i.e., the evader has no
escape route left).
\end{enumerate}

This is a \emph{worst case approach }which essentially eliminates the evader
from the graph search, introducing in his place the dirty set. We can think of
node (edge) search as a \emph{one-player} \textquotedblleft node
game\textquotedblright\ (\textquotedblleft edge game\textquotedblright). In
both these games the player controls all searchers, much like pieces in a game
of chess. We now present the rules of the two games. For reasons which will be
explained presently, we substitute the terms \textquotedblleft
clear\textquotedblright\ and \textquotedblleft dirty\textquotedblright\ with
the terms \textquotedblleft n-clear\textquotedblright\ and \textquotedblleft
n-dirty\textquotedblright\ (in the node game) and \textquotedblleft
e-clear\textquotedblright\ and \textquotedblleft e-dirty\textquotedblright%
\ (in the edge game).

\begin{center}
--------------------------------------------------------------------------------------------------

\textbf{Rules of the Node Game}
\end{center}

\begin{enumerate}
\item[\textbf{{N0}}] At time $t=0$ all nodes are \emph{n-dirty} and no
searcher is located in the graph.

\item[\textbf{N1}] At every time $t=1,2,...$ the player performs \emph{one} of
the following moves:

\begin{enumerate}
\item[\textbf{{N1a}}] place a searcher on a node,

\item[\textbf{{N1b}}] remove a searcher from a node,

\item[\textbf{{N1c}}] slide a searcher along one edge.
\end{enumerate}

\item[\textbf{{N2}}] An n-dirty node becomes \emph{n-clear }when occupied by a searcher.

\item[\textbf{{N3}}] An n-clear node $u$ becomes n-dirty when it is connected
to to an n-dirty node $v$ by an \emph{n-unguarded\ path}.

\item[\textbf{{N4}}] An edge is n-dirty if it is adjacent to an n-dirty node;
otherwise it is clear.

\item[\textbf{{N5}}] The game is concluded when all nodes (and consequently
also all edges) are n-clear.
\end{enumerate}

\begin{center}
--------------------------------------------------------------------------------------------------

\textbf{Rules of the Edge Game}
\end{center}

\begin{enumerate}
\item[\textbf{{E0}}] At time $t=0$ all edges are \emph{e-dirty} and no
searcher is located in the graph.

\item[\textbf{E1}] At every time $t=1,2,...$ the player performs \emph{one} of
the following moves:

\begin{enumerate}
\item[\textbf{{E1a}}] place a searcher on a node,

\item[\textbf{{E1b}}] remove a searcher from a node,

\item[\textbf{{E1c}}] slide a searcher along one edge.
\end{enumerate}

\item[\textbf{{E2}}] An e-dirty edge becomes e-clear when a searcher slides
along the edge.

\item[\textbf{{E3}}] An e-clear edge $uv$ becomes e-dirty when it is connected
to to a e-dirty edge $xy$ by an \emph{e-unguarded\ path}.

\item[\textbf{{E4}}] A node is e-dirty if it is unguarded and adjacent to an
e-dirty edge; otherwise it is e-clear.

\item[\textbf{{E5}}] The game is concluded when all edges (and consequently
also all nodes)\ are e-clear.
\end{enumerate}

\begin{center}
--------------------------------------------------------------------------------------------------
\end{center}

\begin{remark}
\label{prp0201}The reason for using \textquotedblleft
n-clear\textquotedblright\ and \textquotedblleft e-clear\textquotedblright%
\ (instead of simply \textquotedblleft clear\textquotedblright)\ is that an
edge can be clear in the node game and dirty in the edge game. An example will
illustrate this point. Consider the graph of Fig.1 and the search schedule
$0\rightarrow1,$ $0\rightarrow1,$ $1\rightarrow2,$ $2\rightarrow4,$
$4\rightarrow3.$ If this schedule is executed in a node game, after the final
move all nodes are n-clear and so all edges are also n-clear. But in an edge
game, after the final move, edge $\left\{  1,3\right\}  $ has still not been
traversed and hence it is still e-dirty. More generally, while
\textquotedblleft n-clear\textquotedblright\ and \textquotedblleft
e-clear\textquotedblright\ refer to similar physical situations, there is no a
priori reason that their mathematical definitions are equivalent (in Section
\ref{sec04} we will show that they \emph{are }equivalent \emph{provided
certain conditions are satisfied}).
\end{remark}

\begin{figure}[h]
\centering
\includegraphics[width=2.25in]{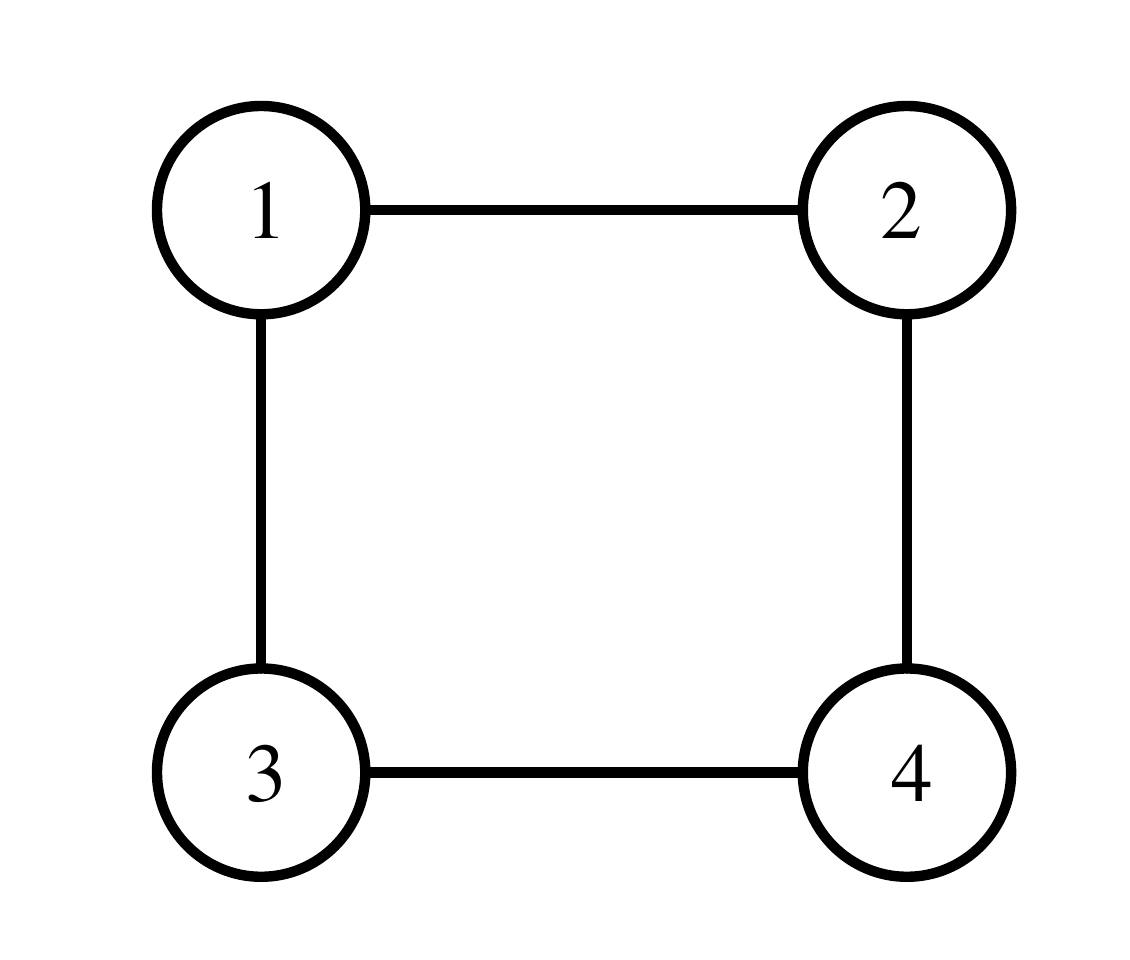}\caption{A graph in which node
clearing and edge clearing are not equivalent. For example, placing a searcher
into node 0 ($0\rightarrow1$), then placing a second searcher at 1 and sending
him through 2, 4, 3 ($0\rightarrow1$, $1\rightarrow2$, $2\rightarrow4$,
$4\rightarrow3$) is node clearing but not edge clearing.}%
\label{fig:graph1}%
\end{figure}

\begin{remark}
\label{prp0202}Note that in the edge game clear edges are \textquotedblleft
recorded\textquotedblright\ at the completion of each move. This is important
to note in the case that an edge is e-cleared and e-dirtied \emph{during the
same move }(i.e. for a single value of $t$). To clarify this, suppose that in
the graph of Fig.1 the following moves are executed:\ $0\rightarrow1$ at
$t=1$; $1\rightarrow2$ at $t=2$. In this case edge $\left\{  1,2\right\}  $ is
first e-cleared (because traversed)\ and then e-dirtied, \emph{both during
}$t=2$. After completion of the second move, \emph{edge }$\left\{
1,2\right\}  $ \emph{is e-dirty}.
\end{remark}

We now introduce more detailed notation regarding clear/dirty nodes/edges. For
the node game we will use the following notation.%
\begin{align*}
V_{N}^{C}\left(  t\right)   &  =\left\{  u\in V:\text{at time }t\text{,
}u\text{ is n-clear }\right\}  ;\\
V_{N}^{D}\left(  t\right)   &  =\left\{  u\in V:\text{at time }t\text{,
}u\text{ is n-dirty}\right\}  ;\\
E_{N}^{C}\left(  t\right)   &  =\left\{  uv\in E:\text{ at time }t\text{,
}uv\text{ is n-clear}\right\}  ;\\
E_{N}^{D}\left(  t\right)   &  =\left\{  uv\in E:\text{ at time }t\text{,
}uv\text{ is n-dirty}\right\}  ;\\
\mathbf{G}_{N}^{C}\left(  t\right)   &  =\left(  V_{N}^{C}\left(  t\right)
,E_{N}^{C}\left(  t\right)  \right)  \text{ is the }\emph{n-clear\ graph}%
\text{ at time }t.\
\end{align*}

We use exactly analogous notation for the edge game, but with an $E$ (rather
than $N$) subscript:$\ V_{E}^{C}\left(  t\right)  $, $V_{E}^{D}\left(
t\right)  $, $E_{E}^{C}\left(  t\right)  $, $E_{E}^{D}\left(  t\right)  $,
$\mathbf{G}_{E}^{C}\left(  t\right)  =\left(  V_{E}^{C}\left(  t\right)
,E_{E}^{C}\left(  t\right)  \right)  $. Obviously we have
\begin{align*}
V_{N}^{C}\left(  t\right)  \cup V_{N}^{D}\left(  t\right)   &  =V,\quad
V_{N}^{C}\left(  t\right)  \cap V_{N}^{D}\left(  t\right)  =\emptyset,\quad
E_{N}^{C}\left(  t\right)  \cup E_{N}^{D}\left(  t\right)  =E,\quad E_{N}%
^{C}\left(  t\right)  \cap E_{N}^{D}\left(  t\right)  =\emptyset,\\
V_{E}^{C}\left(  t\right)  \cup V_{E}^{D}\left(  t\right)   &  =V,\quad
V_{E}^{C}\left(  t\right)  \cap V_{E}^{D}\left(  t\right)  =\emptyset,\quad
E_{E}^{C}\left(  t\right)  \cup E_{E}^{D}\left(  t\right)  =E,\quad E_{E}%
^{C}\left(  t\right)  \cap E_{E}^{D}\left(  t\right)  =\emptyset.
\end{align*}

\begin{remark}
\label{prp0203}The n-clear / e-clear graph definitions are consistent, i.e.
$\left(  V_{N}^{C}\left(  t\right)  ,E_{N}^{C}\left(  t\right)  \right)  $ /
$\left(  V_{E}^{C}\left(  t\right)  ,E_{E}^{C}\left(  t\right)  \right)  $
\emph{are} graphs. More specifically: \ 
\end{remark}

\begin{enumerate}
\item in the node game $\mathbf{G}_{N}^{C}\left(  t\right)  $ consists of the
n-clear nodes and all edges between such nodes (these are exactly the set
$E_{N}^{C}\left(  t\right)  $);

\item in the edge game $\mathbf{G}_{E}^{C}\left(  t\right)  $ consists of the
e-clear edges and none of these can be left \textquotedblleft
dangling\textquotedblright, i.e. if $uv\in E_{E}^{C}\left(  t\right)  $ then
$u$,$v\in V_{E}^{C}\left(  t\right)  $ (we will prove this as Lemma
\ref{prp0301} in Section \ref{sec03}); in addition $V_{E}^{C}\left(  t\right)
$ may contain guarded nodes adjacent to e-dirty edges; such nodes may appear
as disconnected components of $\mathbf{G}_{E}^{C}\left(  t\right)  $.
\end{enumerate}

In the current paper we will concentrate on some \emph{restricted} versions of
the edge and node games. More specifically, the restrictions concern the
allowable search schedules and they are the following.

\begin{enumerate}
\item We call a search schedule \emph{rooted }iff searchers can be placed only
into a single, prespecified node $u_{0}$, called the \emph{root} of the search.

\item We call a search schedule \emph{internal }iff searchers once placed into
the graph (a) can only move along the edges and (b)\ are never removed from
the graph. In other words, \textquotedblleft teleporting\textquotedblright%
\ (the arbitrary movement of a searcher from one node to another, irrespective
of the graph connectivity) is not allowed\footnote{Since no searchers are ever
removed during an internal search $\mathbf{S}$, we have $\overline{sn}\left(
\mathbf{S}\right)  =sn\left(  \mathbf{S,}t_{fin}\right)  $ where $t_{fin}$ is
the length of the search.}.

\item In the node game we call a search schedule $\mathbf{S}$ \emph{monotone
}iff the clear node set is never decreasing: $V_{N}^{C}\left(  t\right)
\subseteq V_{N}^{C}\left(  t+1\right)  $ for all $t.$ In other words, once a
node is n-cleared it never becomes n-dirty again. The definition is similar
for the edge game; i.e., $\mathbf{S}$ is monotone in the edge game iff
$E_{E}^{C}\left(  t\right)  \subseteq E_{E}^{C}\left(  t+1\right)  $ for all
$t.$\footnote{An important detail is worth mentioning at this point. There is
a property, call it \emph{strong monotonicity}, which characterizes schedules
in which a \emph{traversed} edge never becomes e-dirty again. The schedule of
Remark \ref{prp0202}. is monotone (since $E_{E}^{C}\left(  1\right)
=E_{E}^{C}\left(  2\right)  =\emptyset$) but \emph{not }strongly monotone. We
stress this detail mostly for the sake of completness; in what follows we will
\emph{not} use strong monotonicity.} \ 

\item In the node game (resp. edge game)\ we call a search schedule
\emph{connected }iff the clear graph $\mathbf{G}_{N}^{C}\left(  t\right)  $
(resp. $\mathbf{G}_{E}^{C}\left(  t\right)  $)\ is connected for $t=1,2,...$.
\end{enumerate}

In Section \ref{sec03} we will study general, unrestricted graph search; in
the remaining sections we will concentrate on internal monotone (\emph{IM}),
internal connected (\emph{IC})\ and, especially, internal monotone connected
(\emph{IMC})\ node / edge search. Our focus originates in our interest in
practical \emph{pursuit / evasion }problems which arise in robotics. The
nature of the problem dictates the restrictions on the search.

\begin{enumerate}
\item The internality constraint arises from the fact that a robot cannot
\textquotedblleft teleport\textquotedblright; it can only move through rooms
and doors connecting these rooms; in the graph model of the physical situation
this is reflected by the use of edge-sliding moves only.

\item Similarly, rootedness reflects the fact that usualy an environment has a
single entrance through which robots must enter.

\item Connectedness and monotonicity are desirable (but not mandatory)
characteristics of robotic search. For example, in a hostile environment it is
preferrable that the cleared area consists of a single connected region, which
is easier to guard and control than multiple regions. Similarly, if a high
cost is associated with clearing an area, it is desirable that this area is
cleared only once, i.e., that the search is monotone.
\end{enumerate}

A given graph $\mathbf{G}$ with $N$ nodes can always be node-cleared using $N$
searchers:\ just place one searcher in every node. Usually $\mathbf{G}$ can be
cleared with much $^{{}}$fewer searchers; the \emph{node search number} of
$\mathbf{G}$, denoted by $s_{N}\left(  \mathbf{G}\right)  $, is the minimum
number of searchers required to node-clear $\mathbf{G}$. Introducing
additional constraints on the search schedule we get additional search numbers:

\begin{center}%
\begin{tabular}
[c]{lll}%
$s_{N}^{i}\left(  \mathbf{G}\right)  $ : & min. nr. of searchers required to
clear $\mathbf{G}$ with & internal\ node search schedule;\\
$s_{N}^{im}\left(  \mathbf{G}\right)  :$ & \qquad\qquad\qquad-//- & IM\ node
search schedule;\\
$s_{N}^{ic}\left(  \mathbf{G}\right)  $ : & \qquad\qquad\qquad-//- & IC\ node
search schedule;\\
$s_{N}^{imc}\left(  \mathbf{G}\right)  :$ & \qquad\qquad\qquad-//- & IMC\ node
search schedule.
\end{tabular}

\end{center}

For the edge game we define the corresponding edge search numbers:\ $s_{E}%
\left(  \mathbf{G}\right)  $, $s_{E}^{i}\left(  \mathbf{G}\right)  $,
$s_{E}^{im}\left(  \mathbf{G}\right)  $, $s_{E}^{ic}\left(  \mathbf{G}\right)
$, $s_{E}^{imc}\left(  \mathbf{G}\right)  $.

In case of a search rooted at node $x$, we will modify the notation as
follows: $s_{N}\left(  \mathbf{G};x\right)  $ is the minimum number of
searchers required to node-clear $\mathbf{G}$ by a search rooted at $x$;
similarly for $s_{E}\left(  \mathbf{G;}x\right)  $, $s_{E}^{i}\left(
\mathbf{G;}x\right)  $ and so on. If the graph $\mathbf{G}$ is a rooted tree
$\mathbf{T}_{x}$, then we will assume it is always searched by rooted
searches; hence $s_{E}\left(  \mathbf{T}_{x}\right)  =s_{E}\left(
\mathbf{T;}x\right)  $, $s_{E}^{i}\left(  \mathbf{T}_{x}\right)  =s_{E}%
^{i}\left(  \mathbf{T;}x\right)  $ and so on.

A \emph{minimal} node search schedule is one which clears $\mathbf{G}$ using
$s_{N}\left(  \mathbf{G}\right)  $ searchers; a \emph{minimal} IM\ node search
schedule is an IM\ schedule which clears $\mathbf{G}$ using $s_{N}^{im}\left(
\mathbf{G}\right)  $ searchers and similarly for all other types of either
node or edge search.

\section{Comparison of node and edge search}

\label{sec03}

In this section we compare \textquotedblleft unrestricted\textquotedblright%
\ edge and node search. In other words, we compare the outcome of a search
schedule $\mathbf{S}$ when it is used first in a node game and then in edge
game, both played on the same graph $\mathbf{G}$ and without requiring
rootedness, internality, monotonicity or connectedness.

Our main result is Theorem \ref{prp0303} which, informally, says the
following: every search schedule $\mathbf{S}$ gives at least as good results
in the node game as in the edge game. More precisely, at every step $t$ of the
search, the set of e-clear nodes (and edges) is a subset of the set of n-clear
nodes (and edges). An immediate corollary is that an edge clearing search
schedule is also node clearing; The converse does not hold, as we have already
seen by the example of Remark \ref{prp0201}.

To prove Theorem \ref{prp0303} we will need the following lemmas.

\begin{lemma}
\label{prp0301}Given a graph $\mathbf{G}$ and a search schedule $\mathbf{S}$,
for $t=0,1,2,...$ we have%
\begin{align}
u  &  \in V_{E}^{D}\left(  t\right)  \Rightarrow\left(  \forall ux\in E:ux\in
E_{E}^{D}\left(  t\right)  \right) \label{eq0301a}\\
uv  &  \in E_{E}^{C}\left(  t\right)  \Rightarrow\left(  u,v\in V_{E}%
^{C}\left(  t\right)  \right)  \label{eq0301b}%
\end{align}
In other words: if a node $u$ is e-dirty, then all edges $ux$, adjacent to
$u$, are also e-dirty; and if an edge $uv$ is e-clear, then nodes $u,v$ are e-clear.
\end{lemma}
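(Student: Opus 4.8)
The plan is to prove the edge statement \eqref{eq0301a} first and then obtain \eqref{eq0301b} as an immediate consequence. Indeed, \eqref{eq0301b} is essentially the contrapositive of \eqref{eq0301a} combined with the symmetry of an edge: if $uv\in E_{E}^{C}\left(t\right)$ but, say, $u\in V_{E}^{D}\left(t\right)$, then \eqref{eq0301a} applied to $u$ forces every edge incident to $u$ to be e-dirty, in particular $uv\in E_{E}^{D}\left(t\right)$, contradicting $uv\in E_{E}^{C}\left(t\right)$; the same argument applied to $v$ (using $vu=uv$) disposes of the other endpoint. Hence it suffices to establish \eqref{eq0301a}.

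The key observation is that, by the way the edge game evolves, the configuration recorded at each time $t$ is \emph{stable} under the recontamination rule \textbf{E3}: once the move at step $t$ is performed, \textbf{E3} is applied until no further e-clear edge can be dirtied, so at time $t$ there is \emph{no} e-unguarded path joining an e-clear edge to an e-dirty edge. I would take this stability property as the precise meaning of \textbf{E3} (it is exactly the negation of the precondition for \textbf{E3} to fire) and use it as the engine of the proof.

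Granting stability, I would prove \eqref{eq0301a} by a short direct argument. Suppose $u\in V_{E}^{D}\left(t\right)$. By \textbf{E4}, $u$ is unguarded at time $t$ and is incident to at least one e-dirty edge, say $uy\in E_{E}^{D}\left(t\right)$. Let $ux\in E$ be any edge incident to $u$ and suppose, for contradiction, that $ux\in E_{E}^{C}\left(t\right)$. Then $x\neq y$ (otherwise $ux=uy$ would be simultaneously e-clear and e-dirty), and since the graph has no loops we have $x\neq u\neq y$, so $x\,u\,y$ is a genuine path. Its only internal node is $u$, which is unguarded, so $x\,u\,y$ is e-unguarded; moreover it joins the e-clear edge $ux$ to the e-dirty edge $uy$. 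By \textbf{E3} this would dirty $ux$, contradicting stability. Hence every edge incident to $u$ is e-dirty, which is \eqref{eq0301a}.

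The one point requiring care --- and the main obstacle --- is justifying the stability property cleanly from the game rules, since \textbf{E3} can cascade (dirtying one edge may expose another e-clear edge to an e-unguarded path reaching a freshly dirtied edge). A fully rigorous treatment would define the time-$t$ configuration as the fixed point reached by iterating \textbf{E3} after each move, and would proceed by induction on $t$: the base case $t=0$ is trivial, since \textbf{E0} makes every edge e-dirty (so the conclusion of \eqref{eq0301a} holds for every $u$); in the inductive step one checks that applying a move of type \textbf{E1a}--\textbf{E1c} and then settling under \textbf{E3} leaves no e-clear edge adjacent, through an unguarded vertex, to an e-dirty edge. Once this stability invariant is in hand, the length-$2$-path argument above closes the proof.
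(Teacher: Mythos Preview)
Your proof is correct and follows essentially the same route as the paper: both hinge on the observation that if $u$ is e-dirty (hence unguarded and incident to some e-dirty edge $uy$), then any putatively e-clear edge $ux$ would be joined to $uy$ by the length-two e-unguarded path $xuy$, contradicting rule~\textbf{E3}. The paper splits into the degree-one and higher-degree cases and re-runs the path argument separately for \eqref{eq0301b}, whereas you handle \eqref{eq0301a} uniformly and derive \eqref{eq0301b} as its contrapositive; your extended discussion of ``stability'' makes explicit what the paper takes as implicit in the game rules, but the mathematical content is the same.
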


\begin{proof}
For (\ref{eq0301a}): since $u$ is e-dirty, it must be unguarded. We
distinguish two cases.

\noindent(1.i)\ Suppose $u$ is adjacent to a single edge, call it $uv$. If
$uv$ is e-clear then $u$ is not adjacent to any e-dirty edges and hence must
be e-clear, which is contrary to the original assumption. Hence $uv$ is e-dirty.

\noindent(1.ii)\ On the other hand, if $u$ is adjacent to more than one edges,
they cannot all be e-clear (then $u$ would also be e-clear); so at least one
edge, call it $uw$, is e-dirty. Suppose there is also an e-clear edge, call it
$uv$. But then the e-unguarded path $wuv$ connects e-dirty $wu$ to e-clear
$uv$, which leads to contradiction. Hence all edges adjacent to $u$ are e-dirty.

For (\ref{eq0301b}): suppose $u\in V_{E}^{D}\left(  t\right)  $. Then $u$ is
unguarded and there exists some $ux\in E_{E}^{D}\left(  t\right)  $ and $x\neq
v$. But the path $xuv$ is e-unguarded and so $uv\in E_{E}^{D}\left(  t\right)
$ which is a contradiction.
\end{proof}

\begin{lemma}
\label{prp0302}If a node $u$ is e-cleared exactly at time $t$, then $u$ is
entered by a searcher at $t$.
\end{lemma}

\begin{proof}
Suppose $u\in$ $V_{E}^{D}\left(  t-1\right)  \cap V_{E}^{C}\left(  t\right)
$. This can happen in two ways. (i)\ $u$ was unguarded and e-dirty at $t-1$
and then entered by a searcher at $t$. This is exactly the case described by
the theorem. (ii)\ The other way for $u$ to be cleared exactly at $t$ is if
$u$ was unguarded and e-dirty at $t-1$ and in addition a single edge, call it
$ux_{0}$, was e-dirty at $t-1$ and e-cleared at $t$ (while all other edges
$ux$ were e-clear at $t-1$ and remained e-clear at $t$). For this to happen,
the move at time $t$ is either $u\rightarrow x_{0}$ or $x_{0}\rightarrow u$.
But $u\rightarrow x_{0}$ is not possible, because then $u$ would be guarded
and e-clear at $t-1$. Hence the move at $t$ is $x_{0}\rightarrow u$ and $u$
was entered at $t$ (which shows that (ii)\ is a sub-case of (i)).
\end{proof}

Now we are ready to prove the following.

\begin{theorem}
\label{prp0303}Given a graph $\mathbf{G}$ and a search schedule $\mathbf{S}$,
we have
\begin{equation}
\text{for }t=0,1,2,...:\qquad V_{E}^{C}\left(  t\right)  \subseteq V_{N}%
^{C}\left(  t\right)  \quad\text{and}\quad E_{E}^{C}\left(  t\right)
\subseteq E_{N}^{C}\left(  t\right)  . \label{eq0312}%
\end{equation}

\end{theorem}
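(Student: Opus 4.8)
The plan is to prove both inclusions simultaneously by induction on $t$. The base case $t=0$ is immediate: by rules N0 and E0 all nodes and edges are dirty, so $V_E^C(0)=E_E^C(0)=V_N^C(0)=E_N^C(0)=\emptyset$ and both inclusions hold trivially. For the inductive step I assume $V_E^C(t-1)\subseteq V_N^C(t-1)$ and $E_E^C(t-1)\subseteq E_N^C(t-1)$ and try to establish the inclusions at time $t$. Since both games use the same schedule $\mathbf{S}$, the single move at time $t$ (a placement, removal, or slide) is the same in both games, so the intuition is that the edge game can never be ``ahead'' of the node game.

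The key observation driving the argument is that clearing is harder in the edge game: to e-clear an edge a searcher must actually slide along it (rule E2), whereas in the node game merely occupying a node n-clears it (rule N2), and an edge is n-clear as soon as both its endpoints are n-clear (rule N4, contrapositive). Conversely, recontamination is governed by n-unguarded paths (N3) versus e-unguarded paths (E3), and the implications displayed in Section~\ref{sec02} tell us that an n-unguarded path is automatically e-unguarded but not vice versa, so recontamination is \emph{easier} to trigger in the node game only through shorter paths---this is the delicate point. The cleanest route is to prove the node inclusion $V_E^C(t)\subseteq V_N^C(t)$ first, then derive the edge inclusion from it. For the node inclusion I take any $u\in V_E^C(t)$ and argue it must lie in $V_N^C(t)$: either $u$ was already e-clear (hence n-clear) at $t-1$ and stayed n-clear, or $u$ was freshly e-cleared at $t$, in which case Lemma~\ref{prp0302} says a searcher entered $u$ at time $t$, and an occupied node is automatically n-clear by N2. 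I must also rule out that a node n-clear at $t-1$ gets n-recontaminated at $t$ while its e-counterpart survives; here I would use that any n-unguarded path causing recontamination in the node game is also e-unguarded, so it would recontaminate in the edge game too, contradicting $u\in V_E^C(t)$.

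For the edge inclusion, suppose $uv\in E_E^C(t)$. By Lemma~\ref{prp0301} (specifically (\ref{eq0301b})), both endpoints $u,v\in V_E^C(t)$, and by the node inclusion just proved, $u,v\in V_N^C(t)$. Then rule N4 (an edge is n-dirty only if adjacent to an n-dirty node, so an edge between two n-clear nodes is n-clear) gives $uv\in E_N^C(t)$ immediately. This is why proving the node inclusion first pays off: the edge inclusion becomes a one-line consequence of Lemma~\ref{prp0301} combined with the structural rule N4.

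The main obstacle I anticipate is the recontamination bookkeeping in the node-inclusion step---carefully matching up rules N3 and E3 via the unguarded-path implications, and handling the subtlety flagged in Remark~\ref{prp0202} that an edge can be cleared and re-dirtied within a single move $t$. I would organize the inductive step by case-splitting on the type of move at $t$ and, within each case, separately track newly cleared items and potential recontamination, leaning on Lemma~\ref{prp0302} to pin down that fresh e-clearing forces actual searcher occupation. The edge-game recording convention of Remark~\ref{prp0202} means I should be careful that the sets $V_E^C(t),E_E^C(t)$ are read off only after the move completes, which keeps the induction consistent.
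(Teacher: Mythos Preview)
Your approach is essentially the paper's: the paper proves the complementary inclusions $V_N^D(t)\subseteq V_E^D(t)$ and $E_N^D(t)\subseteq E_E^D(t)$ by the same induction, the same use of Lemmas~\ref{prp0301} and~\ref{prp0302}, and derives the edge statement from the node statement exactly as you do (via Lemma~\ref{prp0301} and rule~N4).

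The one step your sketch leaves underspecified, and which the paper makes explicit, is in the recontamination case. Your line ``any n-unguarded path causing recontamination in the node game is also e-unguarded, so it would recontaminate in the edge game too'' does not close by itself: rule~E3 fires only from an e-dirty \emph{edge}, and knowing merely that the far endpoint $v$ of the path satisfies $v\in V_N^D(t)$ does not give you an e-dirty edge at $v$ at time $t$ (the inductive hypothesis is at $t-1$). The paper handles this by first observing that every node n-recontaminated at $t$ is connected by an n-unguarded path to some $v\in V_N^D(t-1)\cap V_N^D(t)$ (the recontamination must be seeded by a node already dirty before the move); then the inductive hypothesis gives $v\in V_E^D(t-1)$, Lemma~\ref{prp0302} in contrapositive gives $v\in V_E^D(t)$ (since $v\in V_N^D(t)$ means no searcher entered $v$), and Lemma~\ref{prp0301} produces the required e-dirty edge at $v$. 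With that edge in hand, your n-unguarded (hence e-unguarded) path back to $u$ fires E3 and forces $u\in V_E^D(t)$, completing the contradiction. So Lemma~\ref{prp0302} is needed twice---once for clearing, once for recontamination---and Lemma~\ref{prp0301} is needed in the node-inclusion step as well, not only in the edge-inclusion step.
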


\begin{proof}
We will actually show the relationships (equivalent \ to (\ref{eq0312}) )
\[
V_{N}^{D}\left(  t\right)  \subseteq V_{E}^{D}\left(  t\right)  ,E_{N}%
^{D}\left(  t\right)  \subseteq E_{E}^{D}\left(  t\right)  .
\]
The proof is by induction. At $s=0$ all nodes are dirty in both the edge game
and the node game, hence $V_{N}^{D}\left(  0\right)  =V_{E}^{D}\left(
0\right)  =V$. Suppose $V_{N}^{D}\left(  s\right)  \subseteq V_{E}^{D}\left(
s\right)  $ for $s=1,2,...,t$. Now we take $s=t+1$ and consider first the
cleared and then the recontaminated nodes.

\begin{enumerate}
\item[\textbf{I.}] \emph{Cleared Nodes.} By Lemma \ref{prp0302}, a node
becomes e-cleared \emph{exactly }at $t+1$ only if entered at $t+1$. Hence at
most one such node (call it $x_{0}$)\ exists and, since entered at $t+1$, it
is also n-clear at $t+1$. In other words: $x_{0}\in V_{E}^{D}\left(  t\right)
\cap V_{E}^{C}\left(  t+1\right)  \Rightarrow x_{0}\in V_{N}^{C}\left(
t+1\right)  $.

\item[\textbf{II.}] \emph{Recontaminated Nodes.} Denote by $A$ the set of
nodes recontaminated (in the node game) \emph{exactly} at $t+1$; i.e.,
$A=V_{N}^{C}\left(  t\right)  \cap V_{N}^{D}\left(  t+1\right)  $. By the
definition of n-dirty node, every $u\in A$ must have a path (n-unguarded at
$t+1$) to a node $v\in V_{N}^{D}\left(  t+1\right)  $. Furthermore, at least
one $u\in A$ must have a path (n-unguarded at $t+1$) to a node $v\in V_{N}%
^{D}\left(  t\right)  $ (if this did not hold for any $u\in A$, then none of
the nodes of $A$ would get recontaminated to begin with) and in fact this
shows \emph{every} $u\in A$ must have a path (n-unguarded at $t+1$) to a node
$v\in V_{N}^{D}\left(  t\right)  $. So take any $u\in A$, then there exists a
path $v_{1}v_{2}...v_{K}u$ satisfying:%
\begin{align}
v_{1}  &  \in V_{N}^{D}\left(  t\right)  \cap V_{N}^{D}\left(  t+1\right)
,\label{eq0431}\\
&  v_{1}v_{2}...v_{K}u\text{ is n-unguarded at }t+1. \label{eq0432}%
\end{align}
Now we will show that $v_{1}$ must belong to $V_{E}^{D}\left(  t\right)  \cap
V_{E}^{D}\left(  t+1\right)  $ as well. First, $v_{1}\in V_{N}^{D}\left(
t\right)  \subseteq V_{E}^{D}\left(  t\right)  $. Second, if $v_{1}\notin
V_{E}^{D}\left(  t+1\right)  $ then $v_{1}$ was cleared exactly at $t+1$,
which (by Lemma \ref{prp0302}) means a searcher entered $v_{1}$ at $t+1$; but
then $v_{1}\notin V_{N}^{D}\left(  t+1\right)  $ which is contrary to
(\ref{eq0431}). Hence $v_{1}\in V_{E}^{D}\left(  t\right)  \cap V_{E}%
^{D}\left(  t+1\right)  $. \ Since $v_{1}$ is \emph{e-dirty} at time $t+1$,
$v_{1}v_{2}$ is e-dirty at $t+1$ by Lemma \ref{prp0301}. Also $v_{1}%
v_{2}...v_{K}u$ is e-unguarded at $t+1$. Hence $u\in V_{E}^{D}\left(
t+1\right)  $. Since this is true for every $u\in A$, we conclude%
\begin{equation}
A\subseteq V_{E}^{D}\left(  t+1\right)  . \label{eq0407}%
\end{equation}

\end{enumerate}

If the move at $t+1$ was placing or sliding a searcher into node $x_{0}$, let
$B=\left\{  x_{0}\right\}  $; else let $B=\emptyset$. Since $V_{N}^{D}\left(
t\right)  \subseteq V_{E}^{D}\left(  t\right)  $ we have
\begin{equation}
V_{N}^{D}\left(  t\right)  -B\subseteq V_{E}^{D}\left(  t\right)  -B.
\end{equation}
Since the e-dirty nodes at $t+1$ include all the e-dirty nodes at $t$ except
the node $x_{0}$ (which was e-cleared at $t+1$), we have%
\begin{equation}
V_{E}^{D}\left(  t\right)  -B\subseteq V_{E}^{D}\left(  t+1\right)  .
\end{equation}
And so we have
\begin{equation}
V_{N}^{D}\left(  t\right)  -B\subseteq V_{E}^{D}\left(  t+1\right)  .
\label{eq0308}%
\end{equation}
The n-dirty nodes at $t+1$ are exactly the n-dirty nodes at $t$, excluding
nodes n-cleared at $t+1$, plus the nodes recontaminated at $t+1$. In other
words%
\begin{equation}
V_{N}^{D}\left(  t+1\right)  =\left(  V_{N}^{D}\left(  t\right)  -B\right)
\cup A \label{eq0309}%
\end{equation}
Combining eqs. (\ref{eq0407}), (\ref{eq0308}) and (\ref{eq0309}) we get%
\begin{equation}
V_{N}^{D}\left(  t+1\right)  =\left(  V_{N}^{D}\left(  t\right)  -B\right)
\cup A\subseteq V_{E}^{D}\left(  t+1\right)  .
\end{equation}
Hence by induction we get $V_{N}^{D}\left(  t\right)  \subseteq V_{E}%
^{D}\left(  t\right)  $ \ (and $V_{E}^{C}\left(  t\right)  \subseteq V_{N}%
^{C}\left(  t\right)  $)\ for $t=0,1,2,...$ .

Let us now consider the edge sets $E_{N}^{D}\left(  t\right)  $ and $E_{E}%
^{D}\left(  t\right)  $. Take any $t$ and any edge $uv\in E_{N}^{D}\left(
t\right)  $. Without loss of generality we can assume $u\in V_{N}^{D}\left(
t\right)  \subseteq V_{E}^{D}\left(  t\right)  $ and hence, by Lemma
\ref{prp0301}, $uv\in E_{E}^{D}\left(  t\right)  $. Hence, for $t=0,1,2,...$,
we have
\[
E_{N}^{D}\left(  t\right)  \subseteq E_{E}^{D}\left(  t\right)  \text{\quad
and\quad}E_{E}^{C}\left(  t\right)  \subseteq E_{N}^{C}\left(  t\right)  ,
\]
which completes the proof.
\end{proof}

\begin{corollary}
\label{prp0304}Given a graph $\mathbf{G}$ and an internal search schedule
$\mathbf{S}$, if $\mathbf{S}$ is edge clearing it is also node clearing.
\end{corollary}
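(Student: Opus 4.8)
The plan is to read off the corollary as an immediate consequence of Theorem \ref{prp0303} evaluated at the final step of the schedule. Recall that $\mathbf{S}$ being \emph{edge clearing} means precisely that, when $\mathbf{S}$ is executed in the edge game, all edges are e-clear at the terminal time $t_{fin}$, i.e. $E_{E}^{C}\left(t_{fin}\right)=E$. I would first invoke rule E5 (equivalently, the contrapositive of Lemma \ref{prp0301}): once every edge is e-clear there are no e-dirty edges adjacent to any node, so by rule E4 every node is e-clear as well, giving $V_{E}^{C}\left(t_{fin}\right)=V$.

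Next I would apply Theorem \ref{prp0303} at $t=t_{fin}$, which asserts $V_{E}^{C}\left(t\right)\subseteq V_{N}^{C}\left(t\right)$ for every $t$. Combined with the previous step this yields the chain $V=V_{E}^{C}\left(t_{fin}\right)\subseteq V_{N}^{C}\left(t_{fin}\right)\subseteq V$, forcing $V_{N}^{C}\left(t_{fin}\right)=V$. Hence when the same schedule $\mathbf{S}$ is executed in the node game, all nodes are n-clear at the terminal time; by rule N5 the node game is then concluded and $\mathbf{S}$ is node clearing, which is exactly the claim.

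I do not expect any genuine obstacle here, since the substantive work has already been carried out in Theorem \ref{prp0303}. The only point requiring a moment of care is the passage from \textquotedblleft all edges e-clear\textquotedblright\ to \textquotedblleft all nodes e-clear\textquotedblright, so that the \emph{node} half of the theorem (rather than only its edge half) can be applied; this step is handled cleanly by rule E5. I would also remark that internality of $\mathbf{S}$ is not in fact needed for the argument, since Theorem \ref{prp0303} is proved for arbitrary schedules; the hypothesis is retained only to match the restricted setting of interest, and the proof goes through verbatim without it.
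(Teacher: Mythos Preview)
Your proof is correct and essentially the same as the paper's: both read off the corollary from Theorem~\ref{prp0303} at $t=t_{fin}$, with a one-line passage between ``all edges clear'' and ``all nodes clear''. The only cosmetic difference is that you use E5 first and then the node half of (\ref{eq0312}), whereas the paper uses the edge half (in the form $E_{N}^{D}\subseteq E_{E}^{D}$) first and then N4; your remark that internality is not actually needed is also correct.
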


\begin{proof}
Suppose the final move of $\mathbf{S}$ took place at time $t_{fin}$, then
$E_{E}^{D}\left(  t_{fin}\right)  =\emptyset$. Since $E_{N}^{D}\left(
t_{fin}\right)  \subseteq E_{E}^{D}\left(  t_{fin}\right)  =\emptyset$ we have
$E_{N}^{D}\left(  t_{fin}\right)  =\emptyset$ which also means $V_{N}%
^{D}\left(  t_{fin}\right)  =\emptyset$ (if a node $u$ was n-dirty, an
adjacent edge $uv$ would also be n-dirty).
\end{proof}

\begin{remark}
\label{prp0305}From Theorem \ref{prp0303} we see that edge search is
\textquotedblleft weaker\textquotedblright\ than node search, i.e., for every
search schedule\textbf{,} and at every time step, the clear set in the edge
game is smaller or equal than the one in the node game. There is a variant of
edge search, the so-called \emph{mixed edge search }which, as we will see in
Appendix \ref{secB}, is \emph{equivalent} to node search, i.e. every search
schedule produces the same clear and dirty sets at every step of both the edge
and node game (this is stated and proved as Theorem \ref{prpA01}). From this
fact follow two consequences. First, \emph{node search is NP-complete} (since
mixed edge search is NP-complete \cite{Bienstock}). Second, Theorem
\ref{prp0303} can be obtained as a corollary of Theorem \ref{prpA01}. A more
detailed discussion of these issues appears in Appendix \ref{secB}.
\end{remark}

\section{Search on Trees}

\label{sec04}

In this section we will focus on \emph{rooted IMC\ (internal, monotone,
connected)\ search on trees} and we will compare node and edge search in this context.

Several authors have studied rooted IMC\ \emph{edge} search \cite{Barriere1,
Barriere2, Fomin, Fraigniaud1, Fraigniaud2, Yang2}. Our main results in this
section are Theorems \ref{prp0401} and \ref{prp0402} which establish the
equivalence of rooted IMC\ node and edge search on trees. Hence several
results already established for edge search also hold for node search (on
trees) as will be seen in the sequel.

\begin{theorem}
\label{prp0401}Given a tree $\mathbf{T}$ and an internal \emph{rooted }search
schedule $\mathbf{S}$, if $\mathbf{S}$ is monotone connected \emph{in the node
game} then
\[
\text{for }t=0,1,2,...:\qquad V_{E}^{C}\left(  t\right)  =V_{N}^{C}\left(
t\right)  \ \text{ and }\ E_{E}^{C}\left(  t\right)  =E_{N}^{C}\left(
t\right)
\]
and $\mathbf{S}$ is monotone connected in the edge game.
\end{theorem}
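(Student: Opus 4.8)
The plan is to reduce everything to the two \emph{reverse} inclusions $V_N^C(t)\subseteq V_E^C(t)$ and $E_N^C(t)\subseteq E_E^C(t)$. Theorem \ref{prp0303} already supplies the opposite inclusions $V_E^C(t)\subseteq V_N^C(t)$ and $E_E^C(t)\subseteq E_N^C(t)$ for \emph{any} schedule on \emph{any} graph, so once the reverse inclusions are in hand the claimed equalities follow at once. Edge-game monotonicity and connectedness are then free: since $\mathbf{S}$ is node-monotone the sets $V_N^C(t)$ are non-decreasing, hence so are the internal-edge sets, hence so are the (equal) sets $E_E^C(t)$; and $\mathbf{G}_E^C(t)=\mathbf{G}_N^C(t)$ inherits connectedness from the node game. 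Thus the entire theorem rests on one point: on a tree, under node-monotone-connected search, the edge game clears \emph{exactly} as much as the node game, never less.

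I would prove this by induction on $t$, carrying the structural invariant $V_N^C(t)=V_E^C(t)=:W_t$, where $W_t$ is a connected subtree containing the root, together with $E_N^C(t)=E_E^C(t)=\{$edges of $\mathbf{T}$ with both endpoints in $W_t\}$ (the \emph{internal} edges of $W_t$). Node-monotonicity forces $W_t\subseteq W_{t+1}$, and node-connectedness makes $W_t$ a subtree; because $\mathbf{T}$ has no cycles, a newly cleared node $v$ has a \emph{unique} neighbour $u\in W_t$, so a clearing slide $u\to v$ adds exactly $v$ to $W_{t+1}$ and exactly the edge $uv$ to the internal-edge set. The only delicate move is such a clearing slide; placing a searcher at the root and sliding between two already-clear nodes either leave the clear sets unchanged or only enlarge the guarded set, and are dispatched directly.

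The engine of the induction is a boundary lemma: at every time, each \emph{boundary} node of $W_t$ (a node of $W_t$ adjacent to a node outside $W_t$) must be guarded. This follows purely from consistency of the node game: such a boundary node $p$ is n-clear yet adjacent to an n-dirty node $q$, and since n-dirty nodes are unguarded, if $p$ were unguarded the path $pq$ would be n-unguarded, so rule \textbf{N3} would recontaminate $p$, contradicting $p\in V_N^C(t)$. Hence the clear subtree is ``sealed'' by a ring of guarded boundary nodes. For the slide $u\to v$, the move e-clears edge $uv$ by rule \textbf{E2} and guards $v$, so (using Theorem \ref{prp0303}) it suffices to show that no node of $W_{t+1}$ and no internal edge of $W_{t+1}$ becomes e-recontaminated.

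The step I expect to be the main obstacle is precisely this no-recontamination argument, because a guard blocks an e-unguarded path only when it occupies an \emph{interior} vertex of the path, not an endpoint. I would resolve it using convexity of the clear subtree. If an internal edge $ab$ (with $a,b\in W_{t+1}$) were recontaminated, rule \textbf{E3} would furnish an e-unguarded path $a,b,\dots,x,y$ connecting $ab$ to an e-dirty edge; tracing the chain of e-dirty edges back to its source, the genuinely dirty region $V\setminus W_{t+1}$ (whose nodes are e-dirty by Theorem \ref{prp0303}), the path must leave $W_{t+1}$. Since $W_{t+1}$ is a subtree it is convex in $\mathbf{T}$, so the path exits through a single boundary node $p$; and because the path begins at the interior endpoint $b\in W_{t+1}$, this $p$ is a \emph{strictly interior} vertex of the path. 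By the boundary lemma $p$ is guarded, contradicting e-unguardedness. Finally Lemma \ref{prp0301} upgrades the node conclusion to edges (an e-dirty node forces its incident edges e-dirty, while an interior clear node has only internal, hence e-clear, incident edges), yielding $V_E^C(t+1)=W_{t+1}$ and $E_E^C(t+1)=\{$internal edges of $W_{t+1}\}$, which closes the induction.
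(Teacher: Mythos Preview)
Your argument is correct and, although inductive like the paper's, it is organized differently. The paper does not invoke Theorem~\ref{prp0303} or any frontier lemma; instead it proves directly, by a two-case analysis on the path structure, that the departed node $u$ cannot become e-dirty after the slide $u\to v$ (each case is reduced to a contradiction with node-monotonicity), and then infers that no edge is e-recontaminated because $u$ is the only newly unguarded node. Your route is more geometric: you use Theorem~\ref{prp0303} to halve the work, isolate the guarded boundary of the clear subtree via the frontier observation (this is exactly Lemma~\ref{prp0505}, which the paper states later), and then block every recontamination path by convexity of subtrees. Both arguments exploit the tree hypothesis at the same place---acyclicity forces a unique crossing of the boundary---but you package this as a reusable structural lemma while the paper unrolls it inside the case analysis.

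One small point: your claim that slides between two already-clear nodes ``leave the clear sets unchanged'' is not entirely automatic, since the source $u$ may become unguarded. However, your convexity/boundary argument applies verbatim here as well (with $W_{t+1}=W_t$, whose frontier is still guarded by node-monotonicity), so this is only a matter of saying so explicitly rather than a gap.
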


\begin{proof}
The proof is inductive. We have $V_{E}^{C}\left(  0\right)  =V_{N}^{C}\left(
0\right)  =\emptyset$ and $E_{E}^{C}\left(  0\right)  =E_{N}^{C}\left(
0\right)  =\emptyset$. Also, since the search is rooted and internal, the move
at $t=1$ will be the placement of a searcher into the root $u_{0}$; hence
$V_{E}^{C}\left(  1\right)  =V_{N}^{C}\left(  1\right)  =\left\{
u_{0}\right\}  $ and $E_{E}^{C}\left(  1\right)  =E_{N}^{C}\left(  1\right)
=\emptyset$. Now suppose that at time $t\geq1$ we have
\begin{equation}
V_{E}^{C}\left(  t\right)  =V_{N}^{C}\left(  t\right)  \text{,\qquad}E_{E}%
^{C}\left(  t\right)  =E_{N}^{C}\left(  t\right)  \label{eq0401}%
\end{equation}
and let us consider the move $u\rightarrow v$, performed at $t+1$, $t>1$.

We first dispose of the case where the move involves the node 0. Because of
internality, $v\neq0$ (no searcher is removed from the graph). If the move is
$0\rightarrow v$, by rootedness we must have $v=u_{0}$. Also by rootedness,
$u_{0}$ is the first node cleared (at some $t=1$). Now, in the node game:
(a)\ by monotonicity, no node is recontaminated at $t+1$ and (b)\ no node is
n-cleared either (since $u_{0}$ \emph{remains }n-clear, the move $0\rightarrow
u_{0}$ implies $V_{N}^{C}\left(  t+1\right)  -V_{N}^{C}\left(  t\right)
=\emptyset$); hence $V_{N}^{C}\left(  t\right)  =V_{N}^{C}\left(  t+1\right)
$ which also means $E_{N}^{C}\left(  t\right)  =E_{N}^{C}\left(  t+1\right)
$. In the edge game, $0\rightarrow u_{0}$ (a)\ does not clear any new edges
(so it does not clear any new nodes either) and (b)\ does not remove any
searchers from any node, so neither edge \ nor node recontamination is
possible; hence $V_{E}^{C}\left(  t\right)  =V_{E}^{C}\left(  t+1\right)  $
and $E_{E}^{C}\left(  t\right)  =E_{E}^{C}\left(  t+1\right)  $. In short, if
the $\left(  t+1\right)  $-th move involves node 0, from (\ref{eq0401}%
)\ follows
\[
V_{E}^{C}\left(  t+1\right)  =V_{N}^{C}\left(  t+1\right)  \text{,\qquad}%
E_{E}^{C}\left(  t+1\right)  =E_{N}^{C}\left(  t+1\right)  .
\]

Next we examine the case where the $\left(  t+1\right)  $-th move is
$u\rightarrow v$, with $u,v\neq0$. We will examine the effects of this move
separately for the node and edge game. \noindent

\noindent\textbf{I.} \emph{Node Game}. $u$ was guarded at time $t$, so $u\in
V_{N}^{C}\left(  t\right)  $. Suppose, first, that at $t$ only one searcher
was located on $u,$ so $u$ is unguarded at $t+1$. However, since $\mathbf{S}$
was supposed monotone in the node game, there is no node $x\in$ $V_{N}%
^{C}\left(  t\right)  \cap$ $V_{N}^{D}\left(  t+1\right)  $ and, especially,
$u$ remains n-clear. Since no node has become n-dirty at $t+1$, no edge has
become n-dirty either (an edge becomes n-dirty iff incident on an n-dirty
node). As for clearings, $v$ is n-clear at $t+1$, since entered by a searcher;
and $uv$ is also n-clear since $\left\{  u,v\right\}  \subseteq V_{N}%
^{C}\left(  t+1\right)  $. No node other than $v$ was entered at $t+1$, so no
node other than $v$ was n-cleared. Is there an edge (other than $uv$)\ which
was n-cleared exactly at $t+1$? This is only possible if (a)\ the edge has the
form $vx_{1}$ and (b)$\ v\in V_{N}^{D}\left(  t\right)  \cap V_{N}^{C}\left(
t+1\right)  $ ($v$ became n-clear \emph{exactly} at $t+1$) and (c)\ $x_{1}\in
V_{N}^{C}\left(  t\right)  \cap V_{N}^{C}\left(  t+1\right)  $. Suppose
$x_{1}\neq u$. Since $\mathbf{G}_{N}^{C}\left(  t\right)  $ is connected,
there is a path $ux_{L-1}...x_{1}$ which is \emph{inside }$\mathbf{G}_{N}%
^{C}\left(  t\right)  $ and hence \emph{does not contain }$v$. Then the tree
$\mathbf{T}$ contains a cycle $ux_{L-1}...x_{1}vu$ which is a contradiction.
Hence the only edge \emph{possibly} n-cleared at $t+1$ (if not already n-clear
at $t$) is $uv$. In short%
\begin{equation}
V_{N}^{C}\left(  t+1\right)  =V_{N}^{C}\left(  t\right)  \cup\left\{
v\right\}  \text{ and }E_{N}^{C}\left(  t+1\right)  =E_{N}^{C}\left(
t\right)  \cup\left\{  uv\right\}  . \label{eq0402}%
\end{equation}
The case that $u$ is still guarded at $t+1$ (i.e. $u$ contained more than one
searcher at time $t$) is omitted, since it is similar but easier to treat than
the case of unguarded $u$. \noindent

\noindent\textbf{II. }\emph{Edge Game}. Again, we first examine the case that
only one searcher was located in $u$ at time $t$. So $u$ was guarded (and
e-clear) at $t$ and is unguarded at $t+1$. We will now show that $u$ remains
e-clear at $t+1$; in other words that
\begin{equation}
u\in V_{E}^{C}\left(  t\right)  \cap V_{E}^{C}\left(  t+1\right)  .
\label{eq0441}%
\end{equation}
Suppose, on the contrary, that we have%
\begin{equation}
u\in V_{E}^{C}\left(  t\right)  \cap V_{E}^{D}\left(  t+1\right)  ;
\label{eq0443}%
\end{equation}
There are only two ways for (\ref{eq0443})\ to hold, each of which we examine separately.

\begin{enumerate}
\item[\textbf{II.1}] $\exists x_{1}\neq v$ such that $ux_{1}\in E_{E}%
^{D}\left(  t\right)  \cap E_{E}^{D}\left(  t+1\right)  $. Then $ux_{1}\in
E_{E}^{D}\left(  t\right)  =E_{N}^{D}\left(  t\right)  $. Since the move at
$t+1$ was $u\rightarrow v$, $u$ was guarded at $t$ and $x_{1}\in V_{N}%
^{D}\left(  t\right)  $. Now, from $u\rightarrow v$ and $x_{1}\neq v$ we get
$x_{1}\in V_{N}^{D}\left(  t+1\right)  $; from this and $u$ unguarded at $t+1$
we conclude that $u\in V_{N}^{D}\left(  t+1\right)  $ which contradicts
monotonicity of the node search.

\item[\textbf{II.2}] Alternatively, there exists a path $ux_{1}...x_{K-1}%
x_{K}$ (with $K>1$) such that $x_{K}x_{K-1}\in E_{E}^{D}\left(  t\right)  \cap
E_{E}^{D}\left(  t+1\right)  $ and $ux_{1}...x_{K-1}x_{K}$ is e-unguarded at
$t+1$. First note that, since $ux_{1}...x_{K}$ is e-unguarded at $t+1$ and the
$\left(  t+1\right)  $-th move is $u\rightarrow v$, it follows that
$v\notin\left\{  x_{1},...,x_{K-1}\right\}  $; and since $\mathbf{T}$ is a
tree, $v\neq x_{K}$ too; in short $v\notin\left\{  x_{1},...,x_{K}\right\}  $.
Also $u\notin\left\{  x_{1},...,x_{K}\right\}  $, since $ux_{1}...x_{K}$ is a
path. Now%
\[
\left.
\begin{array}
[c]{l}%
\text{the }\left(  t+1\right)  \text{-th move is }u\rightarrow v\\
u\notin\left\{  x_{1},...,x_{K}\right\} \\
ux_{1}...x_{K}\text{ is e-unguarded at }t+1
\end{array}
\right\}  \Rightarrow ux_{1}...x_{K}\text{ is e-unguarded at }t
\]
and this together with $x_{K}x_{K-1}\in E_{E}^{D}\left(  t\right)  $ implies
$x_{K-1}\in V_{E}^{D}\left(  t\right)  =V_{N}^{D}\left(  t\right)  $. From
$v\neq x_{K-1}$ follows that $x_{K-1}$ was not entered at $t+1$ and so
$x_{K-1}\in V_{N}^{D}\left(  t+1\right)  $. From this and $ux_{1}...x_{K-1}$
being n-unguarded at $t+1$ we conclude that $u\in V_{N}^{D}\left(  t+1\right)
$ which contradicts monotonicity of the node search.
\end{enumerate}

In short, $u$ cannot become e-dirty at $t+1$, i.e., we have proved
(\ref{eq0441}). Using this we now will show that no previously e-clear edge
$xy$ can become e-dirty at $t+1$. Because this would require the existence of
some edge $pq\in E_{E}^{D}\left(  t\right)  \cap E_{E}^{D}\left(  t+1\right)
$ and a path $xy...pq$ which is e-unguarded at $t+1$ but e-guarded at $t$
(otherwise $xy$ would already be e-dirty at $t)$. But any such path would have
to contain $u$ (no searcher was removed from any other node at $t+1$) which
would in turn imply $u\in V_{E}^{D}\left(  t+1\right)  $, which contradicts
(\ref{eq0441}). Since no edge is recontaminated no node is recontaminated
either (the only node guarded at $t$ and unguarded at $t+1$ is $u$, for which
we have (\ref{eq0441}) ).

On the other hand, the only edge \emph{possibly }e-cleared at time $t+1$ (if
not already e-clear at $t$) is $uv$, since no other edge was traversed at
$t+1$. Hence the only node \emph{possibly }e-cleared at time $t+1$ (if not
already e-clear at $t$) \ is $v$.

In short \ we have shown
\begin{equation}
V_{E}^{C}\left(  t+1\right)  =V_{E}^{C}\left(  t\right)  \cup\left\{
v\right\}  \text{ and }E_{E}^{C}\left(  t+1\right)  =E_{E}^{C}\left(
t\right)  \cup\left\{  uv\right\}  . \label{eq0403}%
\end{equation}
The treatment of the case that $u$ is still guarded at $t+1$ (i.e., at $t$ it
contained more than one searcher) is omitted, since it is similar but easier
to the case of $u$ unguarded. \medskip From \noindent(\ref{eq0401}),
(\ref{eq0402}) and (\ref{eq0403}) we obtain
\begin{equation}
V_{E}^{C}\left(  t+1\right)  =V_{N}^{C}\left(  t+1\right)  \text{ and }%
E_{E}^{C}\left(  t+1\right)  =E_{N}^{C}\left(  t+1\right)  \label{eq0404}%
\end{equation}
and, proceeding inductively, we obtain the required result:
\begin{equation}
\text{for }t=0,1,2,...:\qquad V_{E}^{C}\left(  t\right)  =V_{N}^{C}\left(
t\right)  \ \text{ and }\ E_{E}^{C}\left(  t\right)  =E_{N}^{C}\left(
t\right)  \label{eq0405a}%
\end{equation}
Monotonicity and connectedness in the edge game follow from the fact that
these hold in the node game and from eq.(\ref{eq0405a}).
\end{proof}

\begin{theorem}
\label{prp0402}Given a tree $\mathbf{T}$ and an internal \emph{rooted }search
schedule $\mathbf{S}$ with root $u_{0}$, if $\mathbf{S}$ is monotone connected
\emph{in the edge game} and satisfies either of the following constraints

\begin{enumerate}
\item[\textbf{C1}] $E_{E}^{C}\left(  1\right)  =\emptyset$, $E_{E}^{C}\left(
2\right)  =\left\{  u_{0}v\right\}  ,$

\item[\textbf{C2}] $E_{E}^{C}\left(  1\right)  =\emptyset$, $E_{E}^{C}\left(
2\right)  =\emptyset$, $E_{E}^{C}\left(  3\right)  =\left\{  u_{0}v\right\}  $,
\end{enumerate}

then%
\begin{equation}
\text{for }t=0,1,2,...:\qquad V_{E}^{C}\left(  t\right)  =V_{N}^{C}\left(
t\right)  \ \text{ and }\ E_{E}^{C}\left(  t\right)  =E_{N}^{C}\left(
t\right)  \label{eq0401b}%
\end{equation}
and $\mathbf{S}$ is monotone connected in the node game.
\end{theorem}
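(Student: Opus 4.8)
The plan is to run a single induction on $t$ that simultaneously establishes the two equalities in (\ref{eq0401b}) together with monotonicity and connectedness in the node game; this is the exact converse of Theorem \ref{prp0401}, so I expect the bookkeeping to mirror that proof with the roles of the two games exchanged. One direction of each equality is free: Theorem \ref{prp0303} already gives $V_E^C(t)\subseteq V_N^C(t)$ and $E_E^C(t)\subseteq E_N^C(t)$ for every $t$, so throughout I only have to produce the reverse inclusions. Note also that once the equalities are known for all $t$, connectedness in the node game is immediate (the $n$-clear graph equals the $e$-clear graph, which is connected by hypothesis), and monotonicity will drop out of the explicit step-by-step description of how the clear sets grow.

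For the base case I would use C1 or C2 to pin down the opening. Because the search is rooted and internal, every move before the first edge is cleared is a placement $0\to u_0$, so $E^C=\emptyset$ and the $n$- and $e$-clear node sets agree (both equal $\{u_0\}$) up to that point. Under C1 the first slide $u_0\to v$ occurs at $t=2$; since $E_E^C(2)=\{u_0v\}$ forces (via Lemma \ref{prp0301}) $u_0\in V_E^C(2)$ while $u_0$ is now unguarded, $u_0$ must be a leaf, and then both games reach $V^C=\{u_0,v\}$, $E^C=\{u_0v\}$. Under C2 the extra step $E_E^C(2)=\emptyset$ records a second placement at $u_0$, so that after the slide at $t=3$ a guard remains at $u_0$ and again both games reach exactly $V^C=\{u_0,v\}$, $E^C=\{u_0v\}$. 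In either case the induction starts from a time $t^{\ast}\in\{2,3\}$ at which the equalities hold and $(V^C,E^C)$ is a connected subtree with at least two nodes.

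For the inductive step, assume $V_E^C(t)=V_N^C(t)=V^C$, $E_E^C(t)=E_N^C(t)=E^C$ with $(V^C,E^C)$ a connected subtree on $\geq 2$ nodes. The move $0\to u_0$ is disposed of exactly as in Theorem \ref{prp0401} (it changes nothing in either game). For a slide $u\to v$ with $u,v\neq0$, the only interesting case is when $u$ becomes unguarded. Here I first analyze the edge game using the edge-monotonicity hypothesis: by monotonicity every edge of $E^C$ survives, and since $(V^C,E^C)$ is connected on $\geq 2$ nodes, every node of $V^C$ is incident to a surviving clear edge, so Lemma \ref{prp0301} (an $e$-clear edge has $e$-clear endpoints) forces $V^C\subseteq V_E^C(t+1)$; in particular $u$ stays $e$-clear, and hence by rule E4 all of its edges are $e$-clear and $uv$ cannot be re-dirtied. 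Adding the newly guarded $v$ and noting that only the traversed edge $uv$ can be newly cleared, I would conclude
\[
V_E^C(t+1)=V^C\cup\{v\},\qquad E_E^C(t+1)=E^C\cup\{uv\}.
\]
Transferring to the node game is then short: Theorem \ref{prp0303} gives $V^C\cup\{v\}=V_E^C(t+1)\subseteq V_N^C(t+1)$ and $E^C\cup\{uv\}\subseteq E_N^C(t+1)$, while in the node game only a node entered by a searcher can be newly cleared, so $V_N^C(t+1)\subseteq V^C\cup\{v\}$ and hence $V_N^C(t+1)=V^C\cup\{v\}$. For the edges I would use that $\mathbf{T}$ is a tree: $v$ can have only one neighbor inside $V^C\cup\{v\}$ (a second one would close a cycle with the connected subtree $V^C$), so the edges with both endpoints $n$-clear are exactly $E^C\cup\{uv\}$, giving $E_N^C(t+1)=E^C\cup\{uv\}$. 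This reproduces the two equalities at $t+1$, shows that nothing was recontaminated (node-monotonicity), and keeps the clear graph a connected subtree (node-connectedness), closing the induction.

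The main obstacle is the ruling out of node recontamination, i.e.\ establishing $V^C\subseteq V_E^C(t+1)$, and the real work is packed into that one line. The danger is precisely the phenomenon of Remark \ref{prp0202}: when $u$ loses its guard, the freshly traversed edge $uv$ could be cleared and re-dirtied within the same step, and $u$ could then become dirty and trigger a cascade. The argument that saves the day is that any such collapse would re-dirty a clear edge incident to $u$ (which exists by connectivity), contradicting edge-monotonicity --- in other words, edge-monotonicity together with Lemma \ref{prp0301} forces the entire boundary of the clear subtree to stay guarded. I would also be careful to note that C1/C2 are genuinely needed here: they are exactly what excludes a degenerate edge-game opening (a lone searcher slid off a non-leaf root) that would be edge-monotone yet fail to keep $u_0$ clear, and thus fail the equalities at the very first edge clearing.
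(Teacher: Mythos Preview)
Your argument is correct, and the overall shape (induction on $t$, dispose of placements, then analyse a slide $u\to v$ separately in the two games) matches the paper's. Where you diverge is in the handling of the node game, and the divergence is a genuine shortcut. After computing $V_E^C(t+1)=V^C\cup\{v\}$ and $E_E^C(t+1)=E^C\cup\{uv\}$, the paper does a direct node-game analysis: it supposes $u\in V_N^D(t+1)$ and runs a two-case argument (either $u$ has an $n$-dirty neighbour $x_1\neq v$, or there is a longer $n$-unguarded path from $u$ to an $n$-dirty node), in each case exhibiting an edge incident to $u$ that would have to be $e$-recontaminated, contradicting edge-monotonicity. You bypass this entirely by invoking Theorem \ref{prp0303}: the inclusion $V_E^C(t+1)\subseteq V_N^C(t+1)$ already gives $V^C\cup\{v\}\subseteq V_N^C(t+1)$, and the reverse inclusion is trivial since only the entered node $v$ can be newly $n$-cleared. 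What your route buys is brevity and conceptual clarity --- the paper's case analysis is in effect re-deriving an instance of Theorem \ref{prp0303} by hand --- at the cost of a black-box dependence on that earlier result, whereas the paper's proof of Theorem \ref{prp0402} is self-contained.

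One small inaccuracy in your base case: under C2 the condition $E_E^C(2)=\emptyset$ does \emph{not} force the second move to be a placement. A slide $u_0\to v$ with $u_0$ not a leaf also yields $E_E^C(2)=\emptyset$ (the traversed edge is re-dirtied in the same step, cf.\ Remark \ref{prp0202}), and a subsequent slide $v\to u_0$ with $v$ a leaf then gives $E_E^C(3)=\{u_0v\}$. The paper sidesteps this by simply asserting the base-case equalities are ``easy to check''; your equalities do survive this variant, so the induction still launches, but your description of the opening moves under C2 is too restrictive.
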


\begin{remark}
\label{prp0403}Before proving the theorem, let us discuss the significance of
constraints \textbf{C1} and \textbf{C2}. Three remarks must be made

\begin{enumerate}
\item Regarding the kind of searches which will satisfy either of the
constraints:\ \textbf{C1} will be satisfied by searches with $\mathbf{S}%
\left(  1\right)  =0\rightarrow u_{0}$, $\mathbf{S}\left(  2\right)
=u_{0}\rightarrow v$ (provided $u_{0}$ has a single neighbor, namely $v$);
\textbf{C2} will be satisfied by searches with $\mathbf{S}\left(  1\right)
=0\rightarrow u_{0}$, $\mathbf{S}\left(  2\right)  =0\rightarrow u_{0}$,
$\mathbf{S}\left(  3\right)  =u_{0}\rightarrow v$; in both cases $u_{0}$ is,
obviously, the root of the search.

\item The constraints are imposed to exclude situations similar to the one
discussed in Remark \ref{prp0202}, where an edge is e-cleared and e-dirtied
during the \emph{same }time step $t$.

\item Finally, \textbf{C1} and \textbf{C2} are not exceedingly restrictive; as
will be seen later, every \textquotedblleft interesting\textquotedblright%
\ IMC\ search on a tree either satisfies \textbf{C1} / \textbf{C2} or can
easily be converted to an equivalent search which does.
\end{enumerate}
\end{remark}

\begin{proof}
[Proof of Theorem \ref{prp0402}]The theorem will be proved by induction. It is
easy to check that
\begin{equation}
\text{for }t\in\left[  0,t_{1}\right]  :\qquad V_{E}^{C}\left(  t\right)
=V_{N}^{C}\left(  t\right)  \text{ and }E_{E}^{C}\left(  t\right)  =E_{N}%
^{C}\left(  t\right)
\end{equation}
with $t_{1}=2$ when \textbf{C1 }holds and $t_{1}=3$ when \textbf{C2} holds.
Now suppose that at time $t\geq t_{1}$ we have
\begin{equation}
V_{E}^{C}\left(  t\right)  =V_{N}^{C}\left(  t\right)  \text{,\qquad}E_{E}%
^{C}\left(  t\right)  =E_{N}^{C}\left(  t\right)  \label{eq0401a}%
\end{equation}
and consider the move performed at $t+1$.

We first dispose of the case where the move involves the node 0. Because of
internality, $v\neq0$ (no searcher is removed from the graph). If the move is
$0\rightarrow v$, by rootedness we must have $v=u_{0}$. By \textbf{C1} and
\textbf{C2}, $u_{0}$ is the first node cleared (at $t=1$). In the edge game,
at $t+1$, no edge is recontaminated (by monotonicity)\ and no edge is
e-cleared (since no edge is traversed); hence $E_{E}^{C}\left(  t\right)
=E_{E}^{C}\left(  t+1\right)  $; since no searcher is removed from a node,
$V_{E}^{C}\left(  t\right)  =V_{E}^{C}\left(  t+1\right)  $ as well. In the
node game, $u_{0}$ is already n-clear at $t_{1}$ and no other node is entered
at $t+1$, so no node is n-cleared; no searcher is removed from the graph, so
no node is recontaminated; hence $V_{N}^{C}\left(  t\right)  =V_{N}^{C}\left(
t+1\right)  $ which implies $E_{N}^{C}\left(  t\right)  =E_{N}^{C}\left(
t+1\right)  $ as well. In short, if the $\left(  t+1\right)  $-th move
involves node 0, from (\ref{eq0401a})\ and the above arguments follows
\[
V_{E}^{C}\left(  t+1\right)  =V_{N}^{C}\left(  t+1\right)  \text{,\qquad}%
E_{E}^{C}\left(  t+1\right)  =E_{N}^{C}\left(  t+1\right)  .
\]

Next we examine the case where the $\left(  t+1\right)  $-th move is
$u\rightarrow v$, with $u,v\neq0$. We will examine the effects of this move
separately for the node and edge game. \noindent

\noindent\textbf{I.} \emph{Edge Game}. By monotonicity of the search, no
previously e-clear edge becomes e-dirty at $t+1$; and hence the only node that
can \emph{possibly} become e-dirty is $u$. We will show however that $u$ also
remains e-clear. Since $E_{E}^{C}\left(  t_{1}\right)  =\left\{
u_{0}v\right\}  $ and $t\geq t_{1}$, from edge monotonicity follows that
$E_{E}^{C}\left(  t\right)  \neq\emptyset$ for all $t\geq t_{1}$. And, since
$u\in V_{N}^{C}\left(  t\right)  =V_{E}^{C}\left(  t\right)  $ and
$\mathbf{G}_{E}^{C}\left(  t\right)  $ is connected, there exists $uz\in
E_{E}^{C}\left(  t\right)  $; if $u$ becomes e-dirty at $t+1$, then $uz\in
E_{E}^{D}\left(  t+1\right)  $, which contradicts monotonicity of the edge
search. Hence no previously e-clear node becomes e-dirty at $t+1$.

Regarding new clearings, edge $uv$ and node $v$ \emph{may} become e-clear at
$t+1$ (if they were not already e-clear at $t$). No edge other than $uv$ was
traversed, hence no edge other than $uv$ (and no node other than $v$) can be
e-cleared at $t+1$. \ 

In short, in the edge game we have%
\begin{equation}
V_{E}^{C}\left(  t+1\right)  =V_{E}^{C}\left(  t\right)  \cup\left\{
v\right\}  \text{ and }E_{E}^{C}\left(  t+1\right)  =E_{E}^{C}\left(
t\right)  \cup\left\{  uv\right\}  . \label{eq0406}%
\end{equation}

\noindent\textbf{II.}\ \emph{Node Game}. Again, we first examine the case that
only one searcher was located on $u$ at time $t$. So $u$ was guarded (hence
n-clear) at $t$ and is unguarded at $t+1$. We will now show that $u$ remains
n-clear at $t+1$; in other words that
\begin{equation}
u\in V_{N}^{C}\left(  t\right)  \cap V_{N}^{C}\left(  t+1\right)  .
\label{eq0444}%
\end{equation}
Suppose on the contrary that we have%
\begin{equation}
u\in V_{N}^{C}\left(  t\right)  \cap V_{N}^{D}\left(  t+1\right)  ;
\label{eq0445}%
\end{equation}
(\ref{eq0445})\ can happen in only two ways, each of which we examine separately.

\begin{enumerate}
\item[\textbf{II.1}] $\exists x_{1}\neq v$ such that $ux_{1}\in E$ and
$x_{1}\in V_{N}^{D}\left(  t\right)  \cap V_{N}^{D}\left(  t+1\right)  $. Then
$x_{1}\in V_{N}^{D}\left(  t\right)  =V_{E}^{D}\left(  t\right)  \Rightarrow
ux_{1}\in E_{E}^{D}\left(  t\right)  $ and hence (since the move at $t+1$ was
$u\rightarrow v$ and $x_{1}\neq v$) $ux_{1}\in E_{E}^{D}\left(  t+1\right)  $.

Since $E_{E}^{C}\left(  t\right)  \neq\emptyset$ (for all $t\geq t_{1}$) and
$u\in V_{N}^{C}\left(  t\right)  =V_{E}^{C}\left(  t\right)  $ and
$\mathbf{G}_{E}^{C}\left(  t\right)  $ is connected, there exists $uz\in
E_{E}^{C}\left(  t\right)  $. However, since $ux_{1}\in E_{E}^{D}\left(
t+1\right)  $ and $u$ is unguarded at $t+1$, we conclude $uz\in E_{E}%
^{D}\left(  t+1\right)  $, which contradicts monotonicity of the edge search.

\item[\textbf{II.2}] Alternatively, there exists a path $ux_{1}...x_{K-1}%
x_{K}$ (with $K>1$) such that $x_{K}\in V_{N}^{D}\left(  t\right)  \cap
V_{N}^{D}\left(  t+1\right)  $ and $ux_{1}...x_{K-1}x_{K}$ is n-unguarded at
$t+1$ (so $v\notin\left\{  u,x_{1},...,x_{K}\right\}  $). First note that,
since $ux_{1}...x_{K}$ is a path, $u\notin\left\{  x_{1},...,x_{K}\right\}  $.
Now%
\[
\left.
\begin{array}
[c]{l}%
\text{the }\left(  t+1\right)  \text{-th move is }u\rightarrow v\\
u\notin\left\{  x_{1},...,x_{K}\right\} \\
ux_{1}...x_{K}\text{ is n-unguarded at }t+1
\end{array}
\right\}  \Rightarrow ux_{1}...x_{K}\text{ is e-unguarded at }t
\]
Since $x_{K}\in V_{N}^{D}\left(  t\right)  =V_{E}^{D}\left(  t\right)  $ it
follows that $x_{K-1}x_{K}\in E_{E}^{D}\left(  t\right)  $; since also
$x_{K-1}x_{K}$ is not traversed at $t+1$, it follows that $x_{K-1}x_{K}\in
E_{E}^{D}\left(  t+1\right)  $. By the same reasoning as in the previous case,
there exists $uz\in E_{E}^{C}\left(  t\right)  $ and, since $ux_{1}\in
E_{E}^{D}\left(  t+1\right)  $ and $u$ is unguarded at $t+1$, we conclude
$uz\in E_{E}^{D}\left(  t+1\right)  $, which contradicts the edge monotonicity
of $\mathbf{S}$.
\end{enumerate}

In short, $u$ cannot become n-dirty at $t+1$, i.e., we have proved
(\ref{eq0444}). Using this fact and an argument similar to that of Theorem
\ref{prp0401}, we conclude that no previously n-clear node can become n-dirty
at $t+1$. Since no node is recontaminated, no edge is recontaminated either.

On the other hand, the only node \emph{possibly }n-cleared at time $t+1$ is
$v$ (no other node was entered) and hence the only edge \emph{possibly
}n-cleared at time $t+1$ is $uv$ (if another edge was cleared, by an analysis
similar to that of Theorem \ref{prp0401}, we conclude that a cycle must exist
in $\mathbf{T}$, which is impossible). Hence the only edge possibly n-cleared
at $t+1$ is $uv$.

In short, in the node game we have%
\begin{equation}
V_{N}^{C}\left(  t+1\right)  =V_{N}^{C}\left(  t\right)  \cup\left\{
v\right\}  \text{ and }E_{N}^{C}\left(  t+1\right)  =E_{N}^{C}\left(
t\right)  \cup\left\{  uv\right\}  . \label{eq0407a}%
\end{equation}
The treatment of the case that $u$ is still guarded at $t+1$ (i.e., it
contained more than one searcher at $t$) is omitted, since it is similar to
but easier than the case of $u$ unguarded. \medskip From \noindent
(\ref{eq0401a}), (\ref{eq0406}) and (\ref{eq0407a}) we obtain
\begin{equation}
V_{E}^{C}\left(  t+1\right)  =V_{N}^{C}\left(  t+1\right)  \text{ and }%
E_{E}^{C}\left(  t+1\right)  =E_{N}^{C}\left(  t+1\right)
\end{equation}
and, proceeding inductively, we obtain the required result
\begin{equation}
\text{for }t=0,1,2,...:\qquad V_{E}^{C}\left(  t\right)  =V_{N}^{C}\left(
t\right)  \ \text{ and }\ E_{E}^{C}\left(  t\right)  =E_{N}^{C}\left(
t\right)  . \label{eq0447}%
\end{equation}
Monotonicity and connectedness in the node game follow from the fact that
these hold in the edge game and eq.(\ref{eq0447}).
\end{proof}

We will now review some results (obtained by Barriere et al.
\cite{Barriere1,Barriere2}) regarding IMC\ \emph{edge }search on trees and
will show that similar results hold for IMC\ \emph{node }search on trees. In
the following presentation the terminology and notation of
\cite{Barriere1,Barriere2} is somewhat changed, to conform with the one used
in the current paper.

The first basic result of \cite{Barriere1} is the following.

\begin{theorem}
\label{prp0404}\cite{Barriere1} For every tree $\mathbf{T}$ there is an IMC
edge clearing search schedule $\mathbf{S}$ which uses $s_{E}^{ic}\left(
\mathbf{T}\right)  $ searchers. Moreover, in $\mathbf{S}$ all searchers are
initially placed at the same node $u_{0}$ (i.e., the search schedule is
rooted) and the first step (after placing the searchers) consists in clearing
an edge incident to $u_{0}$.
\end{theorem}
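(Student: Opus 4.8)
The plan is to prove the two inequalities $s_E^{imc}(\mathbf{T})\ge s_E^{ic}(\mathbf{T})$ and $s_E^{imc}(\mathbf{T})\le s_E^{ic}(\mathbf{T})$ and, while establishing the second, to produce a witnessing schedule with the claimed rooted structure. The first inequality is immediate, since every IMC schedule is in particular an IC schedule, so the optimum over the smaller class cannot be smaller. Hence all the content lies in exhibiting, for an arbitrary tree, an internal monotone connected edge-clearing schedule that uses no more than $s_E^{ic}(\mathbf{T})$ searchers and that is rooted with its first clearing move along an edge incident to the root.

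First I would record the structural facts that make trees tractable. In any internal connected edge search the clear graph $\mathbf{G}_E^C(t)$ is connected and, being a connected subgraph of a tree, is itself a subtree; each clearing move extends it by exactly the one edge just traversed, and by Lemma \ref{prp0301} every clear node incident to a dirty edge must be guarded. Thus a connected search on a tree has a single growing frontier, and there is a well-defined node $u_0$ at which clearing begins (the tail of the first traversed edge). I would root $\mathbf{T}$ at a suitably chosen $u_0$ and peel off the subtrees $\mathbf{T}_{u_0}[c_1],\dots,\mathbf{T}_{u_0}[c_d]$ hanging below it through the edges $u_0c_1,\dots,u_0c_d$.

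The substantive step, which I expect to be the main obstacle, is monotonization: showing recontamination never helps a connected search on a tree. I would argue by strong induction on the number of edges. A connected search must clear the subtrees below $u_0$ one at a time, and while any dirty subtree remains, $u_0$ is incident to the dirty edge $u_0c_j$ of an uncleared subtree and so must stay guarded to keep the cleared region connected and uncontaminated. Writing $k_i$ for the number of searchers needed to clear $\mathbf{T}_{u_0}[c_i]$ (together with the edge $u_0c_i$) by a rooted connected search begun at $u_0$, clearing an intermediate subtree costs $k_i+1$ (the $+1$ for the guard on $u_0$) while clearing the last subtree costs only its own $k$, because once the final edge $u_0c$ is traversed $u_0$ is no longer adjacent to any dirty edge. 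Ordering the subtrees so that the most expensive is cleared last gives a peak requirement of $\max\bigl(k_{(1)},\,1+k_{(2)}\bigr)$, where $k_{(1)}\ge k_{(2)}\ge\cdots$ are the sorted costs. By the inductive hypothesis each subtree admits an optimal rooted IMC schedule; concatenating them in this order, prefixed by placing the searchers at $u_0$ and first clearing $u_0c_{(1)}$, yields an IMC schedule for $\mathbf{T}$. The key verification is that no recontamination is forced, and this is exactly where tree-ness is essential: guarding the single node $u_0$ severs \emph{every} path between a dirty subtree and the already-cleared region, which fails on a general graph.

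Finally I would close the loop on optimality, choosing $u_0$ (and the sorted order) to minimize the recursion. The construction gives $s_E^{imc}(\mathbf{T})\le\min_{u_0}\max\bigl(k_{(1)},\,1+k_{(2)}\bigr)$, and a matching lower bound for $s_E^{ic}(\mathbf{T})$ follows from the same structural analysis applied to an \emph{arbitrary} IC search: identifying its starting frontier node, one subtree must at some moment be fully under attack while at least one other dirty subtree forces a guard on that node, so at least $\max\bigl(k_{(1)},\,1+k_{(2)}\bigr)$ searchers are used. Combined with the trivial inequality this gives $s_E^{imc}(\mathbf{T})=s_E^{ic}(\mathbf{T})$, and the schedule built in the induction is by construction rooted at $u_0$ with first clearing move along the edge $u_0c_{(1)}$, as required.
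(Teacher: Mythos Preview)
The paper does not give its own proof of this theorem; it is quoted from \cite{Barriere1}, and only the ingredients of Barri\`ere's argument (the recursive formula of Lemma~\ref{prp0407} and the \textbf{R-Label}/\textbf{R-Search} algorithms of Appendix~\ref{secA}) are reproduced. Your outline is essentially that same argument: root the tree, decompose into the subtrees below the root, use the recursion $\max\bigl(k_{(1)},\,1+k_{(2)}\bigr)$, and choose the root and clearing order accordingly.

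Two small points. First, there is a notational slip at the end: you say the most expensive subtree is cleared \emph{last}, yet you also say the first clearing move is along $u_0c_{(1)}$, the edge to the most expensive child. The theorem only requires that the first cleared edge be incident to $u_0$, so this slip is harmless, but be consistent. Second, your lower-bound paragraph (``a matching lower bound for $s_E^{ic}(\mathbf{T})$ follows from the same structural analysis'') is where the real work hides and is a bit too quick as stated. In an arbitrary IC search the order in which the subtrees are attacked is not under your control, and a subtree may be partially cleared, abandoned, and revisited; you need to argue that nevertheless there is some moment at which at least $k_{(2)}$ searchers are committed inside one subtree while $u_0$ must still be guarded because a second subtree is dirty, \emph{and} separately that at some moment $k_{(1)}$ searchers are needed. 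This is exactly what Barri\`ere et al.\ prove, and the induction does go through, but it deserves more than the one sentence you give it.
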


\begin{corollary}
\label{prp0405}For every tree $\mathbf{T}$ we have $s_{E}^{ic}\left(
\mathbf{T}\right)  =s_{E}^{imc}\left(  \mathbf{T}\right)  $. There exists a
\emph{rooted }IMC\ search schedule which achieves this bound.
\end{corollary}

Theorem \ref{prp0404} can be extended to the following sequence of
inequalities\cite{Barriere2}.

\begin{theorem}
\label{prp0406}\cite{Barriere2} For every tree $\mathbf{T}$ we have$.$%
\begin{equation}
s_{E}\left(  \mathbf{T}\right)  =s_{E}^{i}\left(  \mathbf{T}\right)
=s_{E}^{m}\left(  \mathbf{T}\right)  \leq s_{E}^{im}\left(  \mathbf{T}\right)
=s_{E}^{c}\left(  \mathbf{T}\right)  =s_{E}^{ic}\left(  \mathbf{T}\right)
=s_{E}^{mc}\left(  \mathbf{T}\right)  =s_{E}^{imc}\left(  \mathbf{T}\right)
\leq2s_{E}\left(  \mathbf{T}\right)  -2. \label{eq0451}%
\end{equation}
Furthermore, there are trees $\mathbf{T}$ for which the inequality $s_{E}%
^{m}\left(  \mathbf{T}\right)  \leq s_{E}^{im}\left(  \mathbf{T}\right)  $ is strict.
\end{theorem}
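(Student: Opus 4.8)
The plan is to split the chain (\ref{eq0451}) into two ``blocks'' --- the unconnected numbers $s_{E},s_{E}^{i},s_{E}^{m}$ and the connected numbers $s_{E}^{im},s_{E}^{c},s_{E}^{ic},s_{E}^{mc},s_{E}^{imc}$ --- prove equality within each block, establish the single inequality $s_{E}^{m}\le s_{E}^{im}$ that joins them, and finally dispose of the upper bound and the strictness claim. The starting point is the list of \emph{trivial} inequalities that come straight from the definitions: imposing an extra restriction (internality, monotonicity, or connectedness) can never decrease the number of searchers, so $s_{E}\le s_{E}^{i}\le s_{E}^{im}\le s_{E}^{imc}$, $s_{E}\le s_{E}^{m}\le s_{E}^{mc}$, $s_{E}^{c}\le s_{E}^{ic}$, and so on. These supply every ``$\le$'' appearing in (\ref{eq0451}) for free, so the theorem reduces to the reverse inequalities that force the equalities, plus the two genuine bounds.

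For the first block I would prove $s_{E}^{m}\le s_{E}$ and $s_{E}^{i}\le s_{E}$. The equality $s_{E}^{m}=s_{E}$ is the monotonicity theorem for edge search (recontamination does not help); on a tree it can also be seen directly, since the unique-path property lets one rearrange any schedule so that a cleared edge is never re-dirtied. For $s_{E}^{i}\le s_{E}$ I would argue that internality is free on $\mathbf{T}$: take a minimal monotone schedule and remove every deletion move, parking the otherwise-deleted searcher in place and realizing each later placement at a node $q$ as a slide of a parked searcher along the unique path of $\mathbf{T}$ joining its current position to $q$; because the path is unique and the searcher can wait until it is needed, this costs no extra searcher and produces no net recontamination. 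Combining the two gives $s_{E}=s_{E}^{i}=s_{E}^{m}$.

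The connected block is anchored by the already-available results: Theorem \ref{prp0404} and Corollary \ref{prp0405} give $s_{E}^{ic}=s_{E}^{imc}$ together with a rooted IMC schedule attaining it. To collapse the three remaining numbers it then suffices to prove $s_{E}^{imc}\le s_{E}^{c}$ and $s_{E}^{imc}\le s_{E}^{im}$, since the trivial inequalities squeeze everything to a common value. The first says that, for trees, monotonicity and internality become free \emph{once connectedness is imposed}; I would obtain it by upgrading an arbitrary connected schedule to an internal monotone one, again exploiting that each subtree attaches to the rest of $\mathbf{T}$ through a single cut vertex, so a fully cleared subtree can be ``sealed'' by one guard and the remaining searchers re-used without recontamination. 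The delicate equality is $s_{E}^{im}=s_{E}^{c}$, where neither side implies the other; the cleanest route is to show that $s_{E}^{im}(\mathbf{T})$ and the connected search number obey the \emph{same} recursion over the rooted subtrees of $\mathbf{T}$ (of Parsons type: a maximum over branches, incremented where several heavy branches meet). Matching these two recurrences is, I expect, the main obstacle of the whole theorem.

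Finally, for $s_{E}^{imc}\le 2s_{E}-2$ I would begin from an optimal monotone schedule on $s_{E}$ searchers and convert it into a connected one, keeping a single connected cleared region and spending extra searchers only to bridge the gaps created whenever the original schedule jumps between two components; a careful accounting on the tree shows that at most $s_{E}-2$ additional searchers are ever needed simultaneously. For the strictness of $s_{E}^{m}\le s_{E}^{im}$ I would exhibit a concrete small tree --- a spider whose legs are themselves branched, so that a monotone-but-non-internal schedule can exploit free re-placement of searchers that an internal schedule cannot --- and compute both numbers explicitly to read off $s_{E}^{m}<s_{E}^{im}$. The real work is concentrated in the doubling conversion and in matching the recurrences for $s_{E}^{im}=s_{E}^{c}$; everything else is bookkeeping with the trivial inequalities and the cited results of Theorem \ref{prp0404} and Corollary \ref{prp0405}.
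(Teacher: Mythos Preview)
The paper does not supply its own proof of this theorem: it is stated with the citation \cite{Barriere2} and used as a black box. Consequently there is nothing in the present paper to compare your proposal against; the argument you would need to match lives in the Barri\`ere--Fraigniaud--Santoro--Thilikos paper, not here.

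That said, your outline is broadly in the right spirit for how \cite{Barriere2} proceeds --- separate the equalities from the genuine inequality, reduce the equalities to matching recursive formulas over rooted subtrees, and handle the doubling bound by a conversion argument --- and you correctly identify $s_{E}^{im}=s_{E}^{c}$ as the crux. One point to tighten: in your $s_{E}^{i}\le s_{E}$ step, the claim that sliding a parked searcher along the unique tree path ``produces no net recontamination'' needs more care than you indicate. When the parked searcher vacates an intermediate node of the path you must argue that this node was not, at that instant, the unique guard separating some already-clear edge from a dirty one; the unique-path property of trees does give you this, but it is precisely where the tree hypothesis enters and deserves an explicit sentence. Similarly, your sketch for $s_{E}^{imc}\le 2s_{E}-2$ (``a careful accounting on the tree shows that at most $s_{E}-2$ additional searchers are ever needed simultaneously'') hides the entire difficulty; in \cite{Barriere2} this bound is obtained via the recursive structure and the labeling, not by a direct schedule conversion, and your one-line description would not convince a reader.
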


Barriere et al. also present the \textbf{Search }algorithm \cite{Barriere1},
which computes a minimal edge-clearing rooted IMC\ schedule $\mathbf{S}$ for
every tree $\mathbf{T}$. The next Theorem shows that $\mathbf{S}$ is also a
minimal \emph{node-}clearing rooted IMC\ schedule.

\begin{theorem}
\label{prp0413}For every tree $\mathbf{T}$, the search $\mathbf{S}$ produced
by the \textbf{Search} algorithm is

\begin{enumerate}
\item a minimal edge-clearing IMC\ search of $\mathbf{T}$, which is also rooted;

\item a minimal node-clearing IMC\ search of $\mathbf{T}$, which is also rooted.
\end{enumerate}
\end{theorem}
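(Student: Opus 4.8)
The plan is to leverage the two equivalence theorems already established, namely Theorems~\ref{prp0401} and~\ref{prp0402}, together with the edge-search facts imported from Barriere et al.\ (Theorem~\ref{prp0404} and Corollary~\ref{prp0405}). By Theorem~\ref{prp0404}, the search $\mathbf{S}$ produced by the \textbf{Search} algorithm is a rooted IMC edge-clearing schedule using $s_{E}^{ic}(\mathbf{T})$ searchers, and by Corollary~\ref{prp0405} this equals $s_{E}^{imc}(\mathbf{T})$; so part~(1) is essentially a restatement of the imported results and requires only that I cite them. The real content is part~(2): showing that this same schedule, when played in the \emph{node} game, is a minimal node-clearing IMC schedule.

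First I would establish that $\mathbf{S}$ node-clears $\mathbf{T}$ and is IMC in the node game. Since $\mathbf{S}$ is monotone connected in the edge game, I want to apply Theorem~\ref{prp0402}, which converts edge-game IMC into node-game IMC with $V_{E}^{C}(t)=V_{N}^{C}(t)$ and $E_{E}^{C}(t)=E_{N}^{C}(t)$ for all $t$. The one technical hurdle is the hypothesis of Theorem~\ref{prp0402}: the schedule must satisfy constraint \textbf{C1} or \textbf{C2}. Here is where I would invoke the second sentence of Theorem~\ref{prp0404}, which guarantees that $\mathbf{S}$ is rooted and that its first clearing step (after the initial placements) clears an edge $u_{0}v$ incident to the root. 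If all searchers are placed before any sliding, this is exactly the situation covered by \textbf{C1} or \textbf{C2} (depending on whether one or more searchers are placed at $u_{0}$ first), as spelled out in Remark~\ref{prp0403}; I would note explicitly which of the two constraints applies and, if the algorithm's output does not literally begin this way, appeal to the final remark in Remark~\ref{prp0403} that any interesting IMC tree search can be converted to an equivalent one satisfying \textbf{C1}/\textbf{C2}. Granting the constraint, Theorem~\ref{prp0402} immediately yields that $\mathbf{S}$ is monotone connected in the node game and that the node-clear and edge-clear sets coincide at every step; in particular, since $E_{E}^{C}(t_{fin})=E$ gives $E_{N}^{C}(t_{fin})=E$, the schedule node-clears $\mathbf{T}$.

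Next I would verify minimality of $\mathbf{S}$ as a node search, i.e.\ that it uses exactly $s_{N}^{imc}(\mathbf{T})$ searchers. Since $\mathbf{S}$ uses $s_{E}^{imc}(\mathbf{T})$ searchers and the searcher count at each step is determined purely by the placement/slide/removal moves (which are identical in both games), it suffices to prove $s_{N}^{imc}(\mathbf{T})=s_{E}^{imc}(\mathbf{T})$. For this I would argue a two-sided inequality using Theorems~\ref{prp0401} and~\ref{prp0402} as a bijection between node-IMC and edge-IMC rooted schedules on trees: given any rooted IMC node-clearing schedule, Theorem~\ref{prp0401} shows it is simultaneously an edge-clearing IMC schedule with the same searcher count, giving $s_{E}^{imc}(\mathbf{T})\le s_{N}^{imc}(\mathbf{T})$; conversely, the \textbf{Search}-algorithm schedule $\mathbf{S}$ constructed above is a node-clearing IMC schedule using $s_{E}^{imc}(\mathbf{T})$ searchers, giving $s_{N}^{imc}(\mathbf{T})\le s_{E}^{imc}(\mathbf{T})$. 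Combining the two yields equality, so $\mathbf{S}$ attains the node-IMC search number and is therefore minimal.

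The main obstacle I anticipate is bookkeeping around the \textbf{C1}/\textbf{C2} constraint and the distinction between placement moves and the first clearing move, since Theorem~\ref{prp0402} is sensitive to whether an edge is cleared and dirtied in the same step (the phenomenon of Remark~\ref{prp0202}). A careful statement is needed that the \textbf{Search} algorithm's output either satisfies the constraint outright or admits the trivial reordering of Remark~\ref{prp0403}; beyond this, the argument is a clean chaining of the already-proved equivalence theorems with the imported edge-search results, and part~(1) is immediate from citation alone.
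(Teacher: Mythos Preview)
Your proposal is correct and follows essentially the same route as the paper: Part~1 is cited from Barriere et al., and Part~2 is obtained by applying Theorem~\ref{prp0402} (after the \textbf{C1}/\textbf{C2} adjustment you flag) to see that $\mathbf{S}$ is node-IMC, then sandwiching $s_{N}^{imc}(\mathbf{T})$ between $s_{E}^{imc}(\mathbf{T})$ via Theorem~\ref{prp0401} on one side and the existence of $\mathbf{S}$ on the other. The paper differs only cosmetically: it treats the path case $s_{E}^{imc}(\mathbf{T})=1$ separately and explicitly constructs the modified schedule $\mathbf{S}'$ satisfying \textbf{C1}/\textbf{C2} rather than appealing to Remark~\ref{prp0403}.
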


\begin{proof}
Part 1 of the Theorem is proved as Lemma 9 in \cite{Barriere1}. Let us now
prove part 2. Since $\mathbf{S}$ is an edge-clearing strategy it is also
node-clearing. We next show it is minimal.\ 

Let us first consider the case of trees $\mathbf{T}$ such that $s_{E}%
^{imc}\left(  \mathbf{T}\right)  =1$. It is easy to see that such trees are
paths of the form $u_{1}u_{2}...u_{N}$, and \textbf{Search }produces the
\textquotedblleft obvious\textquotedblright\ edge clearing schedule:
$0\rightarrow u_{1}$, $u_{1}\rightarrow u_{2}$, ... (or the reverse, starting
at $u_{N}$) which is also a rooted minimal node clearing IMC\ schedule.

Suppose now that $s_{E}^{imc}\left(  \mathbf{T}\right)  \geq2$. By part 1,
\textbf{Search }will produce a rooted minimal edge clearing IMC\ schedule
$\mathbf{S}$. Also, by relaxing the requirement that all searchers are placed
into the graph in the first move (in other words, by interspersing searcher
placements with edge slidings) we can obtain from $\mathbf{S}$ a new search
schedule $\mathbf{S}^{\prime}$ which

\begin{enumerate}
\item uses the same number of searchers as $\mathbf{S}$\textbf{ }%
(i.e.\textbf{, }$\overline{sn}\left(  \mathbf{S}\right)  =\overline{sn}\left(
\mathbf{S}^{\prime}\right)  $\textbf{);}

\item is rooted (i.e. all searcher placements are into $u_{0}$);

\item satisfies either \textbf{C1} or \textbf{C2 }of Theorem \ref{prp0402};

\item produces the same sequence of clear graphs $\mathbf{G}_{N}^{C}\left(
t\right)  $ as $\mathbf{S}$ \ (with a slight time adjustment, to account for
the changes in searcher placement times).
\end{enumerate}

Hence $\mathbf{S}^{\prime}$ satisfies the conditions of Theorem \ref{prp0402}
and is IMC\ in the edge game which means that, using $\mathbf{S}^{\prime}$, we
have
\begin{equation}
\text{for }t=0,1,2,...:\qquad V_{E}^{C}\left(  t\right)  =V_{N}^{C}\left(
t\right)  \ \text{ and }\ E_{E}^{C}\left(  t\right)  =E_{N}^{C}\left(
t\right)  .
\end{equation}
From this also follows that $\mathbf{S}^{\prime}$ is a node clearing rooted
IMC\ search schedule of $\mathbf{T}$.

Now take another node clearing IMC\ search of $\mathbf{T}$, call it
$\mathbf{S}^{\prime\prime}$, which is minimal for the node game (i.e.,
$\overline{sn}\left(  \mathbf{S}^{\prime\prime}\right)  =$ $s_{N}^{imc}\left(
\mathbf{T}\right)  $). Then, by Theorem \ref{prp0401}, $\mathbf{S}%
^{\prime\prime}$ is also edge clearing, hence%
\[
\overline{sn}\left(  \mathbf{S}^{\prime\prime}\right)  \geq s_{E}^{imc}\left(
\mathbf{T}\right)  =\overline{sn}\left(  \mathbf{S}\right)  .
\]
On the other hand, since $\mathbf{S}^{\prime}$ is node clearing,
\[
\overline{sn}\left(  \mathbf{S}^{\prime\prime}\right)  =s_{N}^{imc}\left(
\mathbf{T}\right)  \leq\overline{sn}\left(  \mathbf{S}^{\prime}\right)
=\overline{sn}\left(  \mathbf{S}\right)  .
\]
Hence
\[
s_{N}^{imc}\left(  \mathbf{T}\right)  =\overline{sn}\left(  \mathbf{S}%
^{\prime\prime}\right)  =\overline{sn}\left(  \mathbf{S}\right)  =s_{E}%
^{imc}\left(  \mathbf{T}\right)
\]
which completes the proof.
\end{proof}

\begin{corollary}
\label{prp0414}For every tree $\mathbf{T}$ we have $s_{E}^{imc}\left(
\mathbf{T}\right)  =s_{N}^{imc}\left(  \mathbf{T}\right)  $.
\end{corollary}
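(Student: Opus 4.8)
The plan is to read off the corollary directly from Theorem \ref{prp0413}, which furnishes a single search schedule $\mathbf{S}$ (the output of the \textbf{Search} algorithm) that is \emph{simultaneously} minimal in both games. First I would record the elementary but decisive observation that any fixed search schedule $\mathbf{S}$ uses a definite number of searchers, namely $\overline{sn}\left(  \mathbf{S}\right)  $, independently of whether we interpret it as a node game or an edge game: the two games differ only in which nodes and edges they declare clear, not in how many searchers are physically present at each time $t$. Thus ``the number of searchers used by $\mathbf{S}$'' is an unambiguous quantity attached to $\mathbf{S}$ itself.

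With this in hand, I would invoke the two parts of Theorem \ref{prp0413}. Part 1 asserts that $\mathbf{S}$ is a minimal edge-clearing IMC search, so $\overline{sn}\left(  \mathbf{S}\right)  =s_{E}^{imc}\left(  \mathbf{T}\right)  $; Part 2 asserts that the same $\mathbf{S}$ is a minimal node-clearing IMC search, so $\overline{sn}\left(  \mathbf{S}\right)  =s_{N}^{imc}\left(  \mathbf{T}\right)  $. Chaining these two equalities through the common quantity $\overline{sn}\left(  \mathbf{S}\right)  $ yields
\[
s_{E}^{imc}\left(  \mathbf{T}\right)  =\overline{sn}\left(  \mathbf{S}\right)  =s_{N}^{imc}\left(  \mathbf{T}\right)  ,
\]
which is exactly the assertion of the corollary.

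Since all the substantive work --- constructing a schedule that is minimal and IMC in both games at once --- has already been carried out in the proof of Theorem \ref{prp0413} (which itself rests on Theorems \ref{prp0401} and \ref{prp0402} together with Barriere's \textbf{Search} algorithm), there is essentially no remaining obstacle, and the corollary is a one-line consequence. The only point requiring any care is confirming that the minimality claimed in the two parts of Theorem \ref{prp0413} refers to the \emph{same} $\mathbf{S}$, so that the two search numbers are tied to one common value of $\overline{sn}$; this is immediate from the statement. Should one prefer a proof independent of the \textbf{Search} algorithm, the same equality follows from a two-sided inequality: Theorem \ref{prp0401} shows that every minimal (rooted, internal) node-clearing IMC search is also edge-clearing, giving $s_{E}^{imc}\left(  \mathbf{T}\right)  \leq s_{N}^{imc}\left(  \mathbf{T}\right)  $, while Corollary \ref{prp0405} combined with Theorem \ref{prp0402} converts a minimal edge-clearing IMC search into one satisfying \textbf{C1} or \textbf{C2}, hence node-clearing, giving the reverse inequality. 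The direct appeal to Theorem \ref{prp0413} is, however, the shortest route.
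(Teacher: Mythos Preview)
Your proposal is correct and matches the paper's approach: the equality $s_{E}^{imc}(\mathbf{T})=\overline{sn}(\mathbf{S})=s_{N}^{imc}(\mathbf{T})$ is in fact established verbatim in the last line of the proof of Theorem~\ref{prp0413}, and the corollary is stated without further argument precisely because it has already been derived there. Your observation that $\overline{sn}(\mathbf{S})$ is game-independent is the one point worth making explicit, and you have done so.
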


\begin{remark}
\label{prp0416}Theorem \ref{prp0413} shows that the Barriere \textbf{Search}
algorithm can be used to compute a rooted IMC\ node clearing schedule for
every tree $\mathbf{T}$. How good is such a schedule? In other words, can we
node clear a tree $\mathbf{T}$ with fewer than $s_{N}^{imc}\left(
\mathbf{T}\right)  $ searchers? We can certainly do this with an internal
(non-monotone, non-connected)\ search. For example, let $\mathbf{T}^{\left(
2M\right)  }$ be the complete binary tree of height $2M$; then $s_{N}%
^{i}\left(  \mathbf{T}^{\left(  2M\right)  }\right)  =M$, while $s_{N}%
^{imc}\left(  \mathbf{T}^{\left(  2M\right)  }\right)  =2M$; but we do not
know the values of $s_{N}^{im}\left(  \mathbf{T}^{\left(  2M\right)  }\right)
$ and $s_{N}^{ic}\left(  \mathbf{T}^{\left(  2M\right)  }\right)  $. More
generally, we do not have the analog for node search of the inequalities
(\ref{eq0451}). It is easy to see that%
\begin{equation}
s_{N}\left(  \mathbf{G}\right)  =s_{N}^{i}\left(  \mathbf{G}\right)  \leq
s_{N}^{m}\left(  \mathbf{G}\right)  \leq%
\begin{array}
[c]{c}%
s_{N}^{im}\left(  \mathbf{G}\right) \\
s_{N}^{c}\left(  \mathbf{G}\right)  =s_{N}^{ic}\left(  \mathbf{G}\right)
\end{array}
\leq s_{E}^{mc}\left(  \mathbf{G}\right)  =s_{E}^{imc}\left(  \mathbf{G}%
\right)  \label{eq0452}%
\end{equation}
for every graph $\mathbf{G}$ and, in particular, for every tree $\mathbf{T}$
(and the inequalities in (\ref{eq0452}) can be strict). Refining
(\ref{eq0452}) to something like (\ref{eq0451}) is a subject of our future
research. The question is of special interest to us because, as already
mentioned, the one \emph{absolute} requirement for robotic pursuit / evasion
is internality; monotonicity and connectedness are desirable but not indispensable.
\end{remark}

\begin{remark}
\label{prp0417}A basic component of Barriere's \textbf{Search} algorithm is
the \textquotedblleft$\lambda$ labeling\textquotedblright\ of edges, performed
by the auxiliary \textbf{Label} algorithm\cite{Barriere1}. As will be seen in
the next section, the Barriere $\lambda$ labels are also used by our GSST
algorithm to node-clear \emph{general graphs}.
\end{remark}

\section{Search on Graphs}

\label{sec05}

We now turn to the study of rooted IMC node searches of an arbitrary graph
$\mathbf{G}=\left(  V,E\right)  $ (which will always be assumed to have $N$
nodes, i.e., $\left\vert V\right\vert =N$). In every such search, the $t$-th
move has the form $\mathbf{S}\left(  t\right)  =u\rightarrow v$, where $u,v\in
V$ or, perhaps, $u=0$, the \textquotedblleft source\textquotedblright\ node.

We present several variants of a basic node clearing algorithm. All variants
are guaranteed to find a node clearing search schedule; we prove that two of
the variants will find a minimal schedule with probability $1-\alpha^{M}$
where $\alpha\in\left(  0,1\right)  $ and $M$ is the number of iterations of
the algorithm; even non-minimal search schedules require a reasonably small
number of searchers, as will be seen by the experiments of Section \ref{sec06}.

\subsection{Motivation}

\label{sec0501}

Our basic algorithmic idea is motivated by the following rather simple
observation: \emph{every rooted IMC\ node-clearing search of }$\mathbf{G}%
$\emph{ generates a spanning tree}. This observation can be refined in the
form of the following theorem.

\begin{theorem}
\label{prp0501}Given a graph $\mathbf{G}=\left(  V,E\right)  $ and a rooted
IMC node clearing search $\mathbf{S}$ of $\mathbf{G}$. The clearing moves of
$\mathbf{S}$ generate a sequence of trees, $\left(  \mathbf{T}_{0}%
,\mathbf{T}_{1},\mathbf{T}_{2},....,\mathbf{T}_{N}\right)  $, where (for
$n=1,2,...,N$)\ $\mathbf{T}_{n}=\left(  V_{n},E_{n}\right)  $ and the
following hold:\ 

\begin{enumerate}
\item[\textbf{D1}] $\mathbf{T}_{0}$ is the empty graph ($V_{0}=\emptyset$,
$E_{0}=\emptyset$);

\item[\textbf{D2}] $\mathbf{T}_{N}$ is a spanning tree of $\mathbf{G}%
\ (V_{N}=V$, $E_{N}\subseteq E$);

\item[\textbf{D3}] for $n=1,2,...,N$: $V_{n-1}\subseteq V_{n}$, $E_{n-1}%
\subseteq E_{n}$ (in other words $\mathbf{T}_{n-1}$ is a subtree of
$\mathbf{T}_{n}$);

\item[\textbf{D4}] for $n=1,2,...,N$: $V_{n}=V_{n-1}\cup\left\{
u_{n}\right\}  $, and for $n=2,3,...$: $E_{n}=E_{n-1}\cup\left\{  u_{i}%
u_{n}\right\}  $, with $i\in\left[  1,n-1\right]  $.
\end{enumerate}
\end{theorem}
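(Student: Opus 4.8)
The plan is to argue directly from the structure of a rooted IMC node clearing search, tracking how the n-clear graph $\mathbf{G}_N^C(t)$ evolves move by move. The key observation is that monotonicity and connectedness tightly constrain what each clearing move can do: a monotone search never decontaminates a cleared node, and a connected search keeps the n-clear graph connected at all times. Since the search is also internal and rooted, the very first clearing move must place and then slide a searcher starting from the root $u_0$, so the first nonempty clear graph is a single node. I would define the sequence $(\mathbf{T}_0,\ldots,\mathbf{T}_N)$ by letting $\mathbf{T}_n$ be the n-clear graph at the moment the $n$-th \emph{distinct} node becomes n-clear, i.e.\ I reindex time by ``node-clearing events'' rather than by raw time steps $t$. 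This is the natural bookkeeping device, since many moves (searcher placements, removals, slides into already-clear nodes) do not clear a new node and should be collapsed.

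First I would establish \textbf{D1} and the base of the induction: at $t=0$ the clear graph is empty, giving $V_0=\emptyset$, $E_0=\emptyset$; by rootedness and internality the first clearing introduces exactly $u_0$, so $V_1=\{u_0\}$ and $E_1=\emptyset$. Next, the heart of the argument is to show that between consecutive node-clearing events exactly one new node $u_n$ is added, and when $n\ge 2$ exactly one new edge is added connecting $u_n$ to an already-cleared node. The crucial input here is the analysis already carried out in the proof of Theorem~\ref{prp0401} (equation~(\ref{eq0402})): for an internal rooted monotone connected node search on a \emph{tree}, a single clearing move $u\to v$ satisfies $V_N^C(t+1)=V_N^C(t)\cup\{v\}$ and $E_N^C(t+1)=E_N^C(t)\cup\{uv\}$. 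I would adapt that local analysis to the general graph $\mathbf{G}$: when a searcher slides along $u_iu_n$ and n-clears the previously n-dirty node $u_n$, monotonicity forbids any recontamination, so no node already in $V_{n-1}$ is lost, and the only newly n-clear node is $u_n$. This gives $V_n=V_{n-1}\cup\{u_n\}$, which is the first half of \textbf{D4}; together with monotonicity it yields $V_{n-1}\subseteq V_n$ and $E_{n-1}\subseteq E_n$, establishing \textbf{D3}.

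The main obstacle, and the step requiring the most care, is showing that each $\mathbf{T}_n$ is genuinely a \emph{tree} (connected and acyclic) and that only a single edge $u_iu_n$ is added at each step. Connectedness of $\mathbf{T}_n$ is immediate from the connectedness hypothesis of the IMC search. Acyclicity is the delicate part: I must rule out the possibility that the move clearing $u_n$ also n-clears a second edge, which would close a cycle. The argument mirrors the cycle-contradiction used inside Theorem~\ref{prp0401}, but now it is genuinely a structural claim rather than an automatic consequence of $\mathbf{G}$ being a tree. The key point is that $u_n$ is freshly cleared, so every newly n-clear edge must be incident to $u_n$; if two such edges $u_iu_n$ and $u_ju_n$ (with $u_i,u_j\in V_{n-1}$, $u_i\neq u_j$) were both n-cleared at this event, then since $\mathbf{T}_{n-1}$ is connected there is already a path inside $\mathbf{T}_{n-1}$ joining $u_i$ to $u_j$, and appending the two new edges would create a cycle in $\mathbf{T}_n$. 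I would show this cannot happen because only one searcher traverses one edge to reach $u_n$, and any second incident edge $u_ju_n$ would require $u_j$ and $u_n$ to have been joined by an n-unguarded clear path already at the previous step, forcing that edge to have been n-clear earlier---contradicting that $u_n$ was n-dirty at the previous event. Hence exactly one edge $u_iu_n$ with $i\in[1,n-1]$ is added, completing \textbf{D4} and preserving acyclicity.

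Finally, since the search is node clearing, at termination every node of $\mathbf{G}$ is n-clear, so $V_N=V$; and because each $\mathbf{T}_n$ is a tree with $E_n\subseteq E$ and we have added exactly one edge per node after the first, $\mathbf{T}_N$ is a connected acyclic spanning subgraph of $\mathbf{G}$, i.e.\ a spanning tree, giving \textbf{D2}. I would assemble these observations into a clean induction on $n$: the inductive hypothesis is that $\mathbf{T}_{n-1}$ is a tree on $n-1$ nodes satisfying \textbf{D3}--\textbf{D4}, and the inductive step is precisely the single-move analysis above. The reindexing from raw time $t$ to clearing events $n$ should be stated explicitly at the outset so that the ``non-clearing'' moves (placements into $u_0$, removals, and slides among already-clear nodes) are transparently handled and do not disturb the tree sequence.
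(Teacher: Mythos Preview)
There is a genuine gap in your approach: you identify $\mathbf{T}_n$ with the n-clear graph $\mathbf{G}_N^C$ at the $n$-th clearing event, and this identification is incorrect for general graphs. The sequence of trees in the statement is \emph{not} the sequence of n-clear graphs; rather, $E_n$ records only the edges \emph{traversed} by clearing moves, which is in general a proper subset of $E_N^C$.

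Your attempted acyclicity argument breaks precisely here. You claim that when $u_n$ is cleared via the slide $u_i\to u_n$, no second edge $u_ju_n$ (with $u_j\in V_{n-1}$) can become n-clear, because that ``would require $u_j$ and $u_n$ to have been joined by an n-unguarded clear path already at the previous step.'' But rule~\textbf{N4} says an edge is n-clear as soon as both endpoints are n-clear; no path condition is involved. If $u_n$ has several neighbors $u_{j_1},u_{j_2},\dots$ already in $V_{n-1}$ (all of them necessarily guarded, being frontier nodes), then every edge $u_{j_k}u_n$ becomes n-clear the instant $u_n$ does. A triangle already exhibits this: with searchers on nodes $1$ and $2$, sliding $1\to 3$ n-clears both edges $13$ and $23$ simultaneously, and the resulting n-clear graph is the whole triangle, not a tree. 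So the n-clear graph of a rooted IMC search on a graph with cycles is not acyclic, and the cycle-contradiction you borrow from Theorem~\ref{prp0401} does not transfer---that argument worked only because the ambient graph there was itself a tree.

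The paper's construction avoids this entirely: $E_n$ is \emph{defined} to be $E_{n-1}$ together with the single edge slid along in the $n$-th clearing move. With that definition \textbf{D4} holds by construction; connectedness follows because the traversed edge joins the new node to $V_{n-1}$; and the tree property follows from the count $|V_n|=n$, $|E_n|=n-1$ plus connectedness. The reference to Theorem~\ref{prp0401} in the paper's proof is only for the node-level fact that each clearing move adds exactly one node (and that the source of the slide lies in $V_{n-1}$), not for any claim about $E_N^C$. Once you correct the definition of $E_n$, the rest of your outline---reindexing by clearing events, the base case, and the final count giving \textbf{D2}---goes through essentially as in the paper.
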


\begin{proof}
Inductively. Since $\mathbf{S}$ is monotone, it involves $N$ clearing moves.
$\mathbf{T}_{0}$ is the empty graph. $\mathbf{T}_{1}$ is formed by the first
move of $\mathbf{S}$, which consists in placing a searcher at the root node.
So $\mathbf{T}_{1}$ is the tree with a single node and trivially has
$\mathbf{T}_{0}\ $as a subgraph. Suppose \textbf{D3}\ and \textbf{D4}\ hold up
to $m=n$ and consider the $\left(  n+1\right)  $-th clearing move of
$\mathbf{S}$. Since $\mathbf{S}$ is connected, we add to $\mathbf{T}_{n}$\ one
node $u_{n+1}$ and (as explained in the Proof of Theorem \ref{prp0401})
\emph{exactly }one edge $u_{i}u_{n+1}$ (with $i\in\left[  1,n\right]  $) to
obtain a new tree $\mathbf{T}_{n+1}$ (of $n+1$ nodes and $n$ edges)\ which
also satisfies \ \textbf{D3}\ and \textbf{D4}. Hence \textbf{D3}\ and
\textbf{D4}\ hold for $m=1,2,...,N$. At $m=N$, $V_{N}$ contains $N$ nodes,
hence $V_{N}=V$; since $\mathbf{T}_{N}$ is a tree, it is a spanning tree of
$\mathbf{G}$.
\end{proof}

\begin{remark}
\label{prp0502}An abbreviated statement of Theorem \ref{prp0501} could be:
\textquotedblleft Every rooted IMC node clearing search of $\mathbf{G}$
specifies a spanning tree of $\mathbf{G}$\textbf{ }and an order of clearing
the nodes\textquotedblright. But the order must be consistent with the edge
structure of the graph and the spanning tree (e.g., we cannot use an order of
node clearing which requires non-existent edges). This consistence is exactly
what conditions \textbf{D3}\ and \textbf{D4}\ describe.
\end{remark}

Theorem \ref{prp0501} gives the motivation for our algorithm GSST
(\textbf{G}\emph{uaranteed \textbf{S}earch with \textbf{S}panning
\textbf{T}rees}), which is informally described below.

\begin{algorithm}[h]
\caption{ GSST, Informal Description}
\begin{algorithmic}
\STATE \textbf{Input:} Graph $\mathbf{G}$
\STATE Select a spanning tree $\mathbf{T}$ of $\mathbf{G}$ and a root $u_0$ of $\mathbf{T}$
\STATE Find a rooted IMC node-clearing search $\mathbf{S^\prime}$ of $\mathbf{T}$
\STATE Apply $\mathbf{S}$ to $\mathbf{G}$
\IF{at some step of $\mathbf{S^\prime}$ a move $u\rightarrow v$ would result in recontamination in $\mathbf{G}$}
\STATE send a ``guard'' to $u$
\STATE execute $u\rightarrow v$
\ENDIF
\STATE \textbf{Output:} The search $\mathbf{S}$ obtained by combining $\mathbf{S^\prime}$ with the guard moves
\end{algorithmic}
\label{alg:random}
\end{algorithm}

Note that, by construction, all clearing moves take place along edges of the
spanning tree $\mathbf{T}$.

In the above description we have used the term \textquotedblleft
guard\textquotedblright. Stated informally, the searchers play two
roles:\ \textquotedblleft\emph{tree searchers}\textquotedblright\ perform the
clearing moves, always along the edges of the spanning tree; \textquotedblleft%
\emph{guards}\textquotedblright\ are stationary and block potential
recontamination paths. However, note that a particular searcher can change
roles during the course of the search.

The main advantage of the GSST\ algorithm is that it is \emph{fast}. A random
spanning tree $\mathbf{T}$ can be quickly generated and searched. Since
$\mathbf{T}$ and $\mathbf{G}$ have the same node set, node-clearing
$\mathbf{T}$ in a \emph{node-monotone manner} results in node-clearing
$\mathbf{G}$ as well. The main issue is: \emph{how many guards will be
required to block recontamination through non-tree edges of }$G$? The number
usuallly turns out to be quite reasonable, because (a)\ guards can be reused
and (b)\ tree searchers\ can also be used as guards when they do not perform
clearing moves.

Both the labeling and the traversal phase of GSST\ can be executed in either a
centralized or distributed manner; the latter is useful for robotic
applications, where each robot can share some of the computational load. In
the distributed implementation, all searchers share the underlying spanning
tree and labeling. When a searcher reaches a node, he checks to see if he can
move without recontamination. If he can, he determines his next move based on
the traversal strategy, and he shares this move with the team.

Hence GSST has short execution time and can be run repeatedly (in reasonable
time), using many different (randomly selected)\ spanning trees. \ Our
algorithm depends on the quick discovery of a spanning tree corresponding to a
minimal or near-minimal search (Theorem \ref{prp0501}). Our experiments in
Section \ref{sec06} show that GSST discovers near-minimal searches in only a
short time for several families of complex graphs.

An important characteristic of the GSST\ algorithm is its \textquotedblleft
anytime-ness\textquotedblright. Anytime algorithms return a partial answer
before completion and they keep providing improved answers, the improvement
increasing with computation time \cite{Zilberstein}. GSST\ has these
characteristics, as will be understood in Section \ref{sec0502}. Namely,
GSST\ is characterized by monotonicity (the solution only improves over time),
recognizable quality (the quality of the solution, i.e. number of searchers,
can be determined at run time), consistency (the algorithm will not spend too
much time finding a single solution), and interruptibility.

\subsection{The GSST Algorithms}

\label{sec0502}

A detailed description of the GSST algorithm is given by the following
pseudocode on p.\pageref{gsst}. The notation $\mathbf{S}=\mathbf{S}|\left(
u\rightarrow v\right)  $ means that the move $\left(  u\rightarrow v\right)  $
is appended to the previously determined search schedule $\mathbf{S}$ (i.e.,
becomes the next move of $\mathbf{S}$). Several subroutines appearing in the
following listing will be discussed presently.

\begin{algorithm}[h]
\label{gsst}
\caption{GSST}
\begin{algorithmic}
\STATE \textbf{Input:} $\mathbf{G}$: a graph; $M$:  no. of spanning trees to use.
\STATE $\mathbf{S}_{min}=\emptyset$
\STATE $s_{min}=\infty$
\FOR{$m=1:M$}
\STATE $\mathbf{S}=\emptyset$
\STATE $\mathbf{T}$=\textbf{GenerateTree}($\mathbf{G}$)
\STATE Randomly choose root $u_0$
\STATE $\mathbf{S}=\mathbf{S}|(0\rightarrow u_0)$
\STATE $V_N^C=\{  u_0 \}$, $E_N^C=\emptyset$
\STATE $V_N^D=V-\{  u_0 \}$, $E_N^D=E$
\STATE $\mathbf{G}_N^C=(V_N^C,E_N^C)$
\STATE $\lambda$=\textbf{R-Label(T)}
\WHILE{$V_N^D \neq \emptyset$}
\STATE $uv$=\textbf{SelectEdge}($\mathbf{G}_N^C,\mathbf{T},E_N^D,\lambda$)
\IF{a searcher can traverse $uv$ without recontamination}
\STATE Move that searcher to $u$ \emph{staying inside the clear graph}
\STATE $\mathbf{S}=\mathbf{S}|(u\rightarrow v)$
\STATE $V_C=V_N^C\cup\{  v \}$, $E_N^C=E_C\cup\{  uv \}$
\STATE $V_N^D=V_N^D-\{ v \}$, $E_N^D=E_N^D-\{  uv \}$
\STATE $\mathbf{G}_N^C=(V_N^C,E_N^C)$
\ELSE
\STATE $\mathbf{S}=\mathbf{S}|(0\rightarrow u_0)$
\ENDIF
\ENDWHILE
\IF{$\overline{sn}(\mathbf{S})<s_{min}$}
\STATE $\mathbf{S}_{min}=\mathbf{S}$
\STATE $s_{min}=\overline{sn}(\mathbf{S})$
\ENDIF
\ENDFOR
\STATE \textbf{Output:} Node clearing schedule $\mathbf{S}_{min}$.
\end{algorithmic}
\label{alg:random}
\end{algorithm}

The following remarks explain the operation of the algorithm.

\begin{enumerate}
\item Generate a random spanning tree $\mathbf{T}$ by \textbf{GenerateTree.}
We have used two different methods of random spanning tree generation.

\begin{enumerate}
\item The \emph{uniform} method is an implementation of Wilson's
\cite{Wilson1} \emph{loop erased random walk algorithm}.

\item The \emph{DFS} method selects a root node and randomly moves down the
tree in a depth-first manner. At each node, a random incident edge is chosen
and set as an edge in the spanning tree. A visited list is maintained, and
when a node is visited more than once, the edge used to reach it the second
time is set as a non-tree edge. This eliminates cycles in the graph and thus
generates a tree. When a leaf is reached, the algorithm recurses to ensure
that all nodes are included in the tree (i.e, it is a spanning tree of the
original graph). The motivation for this method is to bias towards spanning
trees that require fewer guards. The intuition is that a DFS traversal will
generate only a few nodes with non-tree edges, thus leading to few required guards.
\end{enumerate}

\item Label the edges of $\mathbf{T}$ by the \textbf{R-Label} algorithm (this,
a modification of Barriere's \textbf{Label} algorithm, is listed and discussed
in Appendix \ref{secA}).

\item While n-dirty nodes still exist, select an n-dirty edge $uv$ of
$\mathbf{T}$ (by \textbf{SelectEdge,} to be discussed presently). Let a
searcher traverse $uv$ if this does not cause node
recontamination\footnote{The line \textquotedblleft Move that searcher to $u$
staying inside the clear graph\textquotedblright\ is actually a
simplification, i.e., in the interest of brevity, we do not indicate how such
a path is obtained; however this is always possible, usually in more than one
ways.}; if no such searcher exists, then use a new searcher (originally placed
at the root)\ to traverse $uv$.

\item Repeat the process until all nodes are n-cleared. (Hence the algorithm
does not allow node recontamination and will generate as many searchers as
necessary to prevent it).

\item When all nodes have been n-cleared, a rooted IMC node clearing search
has been generated, which performs all its clearing moves along the edges of
$\mathbf{T}$.

\item Go back to step 1 and repeat the process with a new spanning tree.

\item After the maximum number of trees and (corresponding searches) has been
generated, return a search $\mathbf{S}_{\min}$ which attains $s_{\min}$, the
minimum value of $\overline{sn}\left(  \mathbf{S}\right)  $.
\end{enumerate}

The subroutine \textbf{SelectEdge} chooses an n-dirty edge $uv$
\emph{belonging to }$\mathbf{T}$ and adjacent to the current clear graph
$\mathbf{G}_{\mathbf{N}}^{C}$. There are several ways to perform this selection.

\begin{enumerate}
\item \textbf{Labeled Selection (L). }Select the next edge $uv$ of
$\mathbf{T}$ to be traversed according to the Barriere $\lambda$ labels (as in
algorithm \textbf{R-Search, }presented in Appendix \ref{secA}); however, if
traversing $uv$ would cause recontamination then select the next edge in the
Barriere sequence; if, at some stage of the search, traversing \emph{any} edge
of $\mathbf{T}$ would cause recontamination (i.e., if all searchers are stuck)
then introduce a new searcher at the root.

\item \textbf{Labeled Selection with Randomized Tie-breaking (LR).} Same as
the previous except that ties of edge labels are broken randomly.

\item \textbf{Randomized Selection (R). }Choose $uv$ randomly (without making
\emph{any} use of the Barriere labeling)\ by a uniform probability on all
n-dirty edges \emph{belonging to }$\mathbf{T}$ and adjacent to $\mathbf{G}%
_{\mathbf{N}}^{C}$:%
\[
\Pr\left(  uv\right)  =\left\{
\begin{array}
[c]{cl}%
c\text{ } & \text{if }u\text{ is n-clear}\ \text{and }uv\text{ is
n-dirty}\ \text{and an edge of }\mathbf{T}\text{\textbf{;}}\\
0 & \text{else.}%
\end{array}
\right.
\]

\item \textbf{Labeled Weighted Selection (LW). }This rule is intermediate
between R and L: edge selection is still random but, instead of a uniform
probability distribution, the probability of an edge $uv$ being selected is
inversely proportional to its Barriere $\lambda$ label.

\item \textbf{Label Dominated Selection} \textbf{(LD).} This can be done by
labeling edges that lead to parts of the graph that are trees (subtrees of the
graph). A list of searchers who can move without recontamination can be
maintained during search. If an edge adjacent to $V_{N}^{C}\left(  t\right)  $
leads to a subtree of the graph, and enough free searchers are available,
clearing this subtree can only improve the search strategies.
\end{enumerate}

By using each of the above rules in the \textquotedblleft
basic\textquotedblright\ GSST algorithm, we obtain \emph{ten} GSST variants:
uniform GSST-L, uniform GSST-LR, ... , uniform GSST-LD, DFS\ GSST-L, ...,
DFS\ GSST-LD. These variants (except for the two GSST-R's) utilize the
Barriere labeling, \emph{but do not necessarily produce a Barriere traversal
of the spanning tree}\footnote{This is true even of GSST-L, because an edge
which would be next in Barriere's traversal \ order may be temporarily skipped
if its traversal (at the current stage of the search)\ would cause
recontamination.}. In a sense Randomized Selection is the simplest or most
naive rule that can be used to select the next move of the search
schedule:\ every n-dirty edge of $\mathbf{T}$ (adjacent to the clear
graph)\ is equally likely to be selected. The remaining three rules can be
understood as ways to bias the probability by which edges are selected in some
meaningful way. The effectiveness of these rules will be judged by the
experiments of Section \ref{sec06}. From the theoretical point of view, we
will show in Section \ref{sec0503}\ that the uniform GSST-R and uniform
GSST-LD will find a minimal node clearing schedule with probability
$1-\alpha^{M}$ where $M$ is the number of iterations and $\alpha\in\left(
0,1\right)  $; we conjecture that this property does not hold for Barriere
selection. As a practical matter, the issue is how large $M$ has to be for
$1-\alpha^{M}$ to be sufficiently close to 1. However, the experiments of
Section \ref{sec06} indicate that the above rules find good search schedules
in very reasonable time.

Variants of GSST can also be produced by replacing the \textbf{GenerateTree}
subroutine with an \emph{exhaustive generation} of all spanning trees of
$\mathbf{G}$ (it can be used in conjunction with any of the above variants of
\textbf{SelectEdge}). To do this we have used Char's spanning tree enumeration
algorithm \cite{Jayakumar}. Exhaustive enumeration is feasible only for
relatively small graphs.

Finally note that the search schedules produced by (every variant of)\ GSST
are IMC. This holds for the search of both $\mathbf{T}$ and $\mathbf{G}$.
Indeed, for the search of $\mathbf{G}$ to work, the search of $\mathbf{T}$
\emph{must} be IMC. In other words, no obvious modification of GSST will
produce, for example, an internal, connected, \emph{non-monotone}
node-clearing search of $\mathbf{G}$. The question arises:\ how good is a
minimal IMC\ node clearing of $\mathbf{G}$ (as compared to, for example, an
internal but not monotone / connected node clearing)? This is the question
already hinted at in Remark \ref{prp0416}.

Let us close this section by repeating that \emph{the basic idea of GSST\ is
to perform all clearing moves along the edges of a spanning tree}. This idea
exploits the facts that (a)\ spanning trees can be both generated and searched
quickly and (b)\ blocking recontamination does not require an excessively
large number of guards (because a searcher can change roles as a guard and a
tree searcher).

\subsection{Completeness}

\label{sec0503}

\begin{definition}
\label{prp0504}Given a graph $\mathbf{G}$ and a search $\mathbf{S}$ of
$\mathbf{G}$, the \emph{frontier }\ at $t$ (under $\mathbf{S}$) is
\[
V_{N}^{F}\left(  t\right)  =\left\{  u:u\in V_{N}^{C}\left(  t\right)  \text{
and }\exists v:v\in V_{N}^{D}\left(  t\right)  ,uv\in E\right\}  ,
\]
i.e., the n-clear nodes which are connected to n-dirty nodes.
\end{definition}

\begin{lemma}
\label{prp0505}In a node search, for every $t$, the nodes $u\in V_{N}%
^{F}\left(  t\right)  $ are guarded.
\end{lemma}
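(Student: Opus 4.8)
The plan is to prove Lemma \ref{prp0505} by contradiction, showing that an unguarded frontier node would immediately cause recontamination, violating the very definition of the node game. Suppose, for some time $t$, there is a node $u\in V_{N}^{F}\left( t\right) $ which is unguarded. By the definition of the frontier (Definition \ref{prp0504}), $u$ is n-clear at $t$ and there exists a neighbor $v\in V_{N}^{D}\left( t\right) $ with $uv\in E$. The key observation is that the single-edge path $uv$ (consisting of the two nodes $u$ and $v$) connects the n-clear node $u$ to the n-dirty node $v$, and since $u$ is unguarded and $v$ is n-dirty (hence itself unguarded, as any node occupied by a searcher is n-clear by rule N2), this path is n-unguarded.

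First I would invoke rule N3, which states that an n-clear node becomes n-dirty whenever it is connected to an n-dirty node by an n-unguarded path. Applying N3 to the n-unguarded path $uv$ forces $u$ to be n-dirty at $t$, directly contradicting $u\in V_{N}^{C}\left( t\right) $. This contradiction establishes that every frontier node must in fact be guarded. The argument is essentially immediate once one unwinds the definitions, since the frontier is precisely the set of clear nodes adjacent to dirty nodes, and N3 is designed to penalize exactly this configuration when no guard is present.

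The only subtlety worth spelling out is the verification that the path $uv$ is genuinely n-unguarded. Recall that a path $u_{1}u_{2}\ldots u_{L}$ is n-unguarded iff every node $u_{i}$ on it is unguarded. Here the path has length two, with nodes $u$ and $v$: node $u$ is unguarded by our assumption, and node $v$ is unguarded because it is n-dirty (a guarded node is n-clear by N2, so an n-dirty node cannot be guarded). Hence both nodes on the path are unguarded, the path is n-unguarded, and N3 applies. I would note this step carefully because the definition of n-unguarded (in contrast to e-unguarded) requires \emph{all} nodes on the path to be unguarded, including the endpoints, and the two-node path $uv$ has only endpoints.

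I do not anticipate any serious obstacle; the main point is really bookkeeping of the definitions rather than any substantive difficulty. The lemma is a direct consequence of rule N3 together with the definition of the frontier, and the entire proof reduces to exhibiting a witnessing n-unguarded path of length two and applying the recontamination rule to derive a contradiction with the n-clearness of $u$.
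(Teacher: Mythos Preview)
Your proof is correct and follows essentially the same approach as the paper's own proof, which is the one-line observation that a frontier node is by definition n-clear and adjacent to an n-dirty node, and this is only possible if the node is guarded. You have simply made explicit the appeal to rules N2 and N3 that the paper leaves implicit.
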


\begin{proof}
A frontier node $u$ is by definition n-clear and adjacent to an n-dirty node
$v$. This is only possible if $u$ is guarded.
\end{proof}

\begin{definition}
\label{prp0506}Consider a rooted IMC node clearing search $\mathbf{S}$ of
$\mathbf{G}$. Let $t_{0}=0$ and suppose the clearing moves of $\mathbf{S}$
take place at times $t_{1},...,t_{N}$; let also $t_{0}=0$. The $n$-th
\emph{phase} of $\mathbf{S}$ (for $n=1,2,...,N$) is the time interval $\left[
t_{n-1}+1,t_{n}\right]  $, i.e. the interval between the $\left(  n-1\right)
$-th and $n$-th clearing move.
\end{definition}

The following remarks are rather obvious. For $m=2,3,...$, a \emph{target edge
}$u_{i}u_{m}$ (with $i\in\left[  1,m-1\right]  $) corresponds to the $m$-th
phase (here we take the root node to be $u_{1}$). While $t\in\left[
t_{m-1}+1,t_{m}\right]  $ the algorithm moves a searcher towards $u_{i}u_{m}$.
For $t\in\left[  t_{m-1}+1,t_{m}-1\right]  $, $u_{i}u_{m}$ is n-dirty, $u_{i}$
is n-clear, $u_{m}$ is n-dirty. At $t=t_{m}$ we have $\mathbf{S}\left(
t_{m}\right)  =u_{i}\rightarrow u_{m}$ and $u_{i}u_{m}$, $u_{m}$ are n-cleared.

\begin{lemma}
\label{prp0507}Given a graph $\mathbf{G}=\left(  V,E\right)  $ and a rooted
IMC node clearing search $\mathbf{S}$ of $\mathbf{G}$, produced by either
GSST-R or GSST-LW, let $t_{1}$, $t_{2}$, ... , $t_{N}$ $\ $be the times at
which clearing moves take place; let also $t_{0}=0$. Then for $n=2,...,N:$

\begin{enumerate}
\item for $t\in\left[  t_{n-1},t_{n}-1\right]  $: $V_{N}^{F}\left(  t\right)
=V_{N}^{F}\left(  t_{n-1}\right)  $;

\item for $t\in\left[  t_{n-1},t_{n}-1\right]  $: every $u\in V_{N}^{F}\left(
t\right)  $ contains exactly one searcher, except one node $u\left(  t\right)
$ which possibly contains two searchers;

\item for $t=t_{n}$: every $u\in V_{N}^{F}\left(  t_{n}\right)  $ contains
exactly one searcher.
\end{enumerate}
\end{lemma}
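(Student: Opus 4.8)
The plan is to prove all three parts simultaneously by induction on the phase index $n$, carrying as the inductive hypothesis the statement of part 3 for the previous phase: that at time $t_{n-1}$ every frontier node carries exactly one searcher. The base case is the configuration right after the first clearing move. At $t_1$ the root $u_0$ is the unique clear node, so (the graph being connected with $N\geq 2$) we have $V_N^F(t_1)=\{u_0\}$, and $u_0$ holds the single searcher placed there; hence the hypothesis holds for $n=2$. I would dispatch part 1 first, since it is essentially independent of the searcher bookkeeping. By Definition \ref{prp0506} no clearing move occurs strictly between $t_{n-1}$ and $t_n$, and by monotonicity no node is ever recontaminated; hence $V_N^C(t)$ is constant on $[t_{n-1},t_n-1]$, and since $V_N^D(t)=V\setminus V_N^C(t)$ and the frontier depends only on $V_N^C(t)$ and the fixed edge set, $V_N^F(t)=V_N^F(t_{n-1})$ throughout the interval.

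For parts 2 and 3 I would track the searchers during phase $n$, whose target edge is some $u_iu_n$ with $u_i$ clear and $u_n$ dirty, so that $u_i\in V_N^F(t_{n-1})$ and, by the inductive hypothesis together with Lemma \ref{prp0505}, $u_i$ carries exactly one searcher at $t_{n-1}$. The key structural observation about GSST-R / GSST-LW is that a phase consists of moving a single searcher (the ``traveler'') to $u_i$ and then sliding it across $u_iu_n$ at $t_n$; all other searchers remain stationary as the guards supplied by the inductive hypothesis. Two cases arise. If clearing $u_n$ leaves $u_i$ with no remaining dirty neighbour, then $u_i$ drops out of the frontier and its own guard may be slid directly, so the phase is the single move $u_i\rightarrow u_n$ and no node is ever doubled. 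Otherwise $u_i$ must stay guarded, so a second searcher---either one currently parked at a non-frontier node, or a fresh one placed at the root---is routed to $u_i$ through the connected clear graph $\mathbf{G}_N^C$. At each intermediate step the traveler occupies one node $u(t)$; if $u(t)$ is a frontier node it then holds two searchers while every other frontier node still holds the lone guard it had at $t_{n-1}$, which is exactly part 2.

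Part 3 then follows by examining the clearing move itself: at $t_n$ the traveler slides off $u_i$ into $u_n$, so $u_i$ is restored to a single searcher and $u_n$ receives one. Clearing $u_n$ can only \emph{remove} nodes from the frontier (those whose last dirty neighbour was $u_n$) and can only \emph{add} $u_n$; every surviving old frontier node kept its guard untouched, $u_i$ now holds exactly one, and $u_n$ (if it lies in $V_N^F(t_n)$) holds the traveler, so every node of $V_N^F(t_n)$ holds exactly one searcher, re-establishing the hypothesis for phase $n+1$.

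The delicate point---and the one I would spend the most care on---is justifying that exactly one traveler is ever in motion and that routing it never forces a second doubled node nor a recontamination. The argument is that moving a searcher along a path inside the connected clear graph never vacates a frontier node: the frontier node's own guard stays put while the traveler merely passes through, raising its count to two and then lowering it back to one, so monotonicity is preserved and, by Theorem \ref{prp0501}, the phase still clears exactly one new node. I would also stress that the lemma constrains only \emph{frontier} nodes, so any ``spent'' searchers left behind at non-frontier nodes are irrelevant to the counts in parts 2 and 3; this is precisely what keeps the exceptional set down to the single node $u(t)$.
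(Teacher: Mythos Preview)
Your proposal is correct and follows essentially the same approach as the paper: induction on the phase index, with the same case split according to whether the guard at $u_i$ can itself slide across the target edge (paper's Case I, your first case) or whether a separate ``traveler'' must be routed through the clear graph to $u_i$ (paper's Cases II and III, which you merge into your second case). The paper's proof is slightly terser about the routing step and splits your second case into two (free searcher available vs.\ new searcher placed at the root), but the invariant carried through the induction and the verification at the clearing time $t_n$ are the same.
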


\begin{proof}
The proof is by induction on $n$. Items 1, 2, 3 of the theorem hold for $n=1$,
$t\in\left[  t_{0},t_{1}\right]  =\left\{  0,1\right\}  $. Suppose they also
hold up to the $m$-th phase. In the interval $\left[  t_{m}+1,t_{m+1}\right]
$ a target edge $u_{i}u_{m+1}$ (with $i\in\left[  1,m\right]  $) is selected
and an available searcher is sent towards $u_{i}u_{m+1}$. Because GSST-R /
GSST-LW always selects for clearing an edge adjacent to the clear graph,
$u_{i}\in$ $V_{N}^{F}\left(  t_{m}\right)  $, $u_{m+1}\in$ $V_{N}^{D}\left(
t_{m}\right)  $. There are three cases.

\noindent\textbf{I}.\ $u_{i}$ is neighbor of a single n-dirty node, namely
$u_{m+1}$. In this case $t_{m+1}=t_{m}+1$ (i.e., $u_{m+1}$ is n-cleared in one
step) and $\left[  t_{m},t_{m+1}\right]  =\left\{  t_{m},t_{m}+1\right\}  $.
Node $u_{i}\notin V_{N}^{F}\left(  t_{m+1}\right)  $; \ node $u_{m+1}$ may or
may not belong to $V_{N}^{F}\left(  t_{m+1}\right)  $ but, at any rate, it
contains exactly one searcher; no searchers enter or exit any other nodes,
hence (by the inductive hypothesis)\ all $u\in V_{N}^{F}\left(  t_{m+1}%
\right)  $ contain exactly one searcher.

\noindent\textbf{II}.\ $u_{i}$ is neighbor of more than one n-dirty nodes, one
of which is $u_{m+1}$, and there are free searchers. Since $u_{i}$ is a
frontier node, by hypothesis it contains a single searcher who, consequently,
is stuck. However, if free searchers are available inside n-clear,
non-frontier nodes, one of these searchers will be sent to $u_{i}u_{m+1}$ by a
sequence of moves. At every $t$ during this sequence, the searcher may enter a
frontier node $w$; for that particular $t$, $w$ will be the only frontier node
which contains two searchers. At $t=t_{m+1}-1$ the searcher will be located at
$u_{i}$ (which will now contain two searchers) and at $t_{m+1}$ he will enter
$u_{m+1}$; this leaves at $t_{m+1}$ every $u\in V_{N}^{F}\left(  t_{m}\right)
$ with a single searcher and also places a searcher at $u_{m+1}$. The only
node which may have been added to the frontier is $u_{m+1}$ which contains a
single searcher; every other node $u\in V_{N}^{F}\left(  t+1\right)  $ was
previously in the frontier, contained a single searcher and, if a searcher
entered $u$ at some $t\in\left[  t_{m}+1,t_{m+1}-1\right]  $, it exited $u$ at
$t+1$; hence every such node at $t_{m+1}$ contains a single searcher.

\noindent\textbf{III}.\ The final case is when all frontier-located searchers
are stuck and there are no searchers inside n-clear, non-frontier nodes. In
this case a new searcher is placed at the root node and the rest of the
analysis is identical to that of Case II. Hence items 1, 2, 3 also hold for
$\left[  t_{m}+1,t_{m+1}\right]  $ and we can complete the induction for
$n=1,2,...,N$.
\end{proof}

\begin{lemma}
\label{prp0508}Given a graph $\mathbf{G}=\left(  V,E\right)  $ and a tree
sequence $\left(  \mathbf{T}_{0},\mathbf{T}_{1},...,\mathbf{T}_{N}\right)  $
which satisfies conditions \textbf{D1-D4} of Theorem \ref{prp0501}.
Then\ uniform GSST-R / GSST-LW with $M=1$ (i.e., using a single spanning tree)
has a nonzero probability of producing a search $\mathbf{S}$ which generates
$\left(  \mathbf{T}_{0},\mathbf{T}_{1},...,\mathbf{T}_{N}\right)  $.
\end{lemma}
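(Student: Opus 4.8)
The plan is to show that, given any valid tree sequence $\left(\mathbf{T}_0,\ldots,\mathbf{T}_N\right)$ satisfying \textbf{D1--D4}, the randomized GSST variant can reproduce exactly this sequence, and that each step along the way has strictly positive probability. Since the overall probability of generating $\left(\mathbf{T}_0,\ldots,\mathbf{T}_N\right)$ is a product of finitely many positive factors, it will itself be positive. I would decompose the argument into two independent sources of randomness: first, the choice of the spanning tree $\mathbf{T}=\mathbf{T}_N$ together with the root $u_0$ by \textbf{GenerateTree}; second, the sequence of edge selections made by \textbf{SelectEdge} during the clearing loop.

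First I would handle the tree-generation step. The sequence $\left(\mathbf{T}_0,\ldots,\mathbf{T}_N\right)$ determines a specific spanning tree $\mathbf{T}_N$ of $\mathbf{G}$ (by \textbf{D2}) and, by \textbf{D4} with $n=1$, a specific root $u_1=u_0$. Under the uniform (Wilson) generation method, every spanning tree of $\mathbf{G}$ is produced with positive probability, and the root is chosen uniformly at random, hence with probability $1/N>0$. So $\Pr(\mathbf{T}=\mathbf{T}_N\text{ and root}=u_0)>0$. I would note that this is precisely where the \emph{uniform} variant matters: a biased generator (such as DFS) might assign zero probability to certain spanning trees, which is exactly why the completeness claim is restricted to the uniform GSST-R / GSST-LW.

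Next I would condition on having generated $\mathbf{T}_N$ with root $u_0$, and argue that GSST-R / GSST-LW can realize the prescribed order of clearing moves. The sequence \textbf{D4} specifies, for each phase $n=2,\ldots,N$, a target edge $u_iu_n$ with $u_i\in V_{n-1}$; by construction $u_i$ is a node already in the clear tree $\mathbf{T}_{n-1}$ and $u_n$ is the new node to be added. The key observation is that at the start of the $n$-th phase, the edge $u_iu_n$ is an n-dirty edge of $\mathbf{T}$ adjacent to the current clear graph $\mathbf{G}_N^C$, so it lies in the support of \textbf{SelectEdge}: under Randomized Selection it has probability $c>0$, and under Labeled Weighted Selection it has probability inversely proportional to its (finite, positive) $\lambda$ label, hence again positive. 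Thus at each phase there is a positive probability that \textbf{SelectEdge} picks precisely the target edge dictated by \textbf{D4}. Using Lemma \ref{prp0507} I would confirm that, once this edge is selected, the algorithm is guaranteed to route a searcher to $u_i$ without causing recontamination and execute $u_i\rightarrow u_n$, thereby producing the clearing move that grows $\mathbf{T}_{n-1}$ into $\mathbf{T}_n$ exactly as \textbf{D4} requires.

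Finally I would assemble these pieces. Multiplying the positive tree-generation probability by the $N-1$ positive edge-selection probabilities (one per phase $n=2,\ldots,N$) yields a strictly positive lower bound on the probability that GSST-R / GSST-LW with $M=1$ generates the prescribed sequence $\left(\mathbf{T}_0,\ldots,\mathbf{T}_N\right)$. The main obstacle I anticipate is not the probabilistic bookkeeping, which is routine, but rather justifying rigorously that selecting the target edge \emph{forces} the correct clearing move --- i.e., that the algorithm's recontamination check never blocks the intended move and never inserts spurious clearing moves out of order. This is where I would lean heavily on Lemma \ref{prp0507}, which characterizes the frontier and guard structure during each phase, to show that the searcher can always reach $u_i$ staying inside the clear graph and clear $u_n$ in a monotone fashion; establishing that this holds for the specific target edge (and not merely for \emph{some} edge) is the delicate part of the argument.
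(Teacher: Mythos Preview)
Your proposal is correct and follows essentially the same approach as the paper: factor the probability as $\Pr(\mathbf{T}_N)$ (positive by Wilson's theorem for the uniform generator) times the conditional probabilities $\Pr(\mathbf{T}_n \mid \mathbf{T}_N,\mathbf{T}_0,\ldots,\mathbf{T}_{n-1})$, each positive because the target edge $u_iu_n$ always lies in the support of \textbf{SelectEdge} under both the R and LW rules. The paper's proof is terser and does not invoke Lemma~\ref{prp0507}; your routing worry is unnecessary, since if the selected edge is not immediately traversable the while loop just places a new searcher at the root and iterates, and with positive probability \textbf{SelectEdge} selects the same target edge again (now traversable via the free searcher moving through the connected clear graph) --- so positivity of each conditional factor follows directly from the algorithm's control flow, without any appeal to the frontier structure.
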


\begin{proof}
The probability that Algorithm 1 generates the tree sequence $\left(
\mathbf{T}_{0},\mathbf{T}_{1},...,\mathbf{T}_{N}\right)  $ is
\[
\Pr\left(  \mathbf{T}_{0},\mathbf{T}_{1},...,\mathbf{T}_{N}\right)  =\left[
{\displaystyle\prod\limits_{n=1}^{N}}
\Pr\left(  \mathbf{T}_{n}|\mathbf{T}_{N},\mathbf{T}_{0},...,\mathbf{T}%
_{n-1}\right)  \right]  \Pr\left(  \mathbf{T}_{0}|\mathbf{T}_{N}\right)
\Pr\left(  \mathbf{T}_{N}\right)  .
\]
Note that the conditioning in the above expression \emph{always} includes
$\mathbf{T}_{N}$, since this is the first choice made in running GSST-R /
GSST-LW. Now obviously, $\Pr\left(  \mathbf{T}_{0}|\mathbf{T}_{N}\right)  =1$.
By Wilson's Theorem 1 \cite{Wilson1}, $\Pr\left(  \mathbf{T}_{N}\right)  >0$
for every spanning tree $\mathbf{T}_{N}$. Also, $\Pr\left(  \mathbf{T}%
_{n}\mathbf{|T}_{N}\mathbf{,T}_{0},\mathbf{T}_{1}\mathbf{,...,T}_{n-1}\right)
$ is the probability of expanding (at the $n$-th step)\ $\mathbf{T}_{n-1}$ by
the edge $u_{i}u_{n}\in\mathbf{E}_{n}-\mathbf{E}_{n-1}$ which, by the
construction of both GSST-R and GSST-LW, is always positive. Finally,
$\Pr\left(  \mathbf{T}_{N}\mathbf{|\mathbf{T}}_{N}\mathbf{,T}_{0}%
,\mathbf{T}_{1}\mathbf{,...,T}_{N-1}\right)  =1$. Hence $\Pr\left(
\mathbf{T}_{0},\mathbf{T}_{1},...,\mathbf{T}_{N}\right)  >0$ for every
sequence $\mathbf{T}_{1},...,\mathbf{T}_{N}$.
\end{proof}

\begin{lemma}
\label{prp0509}Given a graph $\mathbf{G}=\left(  V,E\right)  $ and a rooted
IMC node clearing search $\mathbf{S}$ of $\mathbf{G}$; let $\left(
\mathbf{T}_{0},\mathbf{T}_{1},...,\mathbf{T}_{N}\right)  $ be the tree
sequence generated by $\mathbf{S}$. Let $\mathbf{S}^{\prime}$ be a search
produced by either GSST-R or GSST-LW\ and \emph{also} generating $\left(
\mathbf{T}_{0},\mathbf{T}_{1},...,\mathbf{T}_{N}\right)  $. Then
$\overline{sn}\left(  \mathbf{S}\right)  \geq\overline{sn}\left(
\mathbf{S}^{\prime}\right)  $.
\end{lemma}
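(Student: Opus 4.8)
The plan is to pin down exactly how many searchers GSST-R/GSST-LW spends on the given tree sequence, and then show that \emph{any} IMC search generating that same sequence is forced to spend at least as many. The starting observation is that $\mathbf{S}$ and $\mathbf{S}'$ generate the same tree sequence $\left(\mathbf{T}_0,\dots,\mathbf{T}_N\right)$; hence, by Theorem \ref{prp0501}, at the start of each phase $n$ the clear node set is exactly $V_{n-1}$ and the dirty set is $V\setminus V_{n-1}$ for \emph{both} searches. Consequently the frontier during phase $n$ is the same set for $\mathbf{S}$ and $\mathbf{S}'$; write $f_n$ for its size. The target edge $u_iu_n$ used to clear $u_n$ is also common to both searches (condition \textbf{D4}), so whether the clear endpoint $u_i$ still has a dirty neighbour after $u_n$ is cleared is determined by the tree sequence and $\mathbf{G}$ alone, and therefore holds simultaneously for both searches.

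First I would compute $\overline{sn}\left(\mathbf{S}'\right)$. Since GSST is internal, searchers are never removed, so $sn\left(\mathbf{S}',\cdot\right)$ is non-decreasing and $\overline{sn}\left(\mathbf{S}'\right)$ equals the final count. Reading off the case analysis in the proof of Lemma \ref{prp0507}, the count stays constant through Cases I and II (in Case I the frontier guard at $u_i$ performs the slide itself; in Case II a \emph{free} searcher already present in the clear graph is reused) and rises by exactly one only in Case III, when a fresh searcher is placed at the root. Thus the count is a step function that steps up only at Case III events. Let $n^\ast$ be the phase of the last such event. By the definition of Case III there are, at that instant, no free searchers in n-clear non-frontier nodes, and by Lemma \ref{prp0507} each frontier node carries exactly one searcher; hence the count just before the event equals $f_{n^\ast}$ and just after it equals $f_{n^\ast}+1$. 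As this is the last increase, $\overline{sn}\left(\mathbf{S}'\right)=f_{n^\ast}+1$. (If no Case III event occurs then $\overline{sn}\left(\mathbf{S}'\right)=1$ and the lemma is immediate, so I assume one exists.)

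Next I would produce the matching lower bound for $\mathbf{S}$ by examining its own phase $n^\ast$. Case III arises only when the guard at $u_i$ is stuck, i.e.\ when $u_i$ retains a dirty neighbour after $u_{n^\ast}$ is cleared; by the first paragraph this property of phase $n^\ast$ holds for $\mathbf{S}$ as well. Consider $\mathbf{S}$ immediately before it performs the slide $u_i\rightarrow u_{n^\ast}$. All $f_{n^\ast}$ frontier nodes must be guarded, by Lemma \ref{prp0505}. Moreover $u_i$ is still a frontier node after the slide, so if only one searcher sat on $u_i$ it would be left unguarded while adjacent to a dirty node, giving an n-unguarded path that recontaminates $u_i$ (rule \textbf{N3}) and violating monotonicity. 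Hence $u_i$ must hold a second searcher at that instant, so $sn\left(\mathbf{S},\cdot\right)\geq f_{n^\ast}+1$ there, and therefore $\overline{sn}\left(\mathbf{S}\right)\geq f_{n^\ast}+1=\overline{sn}\left(\mathbf{S}'\right)$.

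The delicate point, and the step I expect to need the most care, is the bookkeeping that isolates Case III as the \emph{only} place where the GSST count can rise and that certifies the count there to equal the frontier size. This rests on extracting from the proof of Lemma \ref{prp0507} that, outside Case III, every searcher that is put into motion is either a frontier guard or a reused free searcher, so that no silent accumulation of idle searchers can push $\overline{sn}\left(\mathbf{S}'\right)$ above $f_{n^\ast}+1$. Once that is secured, the remainder is a direct appeal to Lemma \ref{prp0505} together with the fact that $\mathbf{S}$ and $\mathbf{S}'$ share the same frontiers and target edges.
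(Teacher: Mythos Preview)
Your proposal is correct but follows a different route from the paper. The paper argues by induction on the phases $n=0,1,\dots,N$, maintaining the running inequality $sn(\mathbf{S},t_n)\geq sn(\mathbf{S}',t_n')$ at every clearing time: whenever $\mathbf{S}'$ introduces a new searcher, the paper observes that the common frontier forces $\mathbf{S}$ to already have at least that many (the frontier must be guarded, and $\mathbf{S}$ cannot make the next clearing move with only the frontier guards either), so the increment on the $\mathbf{S}'$ side preserves the inequality. You instead jump straight to the \emph{last} Case~III event, use Lemma~\ref{prp0507} to pin down $\overline{sn}(\mathbf{S}')=f_{n^\ast}+1$ exactly, and then exhibit the same lower bound for $\mathbf{S}$ at the corresponding phase via Lemma~\ref{prp0505} plus the observation that $u_i$ must carry two searchers at the moment of the slide. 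Your argument is shorter and more direct, at the cost of leaning more explicitly on the internals of Lemma~\ref{prp0507} (you need both that the count rises only at Case~III and that the count there equals the frontier size on the nose). The paper's induction never needs the exact value of $sn(\mathbf{S}',\cdot)$, only the phase-by-phase comparison, which makes it slightly more self-contained; but both arguments rest on the same two facts---the shared frontier/clearing edge at every phase, and the fact that a Case~III configuration forces any IMC search to carry one searcher beyond the frontier guards.
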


\begin{proof}
The proof is exactly the same for GSST-R and GSST-LW, so we only prove the
first one, by induction. Let $t_{1},...,t_{N}$ be the clearing times of
$\mathbf{S}$ and $t_{1}^{\prime},...,t_{N}^{\prime}$ be the clearing times of
$\mathbf{S}^{\prime}$. Also let $t_{0}=t_{0}^{\prime}=0.$

At $t_{0}=t_{0}^{\prime}=0\ $we have $sn\left(  \mathbf{S}^{\prime},0\right)
=sn\left(  \mathbf{S},0\right)  =0.$

The only times at which $sn\left(  \mathbf{S}^{\prime},t\right)  $ may change
are $1$, $t_{1}^{\prime}+1,...,t_{N-1}^{\prime}+1$ . Suppose that%
\[
sn\left(  \mathbf{S},t_{n}\right)  \geq sn\left(  \mathbf{S}^{\prime}%
,t_{n}^{\prime}\right)  .
\]
Further, suppose that at $t_{n}^{\prime}+1$ a new searcher is introduced in
$\mathbf{S}^{\prime}$. This can only happen (in the $\mathbf{S}^{\prime}$
search) if all of the following hold:

\begin{enumerate}
\item at $t_{n}^{\prime}$ exactly $\left\vert V_{N}^{F}\left(  t\right)
\right\vert $ searchers exist in $\mathbf{G}$;

\item there are no searchers inside nodes $u\in V_{N}^{C}\left(  t_{n}%
^{\prime}\right)  -V_{N}^{F}\left(  t_{n}^{\prime}\right)  $ (i.e., all
searchers are located inside frontier nodes);

\item all searchers are stuck (i.e., moving a searcher out of a frontier node
$u$ exposes $u$ to recontamination).
\end{enumerate}

The sequence $\left(  \mathbf{T}_{0},\mathbf{T}_{1},...,\mathbf{T}_{N}\right)
$ along with the clearing times determines the frontier $V_{N}^{F}\left(
t\right)  $ for every $t$. Hence $\mathbf{S}^{\prime}$ at $t_{n}^{\prime}$ has
the same frontier as $\mathbf{S}$ at $t_{n}$. If conditions 1-3 above hold in
$\mathbf{S}^{\prime}$, then every searcher is located in a frontier \ node and
is stuck. It is possible that non-stuck searchers exist in $\mathbf{S}$
(located either in frontier or non-frontier nodes) but this also means that
$sn\left(  \mathbf{S},t_{n}\right)  \geq sn\left(  \mathbf{S}^{\prime}%
,t_{n}^{\prime}\right)  +1$; hence adding a searcher in $\mathbf{S}^{\prime}$
at $t_{n}^{\prime}+1$ preserves
\[
sn\left(  \mathbf{S},t_{n}\right)  \geq sn\left(  \mathbf{S}^{\prime}%
,t_{n}^{\prime}+1\right)  .
\]
Since no searchers are added in $\mathbf{S}^{\prime}$ for $t\in\left[
t_{n}^{\prime}+2,t_{n+1}^{\prime}\right]  $ and no searchers are ever removed
in $\mathbf{S}$ (i.e., $sn\left(  \mathbf{S},t_{n+1}\right)  \geq sn\left(
\mathbf{S},t_{n}\right)  $) we also get
\[
sn\left(  \mathbf{S},t_{n+1}\right)  \geq sn\left(  \mathbf{S}^{\prime
},t_{n+1}^{\prime}\right)  .
\]
From the above inequality inductively we get $sn\left(  \mathbf{S}%
,t_{N}\right)  \geq sn\left(  \mathbf{S}^{\prime},t_{N}^{\prime}\right)
\ $which proves the Lemma.
\end{proof}

\begin{theorem}
\label{prp0510}Given a graph $\mathbf{G=}\left(  V,E\right)  $.

\begin{enumerate}
\item Uniform GSST-R will generate a minimal rooted IMC clearing node search
of $\mathbf{G}$ with probability greater than or equal to $1-\alpha_{1}^{M}$
where $M$ is the number of iterations and $\alpha_{1}\in\left(  0,1\right)  $.

\item Uniform GSST-LW will generate a minimal rooted IMC clearing node search
of $\mathbf{G}$ with probability greater than or equal to $1-\alpha_{2}^{M}$
where $M$ is the number of iterations and $\alpha_{2}\in\left(  0,1\right)  $.
\end{enumerate}
\end{theorem}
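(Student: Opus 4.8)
The plan is to assemble the preceding lemmas into a single-iteration success probability and then close with an independence-plus-geometric-tail argument. I will argue part 1 (uniform GSST-R) in detail; part 2 (uniform GSST-LW) is word-for-word identical with a different constant, since Lemmas \ref{prp0508} and \ref{prp0509} are stated for both variants.

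First I would fix a minimal rooted IMC node clearing search $\mathbf{S}^{\ast}$ of $\mathbf{G}$, so that $\overline{sn}\left(\mathbf{S}^{\ast}\right)=s_{N}^{imc}\left(\mathbf{G}\right)$. By Theorem \ref{prp0501}, $\mathbf{S}^{\ast}$ generates a tree sequence $\left(\mathbf{T}_{0},\mathbf{T}_{1},\ldots,\mathbf{T}_{N}\right)$ satisfying conditions \textbf{D1}--\textbf{D4}; note that $\mathbf{T}_{1}$ already encodes the root, so reproducing the whole sequence amounts to reproducing both the root and the clearing order of $\mathbf{S}^{\ast}$. By Lemma \ref{prp0508}, a single iteration of uniform GSST-R reproduces this exact tree sequence with some probability $p>0$. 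The key consequence I would then extract is that \emph{any} GSST-R search $\mathbf{S}^{\prime}$ generating $\left(\mathbf{T}_{0},\ldots,\mathbf{T}_{N}\right)$ is itself minimal: Lemma \ref{prp0509} gives $\overline{sn}\left(\mathbf{S}^{\prime}\right)\leq\overline{sn}\left(\mathbf{S}^{\ast}\right)=s_{N}^{imc}\left(\mathbf{G}\right)$, while $\mathbf{S}^{\prime}$ is itself a rooted IMC node clearing search and hence $\overline{sn}\left(\mathbf{S}^{\prime}\right)\geq s_{N}^{imc}\left(\mathbf{G}\right)$; therefore $\overline{sn}\left(\mathbf{S}^{\prime}\right)=s_{N}^{imc}\left(\mathbf{G}\right)$. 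Consequently the event ``iteration $m$ outputs a minimal search'' contains the event ``iteration $m$ reproduces $\left(\mathbf{T}_{0},\ldots,\mathbf{T}_{N}\right)$'', so each iteration independently outputs a minimal search with probability at least $p$.

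To finish, I would invoke that the $M$ iterations of the outer \textbf{FOR} loop are mutually independent and identically distributed (each draws a fresh random spanning tree via Wilson's algorithm, a fresh random root, and fresh random edge selections governed by the same law), and that GSST retains the schedule minimizing $\overline{sn}$ over all iterations. Hence GSST fails to return a minimal search only if every one of the $M$ iterations fails, an event of probability at most $\left(1-p\right)^{M}$. Setting $\alpha_{1}=1-p\in\left(0,1\right)$ yields the claimed lower bound $1-\alpha_{1}^{M}$; for part 2 the same chain of reasoning, now using the positive single-iteration reproduction probability $p^{\prime}$ of uniform GSST-LW supplied by Lemma \ref{prp0508} and setting $\alpha_{2}=1-p^{\prime}$, gives $1-\alpha_{2}^{M}$.

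The only genuinely delicate point is that the per-iteration success probability must be \emph{both} strictly positive \emph{and} the same across iterations: the former is exactly the content of Lemma \ref{prp0508} (and ultimately of Wilson's theorem together with the strictly positive edge-expansion probabilities of GSST-R/GSST-LW), while the latter follows from the iterations being i.i.d.\ by construction of the algorithm. Everything else is bookkeeping, so I do not expect a substantive obstacle; the argument is a clean composition of Theorem \ref{prp0501} and Lemmas \ref{prp0508}--\ref{prp0509} with a standard geometric-tail estimate.
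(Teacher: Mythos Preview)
Your proposal is correct and follows essentially the same approach as the paper: fix a minimal search, invoke Lemma~\ref{prp0508} for a positive single-iteration probability of reproducing its tree sequence, use Lemma~\ref{prp0509} to conclude that any such reproduction is itself minimal, and finish with the independent-iterations tail bound. Your write-up is slightly more careful than the paper's (you make explicit the containment of events and the i.i.d.\ structure of the outer loop), but the argument is the same.
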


\begin{proof}
The proof is exactly the same for GSST-R and GSST-LW, so we only prove the
first one. $\mathbf{G}$ has at least one minimal rooted IMC node \ clearing
search $\mathbf{S}$ of $\mathbf{G}$. Let $\left(  \mathbf{T}_{0}%
,\mathbf{T}_{1},...,\mathbf{T}_{N}\right)  $ be the tree sequence generated by
$\mathbf{S}$. By Lemma \ref{prp0508}, GSST-R has a nonzero probability, call
it $\beta_{1}$, of generating \emph{in a single iteration} a search
$\mathbf{S}^{\prime}$ with the same tree sequence as $\mathbf{S}$. Then, by
Lemma \ref{prp0509},
\[
\overline{sn}\left(  \mathbf{S}\right)  \geq\overline{sn}\left(
\mathbf{S}^{\prime}\right)  .
\]
Since $\mathbf{S}$ is minimal, $\overline{sn}\left(  \mathbf{S}\right)
=\overline{sn}\left(  \mathbf{S}^{\prime}\right)  $ and so $\mathbf{S}%
^{\prime}$ is minimal too. Now, the probability of \emph{not} generating
$\mathbf{S}^{\prime}$ in a single iteration is $\alpha_{1}=1-\beta_{1}$; and
the probability of \emph{not} generating $\mathbf{S}^{\prime}$ in $M$
iterations is $\alpha_{1}^{M}=\left(  1-\beta_{1}\right)  ^{M}$, while the
probability of generating $\mathbf{S}^{\prime}$ in $M$ iterations is
$1-\alpha_{1}^{M}$.
\end{proof}

Finally, let us mention that the GSST algorithm can be modified to produce an
edge- rather than node-clearing search using the following theorem.

\begin{center}

\end{center}

\begin{theorem}
\label{prp0511}Given a graph $\mathbf{G}$ and an IMC node clearing search
$\mathbf{S}$ using $K$ searchers, there is an edge clearing search
$\mathbf{S}^{\prime}$ using either $K$ or $K+1$ searchers.
\end{theorem}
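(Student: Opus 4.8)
The plan is to build $\mathbf{S}'$ directly from $\mathbf{S}$ by keeping every move of $\mathbf{S}$ and inserting a few extra ``sweeping'' moves that traverse precisely the edges $\mathbf{S}$ never slides along. First I would invoke Theorem \ref{prp0501}: the clearing moves of the rooted IMC node search $\mathbf{S}$ generate a spanning tree $\mathbf{T}$ of $\mathbf{G}$, clear the nodes in some order $u_1,u_2,\dots,u_N$ (with $u_1$ the root), and perform every clearing slide along a tree edge. Hence when $\mathbf{S}$ is replayed move-for-move in the edge game it traverses every tree edge, and the only edges that are never traversed are the \emph{chords} (the non-tree edges of $\mathbf{G}$). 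So the whole task reduces to traversing each chord exactly once, at a moment when doing so is safe, using a single auxiliary searcher.

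For a chord $xy$, exactly one endpoint is cleared later than the other (the $N$ clearing moves clear distinct nodes at distinct times); say $y=u_n$ is cleared last, at time $t_n$. I would sweep $xy$ immediately after $t_n$ by the three moves: place the auxiliary searcher on $x$, slide it across $xy$ to $y$, then remove it. The point is that at time $t_n^{+}$ both endpoints are guarded: $y$ has just been entered, while $x$ was n-clear and adjacent to the still-n-dirty $y$ throughout the interval up to $t_n-1$, hence was a frontier node and guarded by Lemma \ref{prp0505}; and since the clearing slide at $t_n$ has the form $u_i\rightarrow u_n$ with $u_i\neq x$ (that slide uses the tree edge, not the chord), $x$ keeps its searcher through $t_n$. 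Thus the sweep traverses $xy$ between two guarded endpoints and leaves $xy$ e-clear with both endpoints still guarded, so the ``e-cleared-and-re-dirtied-in-the-same-step'' phenomenon of Remark \ref{prp0202} cannot occur. Since the auxiliary searcher is present only during a sweep, $\overline{sn}(\mathbf{S}')\le \overline{sn}(\mathbf{S})+1=K+1$.

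The main work, and the main obstacle, is to show that $\mathbf{S}'$ never e-recontaminates, so that at the end every edge is e-clear. I would prove by induction on $n$ the invariant that, at the checkpoint just after phase $n$'s sweeps, the set of e-clear edges equals the set of edges joining two already-cleared nodes, $\{xy\in E:\ x,y\in\{u_1,\dots,u_n\}\}$, and the guarded nodes coincide with those of $\mathbf{S}$ at $t_n$. The inductive step combines two facts. First, because $\mathbf{S}$ is node-monotone, whenever one of its moves leaves a node $a$ unguarded, $a$ has no n-dirty neighbour at that instant; by the invariant this means $a$ has no e-dirty incident edge, so the move cannot e-recontaminate through $a$ — this handles all of $\mathbf{S}$'s own relocation and clearing moves. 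Second, the prompt sweeping guarantees that as soon as both endpoints of a chord are cleared the chord is already e-clear, which is exactly what makes ``$a$ has no n-dirty neighbour'' equivalent to ``$a$ touches no e-dirty edge'' and closes the loop between node- and edge-contamination. The delicate point is this synchronization: an \emph{unswept} chord between two cleared nodes would be an e-dirty edge with possibly unguarded endpoints and could recontaminate adjacent tree edges, which is precisely why each chord must be swept at the clearing time of its later endpoint, while the earlier endpoint is still a guarded frontier node.

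At $n=N$ the invariant yields that all edges of $\mathbf{G}$ are e-clear, so $\mathbf{S}'$ is edge-clearing, completing the proof. As a consistency check and an alternative route, I would note that this matches the equivalence of node search and mixed edge search (Theorem \ref{prpA01}): a node-clearing search is a mixed search in which chords are cleared by guarding both endpoints, and converting each such guard-clearing into an actual traversal is exactly the one-extra-searcher sweep described above.
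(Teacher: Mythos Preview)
Your approach is essentially the same as the paper's. Both proofs keep all moves of $\mathbf{S}$ intact, maintain the invariant that at suitable checkpoints the e-clear edge set coincides with the n-clear edge set, and achieve this by having one auxiliary searcher sweep, immediately after each node-clearing step, the edges that have just become n-clear (those joining the newly cleared node to previously cleared, hence guarded frontier, neighbours). The paper calls this searcher the ``edge cleaner'' and has it move to the newly cleared node $v$ and do back-and-forth traversals $v\to x_i\to v$; you instead teleport the auxiliary to each $x$ and slide it into $v$. Both work because, as you correctly observe via Lemma~\ref{prp0505}, the earlier endpoint $x$ is a guarded frontier node at the moment of the sweep.

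Two small remarks. First, your invocation of Theorem~\ref{prp0501} assumes $\mathbf{S}$ is \emph{rooted}, which the statement of Theorem~\ref{prp0511} does not give you; the spanning-tree/chord language is convenient but inessential, and your invariant argument goes through verbatim once you replace ``chord'' by ``edge to an already-cleared node other than the one just traversed''. Second, your claim ``$a$ has no e-dirty incident edge, so the move cannot e-recontaminate through $a$'' is correct but deserves the one extra sentence you implicitly rely on: any e-unguarded path from an e-clear edge to an e-dirty edge must cross from n-clear to n-dirty territory and hence pass through a (guarded) frontier node in its interior, so the only node whose unguarding could matter is $a$ itself. With those clarifications your argument matches the paper's proof.
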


\begin{proof}
Suppose that $\mathbf{G}$ contains $L_{0}$ edges, that the length (i.e.,
number of moves)\ of $\mathbf{S}$ is $t_{fin}$ and that $\overline{sn}\left(
\mathbf{S}\right)  =K$. The new search $\mathbf{S}^{\prime}$ will consist of
the $\mathbf{S}$ moves (executed at integer times $t=1,2,...,t_{fin}$)
combined with the moves of an extra searcher, the \textquotedblleft edge
cleaner\textquotedblright, who will only (if at all)\ move at
\emph{fractional} time steps of the form $t+\frac{l}{3L_{0}+1}$,
$l=1,2,...,3L_{0}$ (fractional times are introduced to preserve the
\textquotedblleft alignment\textquotedblright\ of $\mathbf{S}$ and
$\mathbf{S}^{\prime}$, i.e. to ensure $\mathbf{S}\left(  t\right)
=\mathbf{S}^{\prime}\left(  t\right)  $ at integer times; of course, once
$\mathbf{S}^{\prime}$ has been obtained, the time scale can be renormalized,
so that \emph{all }moves occur at integer times). We will use the notation
$t^{-}=t-\frac{1}{3L_{0}+1}$.

We will describe the moves of of the edge cleaner on a step-by-step basis, for
$t=1,2,...,t_{fin}$ in such a manner that at the same time we will complete an
inductive proof of the fact that%
\[
\text{for }t=1,2,...,t_{fin}+1:\qquad V_{N}^{C}\left(  t-1\right)  =V_{E}%
^{C}\left(  t-1\right)  ,\quad V_{N}^{F}\left(  t-1\right)  =V_{E}^{F}\left(
t-1\right)  ,\quad E_{N}^{C}\left(  t^{-}\right)  =E_{E}^{C}\left(
t^{-}\right)  ;
\]
recall that $V_{N}^{F}$ is the frontier, i.e., the set of n-clear nodes
connected to n-dirty nodes (and similarly for $V_{E}^{F}$) and that the
frontier nodes are always guarded in both the node and edge game.

For $t=1$ we have $V_{N}^{C}\left(  0\right)  =V_{E}^{C}\left(  0\right)
=\emptyset$, $V_{N}^{F}\left(  0\right)  =V_{E}^{F}\left(  0\right)
=\emptyset$, and the edge cleaner is not used, so $E_{N}^{C}\left(
1^{-}\right)  =E_{N}^{C}\left(  0\right)  =E_{E}^{C}\left(  0\right)
=E_{E}^{C}\left(  1^{-}\right)  $. Now suppose that
\begin{equation}
\text{for }s=1,2,...,t\qquad V_{N}^{C}\left(  t-1\right)  =V_{E}^{C}\left(
t-1\right)  ,\quad V_{N}^{F}\left(  t-1\right)  =V_{E}^{F}\left(  t-1\right)
,\quad E_{N}^{C}\left(  t^{-}\right)  =E_{E}^{C}\left(  t^{-}\right)
\label{eq5900}%
\end{equation}
and consider $s=t+1$. Let the $\left(  t+1\right)  $-th move of $\mathbf{S}$
be $u\rightarrow v$. We consider three cases.

\begin{enumerate}
\item \textbf{Case I}. If $u$ is an interior node (i.e., \ a non-frontier
node:\ $u\in V_{N}^{C}\left(  t-1\right)  -V_{N}^{F}\left(  t-1\right)
=V_{E}^{C}\left(  t-1\right)  -V_{E}^{F}\left(  t-1\right)  $ ), then $uv$ is
an edge of $\mathbf{G}_{N}^{C}\left(  t\right)  =\mathbf{G}_{E}^{C}\left(
t\right)  $ and no new nodes/edges are cleared, either in the node or edge
game. No path becomes unguarded, and so no recontaminaiton is possible either.
Hence
\begin{align}
V_{N}^{C}\left(  t\right)   &  =V_{N}^{C}\left(  t-1\right)  =V_{E}^{C}\left(
t-1\right)  =V_{E}^{C}\left(  t\right)  ,\label{eq5901}\\
V_{N}^{F}\left(  t\right)   &  =V_{N}^{F}\left(  t-1\right)  =V_{E}^{F}\left(
t-1\right)  =V_{E}^{F}\left(  t\right)  .\nonumber
\end{align}
The edge cleaner is not used, hence also
\begin{equation}
E_{N}^{E}\left(  \left(  t+1\right)  ^{-}\right)  =E_{N}^{E}\left(
t^{-}\right)  =E_{E}^{E}\left(  t^{-}\right)  =E_{N}^{E}\left(  \left(
t+1\right)  ^{-}\right)  . \label{eq5902}%
\end{equation}

\item \textbf{Case II}. If $u$ is a frontier node ($u\in V_{N}^{F}\left(
t-1\right)  =V_{E}^{F}\left(  t-1\right)  $ ) which contains a \emph{single
}searcher at $t-1$, then $u$ becomes unguarded at $t$. Clearly
\begin{equation}
V_{N}^{C}\left(  t\right)  =V_{N}^{C}\left(  t-1\right)  \cup\left\{
v\right\}  . \label{eq5903}%
\end{equation}
It is also easy to see that there is no edge $uy$ with $y\in V_{N}^{D}\left(
t-1\right)  =V_{E}^{D}\left(  t-1\right)  $ and $y\neq v$: if such an edge
existed, then we would have $y\in V_{N}^{D}\left(  t\right)  $ ($y$ was not
entered at $t$) and so $u\in V_{N}^{D}\left(  t\right)  $; but $u\in V_{N}%
^{D}\left(  t-1\right)  $ (it was guarded at $t-1$) and so node monotonicity
of $\mathbf{S}$ would be violated.

Take any edge $pq\in E_{E}^{C}\left(  t^{-}\right)  =E_{N}^{C}\left(
t^{-}\right)  $. Edge $pq$ cannot be recontaminated in the node game (no node
was recontaminated). Edge $pq$ cannot be recontaminated in the edge game
either; for this to happen there must exist an e-unguarded path from $pq$ to
some $xz\in E_{E}^{D}\left(  t^{-}\right)  $; but all such paths must go
through $u$ (no other node became unguarded at $t$)\ and hence must include
$uv$ (the only edge e-dirty at $t^{-}$ and incident on $u$) but $v$ is guarded
at $t$. Hence no edge is e-dirtied at $t$ and so no node is e-dirtied either.
In other words%
\begin{equation}
V_{E}^{C}\left(  t\right)  =V_{E}^{C}\left(  t-1\right)  \cup\left\{
v\right\}  . \label{eq5904}%
\end{equation}
which, together with (\ref{eq5903})\ shows that $V_{N}^{C}\left(  t\right)
=V_{E}^{C}\left(  t\right)  $. Also, $V_{N}^{F}\left(  t\right)  =V_{E}%
^{F}\left(  t\right)  $; in both the node and edge game, $u$ was removed from
the frontier and $v$ was \emph{possibly} added to it. In short, we have
established (\ref{eq5901})\ \ for Case II as well.

From the previous remarks we know that, in the edge game and at time $t$, no
edge was recontaminated and edge $uv$ was e-cleared. I.e.,%
\[
E_{E}^{C}\left(  t\right)  =E_{E}^{C}\left(  t^{-}\right)  \cup\left\{
vu\right\}  .
\]
In the node game, on the other hand,
\begin{equation}
E_{N}^{C}\left(  t\right)  =E_{N}^{C}\left(  t^{-}\right)  \cup\left\{
vx_{0},vx_{1},...,vx_{L}\right\}  =E_{N}^{C}\left(  \left(  t+1\right)
^{-}\right)  \label{eq5905}%
\end{equation}
where $L\geq0$, $x_{0}=u$ and there \emph{may }exist other nodes $x_{1}%
,x_{2},...,x_{L}$ which must (a)\ be neighbors of $v$ and (b)\ belong to
$V_{N}^{F}\left(  t-1\right)  \subseteq V_{N}^{C}\left(  t-1\right)  $. Hence
$x_{1},...,x_{L}$ were guarded at $t-1$ and remain so at $t$. In other words,
at time $t$ the edges $vx_{1},...,vx_{L}$ have both endpoints guarded and are
n-clear but e-dirty. Now we invoke the edge cleaner who (at times $t+\frac
{1}{3L_{0}+1},t+\frac{2}{3L_{0}+1},...$ ) moves to $v$ and then performs the
moves $v\rightarrow x_{1}$, $x_{1}\rightarrow v$, $v\rightarrow x_{2}$,
$x_{2}\rightarrow v$, ... , $x_{L}\rightarrow v$. This entire sequence can be
performed in no more than $3L_{0}$ moves, so at $t=\left(  t+1\right)  ^{-}$
the edges $vx_{0},vx_{1},...,vx_{L}$ have been e-cleared and so
\begin{equation}
E_{E}^{C}\left(  \left(  t+1\right)  ^{-}\right)  =E_{N}^{C}\left(
t^{-}\right)  \cup\left\{  vx_{0},vx_{1},...,vx_{L}\right\}  . \label{eq5906}%
\end{equation}
Combining (\ref{eq5905})\ and (\ref{eq5906})\ we get $E_{N}^{C}\left(  \left(
t+1\right)  ^{-}\right)  =E_{E}^{C}\left(  \left(  t+1\right)  ^{-}\right)  $.
In short, we have established (\ref{eq5902})\ \ for Case II as well.%
\[
\qquad
\]
\medskip

\item \textbf{Case III}. The final case to examine is when $u$ is a frontier node
which contains more than one searcher at $t-1$. We omit a detailed treatment
because the proof combines elements from the previous two cases; namely,
recontamination does not happen (for the same reasons as in Case I) bu the
edge cleaner may possibly be required (as in Case II).
\end{enumerate}

Hence, in all three cases considered, starting from (\ref{eq5900})\ we have
established (\ref{eq5901})\ and (\ref{eq5902}). Hence we can complete the
induction up to time $t=t_{fin}+1$ $\ $which means%
\begin{equation}
V_{N}^{C}\left(  t_{fin}\right)  =V_{E}^{C}\left(  t_{fin}\right)  ,\quad
E_{N}^{C}\left(  \left(  t_{fin}+1\right)  ^{-}\right)  =E_{E}^{C}\left(
\left(  t_{fin}+1\right)  ^{-}\right)  . \label{eq5907}%
\end{equation}
Eqs.(\ref{eq5907})\ imply that all nodes are e-cleared at $t_{fin}$ but a few
extra steps may be required to e-clear all edges (by time $\left(
t_{fin}+1\right)  ^{-}$ at most).
\end{proof}

\begin{center}

\end{center}

\section{Graph Search Experiments}

\label{sec06}

In this section we evaluate the performance of the GSST\ algorithm by
numerical experiments. Some of these experiments involve specific graphs
(Section \ref{sec0601}) and others involve families of graphs (Sections
\ref{sec0602} and \ref{sec0603} -- in which case we present average results).
We use ten \emph{variants }of GSST, obtained by using two different methods of
spanning tree generation (uniform and DFS) and five methods of edge traversal
(GSST-L, GSST-R, GSST-LR, GSST-LW, and GSST-LD).

\subsection{Experiments using Individual Graphs}

\label{sec0601}

\subsubsection{Simple Graph}

The first graph we have used appears in Fig.\ref{fig02}. This is a relatively
simple graph (with IMC$\ $node search number $s_{N}^{imc}=$3)\ which we use to
illustrate the basic principles of GSST\ operation.

\begin{figure}[h]
\centering
\scalebox{0.6}{\includegraphics{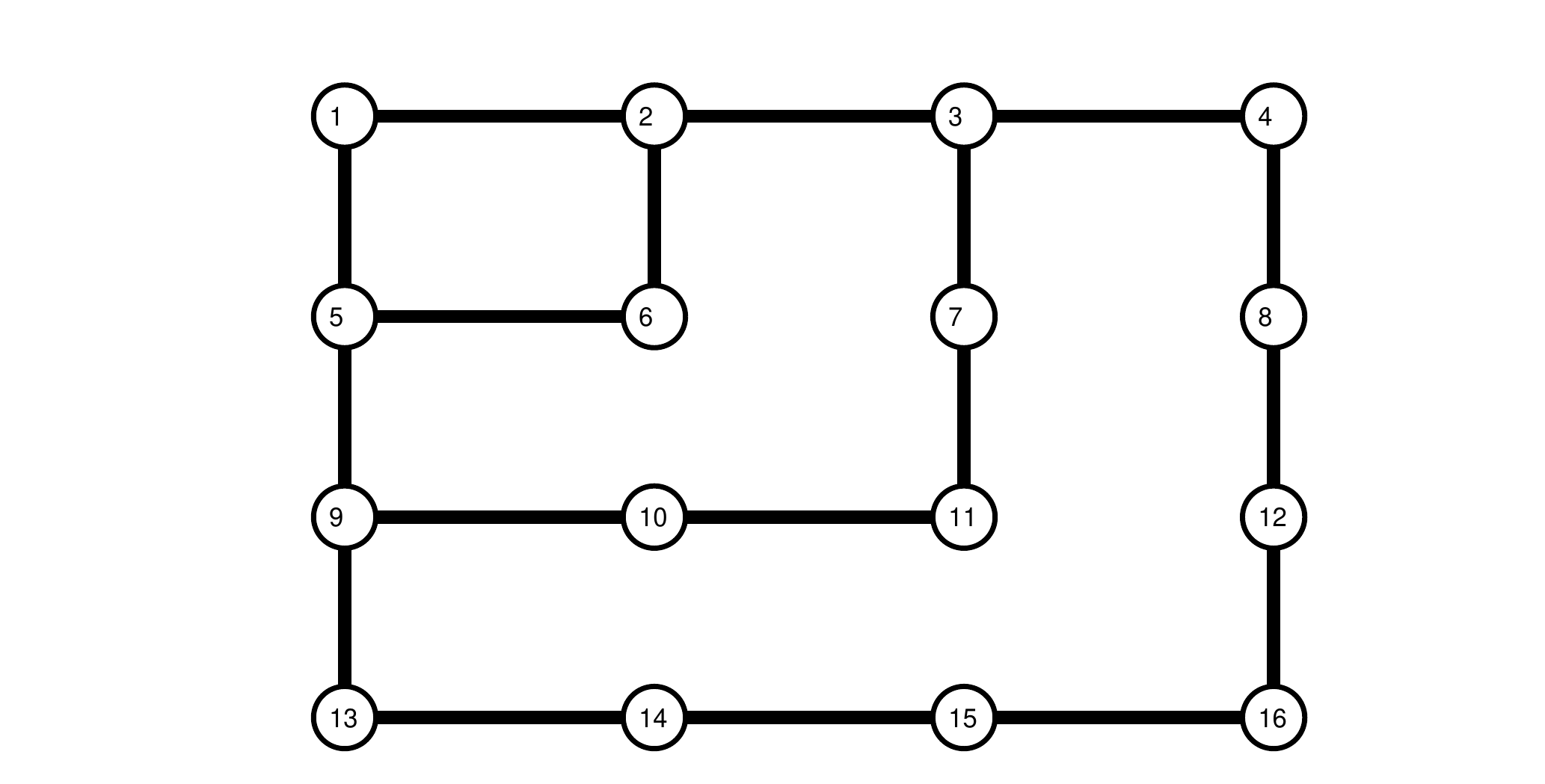}}\caption{A simple graph.}%
\label{fig02}%
\end{figure}

We node-search the graph using the ten GSST variants, each with $M=5000$
generated spanning trees; to each spanning tree corresponds a search
$\mathbf{S}_{m}$, $m=1,2,...,5000$. In Table 2 we list for each GSST\ variant:
(a)$\ \min_{m=1,2,...,M}\overline{sn}\left(  \mathbf{S}_{m}\right)  $ (i.e.,
the minimum number of searchers achieved by the specific combination),
(b)\ the proportion of minimal searches out of the the $M$ total searches and
(c)\ the time\footnote{All computations were performed by the gsearch.exe
program (see Appendix \ref{secD}), on a PC\ with Intel Dual Core E7500 CPU,
running at 2.93 GHz with 3 GB RAM; computation time is reported in seconds,
unless otherwise indicated.} (in sec)\ required to run the $M$ searches.

\begin{center}%
\begin{tabular}
[c]{|l|l|l|l|l|l|l|}\hline
& \multicolumn{3}{|c}{\textbf{Uniform ST generation}} &
\multicolumn{3}{|c|}{\textbf{DFS ST generation}}\\\hline
\textbf{Edge Traversal} & Min$\ $ & Prop. of min & Time & Min & Prop. of min &
Time\\\hline
GSST-L & \multicolumn{1}{|r|}{3} & \multicolumn{1}{|r|}{0.1562} &
\multicolumn{1}{|r|}{0.203125} & \multicolumn{1}{|r|}{3} &
\multicolumn{1}{|r|}{0.1804} & \multicolumn{1}{|r|}{0.203125}\\\hline
GSST-R & \multicolumn{1}{|r|}{3} & \multicolumn{1}{|r|}{0.2031} &
\multicolumn{1}{|r|}{0.171875} & \multicolumn{1}{|r|}{3} &
\multicolumn{1}{|r|}{0.2848} & \multicolumn{1}{|r|}{0.203125}\\\hline
GSST-LR & \multicolumn{1}{|r|}{3} & \multicolumn{1}{|r|}{0.2656} &
\multicolumn{1}{|r|}{0.265625} & \multicolumn{1}{|r|}{3} &
\multicolumn{1}{|r|}{0.2096} & \multicolumn{1}{|r|}{0.265625}\\\hline
GSST-LW & \multicolumn{1}{|r|}{3} & \multicolumn{1}{|r|}{0.2292} &
\multicolumn{1}{|r|}{0.218750} & \multicolumn{1}{|r|}{3} &
\multicolumn{1}{|r|}{0.2854} & \multicolumn{1}{|r|}{0.250000}\\\hline
GSST-LD & \multicolumn{1}{|r|}{3} & \multicolumn{1}{|r|}{0.2031} &
\multicolumn{1}{|r|}{0.218750} & \multicolumn{1}{|r|}{3} &
\multicolumn{1}{|r|}{0.3020} & \multicolumn{1}{|r|}{0.234375}\\\hline
\end{tabular}

\end{center}

\noindent\textbf{Table 2.} Node-clearing the \textquotedblleft simple
graph\textquotedblright\ by the various GSST variants: minimum number of
searchers attained and proportion of minimal solutions; number of spanning
trees generated is $M=5\cdot10^{3}.$

\bigskip

The true node search number (i.e., 3) has been found by every variant of GSST.
Generally, the DFS\ variants perform better than the uniform ones, as can be
seen by the higher proportion of minimal solutions achieved. A better
understanding of the distribution of the number of searchers required by each
search can be obtained by looking at the \emph{histogram} of the distribution;
one such histogram (for the variant with BH edge traversal and uniform
spanning tree generation) is plotted in Fig.\ref{fig03}. We can see that this
simple graph has a high proportion of spanning trees which yield minimal schedules.

\begin{figure}[h]
\centering\scalebox{0.35}{\includegraphics{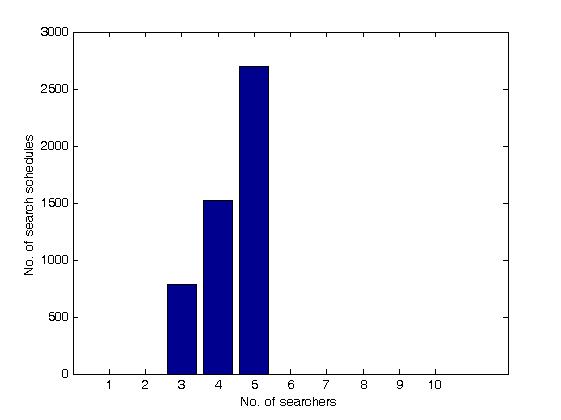}}\caption{The
histogram of the distribution of minimum number of searchers required to node
clear the \textquotedblleft simple graph\textquotedblright\ by the uniform
GSST-L\ variant.}%
\label{fig03}%
\end{figure}

An additional tool to evaluate the anytime performance of GSST is the plot of
$R\left(  m\right)  =\min_{i\leq m}\overline{sn}\left(  \mathbf{S}_{i}\right)
$ (the minimum node clearing number of searchers achieved by the first $m$
searches). $R\left(  m\right)  $ is decreasing with $m$. The overall minimum
achieved by GSST is $R\left(  M\right)  $ (having tried $M$ searches). If this
minimum is achieved for a small value of $m$, then the minimal solution has
been obtained quickly. A graph of $R\left(  m\right)  $ (for the variant with
labeled edge traversal and uniform spanning tree generation) appears in
Fig.\ref{fig04}; as can be seen a minimal solution (clearing the graph with
three searchers) is achieved by the seventh computed search, approximately at
time $t=0.203125\cdot7/5000=\allowbreak2.\,\allowbreak843\,8\times10^{-4}\ $sec.

\begin{figure}[h]
\centering
\scalebox{0.6}{\includegraphics{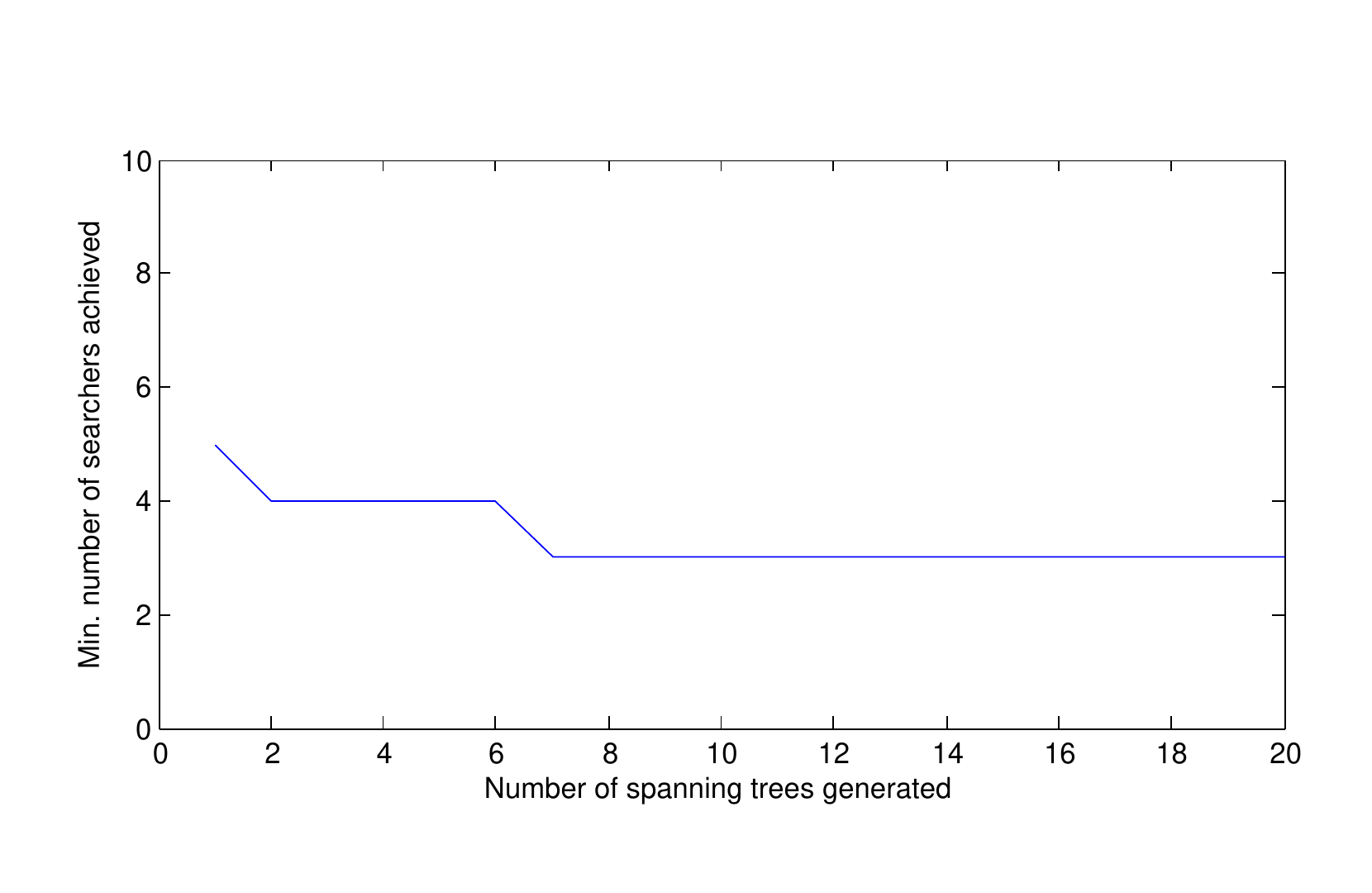}}\caption{Plot of $R\left(
m\right)  $ vs. $m$ for the \textquotedblleft simple graph\textquotedblright%
\ and the uniform GSST-L\ variant. The $m$ axis is truncated at $m=20$, since
the rest of the plot shows no change.}%
\label{fig04}%
\end{figure}

The \textquotedblleft simple graph\textquotedblright\ is simple enough to have
a relatively small number of spanning trees (namely 272, as computed by
Kirchoff's theorem \cite{HarrisKirchoff} ) and so we can also run GSST on
\emph{all }spanning trees (exhaustive enumeration). The computation takes
0.03125 (for the GSST-L variant) sec and shows that 83 out of the 272 spanning
trees (\emph{rooted at node} 1) yield minimal searches\footnote{Note that all
GSST variants, except GSST-L, use a randomized element in the order of edge
traversal -- hence multiple runs with the same spanning tree can yield
different searches.}.

\subsubsection{Tree/Grid Graph}

The second graph we have used appears in Fig.\ref{fig05}. It consists of a
\textquotedblleft root\textquotedblright\ node and two branches under it; the
left branch is a tree and the right one a grid; hence the name
\textquotedblleft tree/grid". The graph has $s_{N}^{imc}=$4 and we have
deliberately designed it to \textquotedblleft trick\textquotedblright\ the
GSST\ algorithm. For the sake of definiteness consider GSST-L. If the root of
the search is node 6, then GSST-L \ will find a four searcher node clearing
IMC\ schedule. However, if the root is node 1, then GSST-L\ will only find a
five searcher schedule, even after enumerating all spanning trees. A four
searcher IMC\ node clearing schedule is possible from either starting node;
but it requires the use of a non-Barriere edge traversal (for example, one
produced by the GSST-R variant); but GSST-L will always first send the
searchers down the right branch (towards the grid); actually, going first to
the left branch, towards the tree, is better (i.e. yields a four searcher schedule).

\begin{figure}[h]
\centering
\scalebox{0.6}{\includegraphics{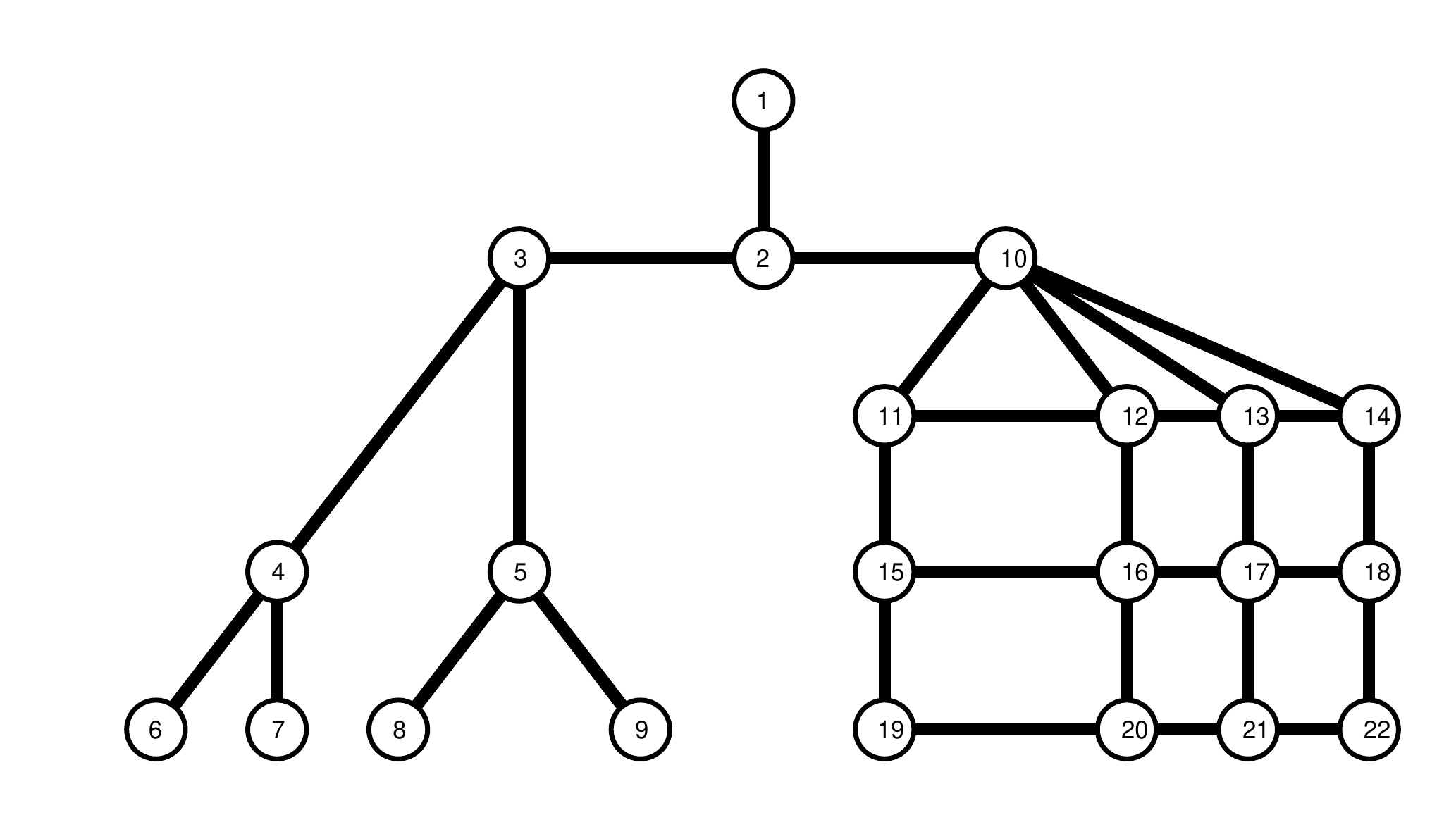}}\caption{The \textquotedblleft
tree/grid\textquotedblright\ graph.}%
\label{fig05}%
\end{figure}

Table 3 summarizes the results of our experiment, using the ten variants of
GSST and $M=5\cdot10^{4}$ spanning trees per variant. Note that in this table
(as in all others) the starting node is selected randomly. The uniform
variants are able to generate minimal search schedules, but not the DFS\ ones.

\begin{center}%
\begin{tabular}
[c]{|l|l|l|l|l|l|l|}\hline
& \multicolumn{3}{|c}{\textbf{Uniform ST generation}} &
\multicolumn{3}{|c|}{\textbf{DFS ST generation}}\\\hline
\textbf{Edge Traversal} & Min & Prop. of min & Time & Min & Prop. of min &
Time\\\hline
GSST-L & \multicolumn{1}{|r|}{4} & \multicolumn{1}{|r|}{0.00054} &
\multicolumn{1}{|r|}{5.921875} & \multicolumn{1}{|r|}{5} &
\multicolumn{1}{|r|}{0.06380} & \multicolumn{1}{|r|}{6.687500}\\\hline
GSST-R & \multicolumn{1}{|r|}{4} & \multicolumn{1}{|r|}{0.00328} &
\multicolumn{1}{|r|}{3.734375} & \multicolumn{1}{|r|}{5} &
\multicolumn{1}{|r|}{0.06956} & \multicolumn{1}{|r|}{4.390625}\\\hline
GSST-LR & \multicolumn{1}{|r|}{4} & \multicolumn{1}{|r|}{0.00076} &
\multicolumn{1}{|r|}{8.328125} & \multicolumn{1}{|r|}{5} &
\multicolumn{1}{|r|}{0.06976} & \multicolumn{1}{|r|}{8.765625}\\\hline
GSST-LW & \multicolumn{1}{|r|}{4} & \multicolumn{1}{|r|}{0.00206} &
\multicolumn{1}{|r|}{4.812500} & \multicolumn{1}{|r|}{5} &
\multicolumn{1}{|r|}{0.07252} & \multicolumn{1}{|r|}{5.609375}\\\hline
GSST-LD & \multicolumn{1}{|r|}{4} & \multicolumn{1}{|r|}{0.00232} &
\multicolumn{1}{|r|}{4.968750} & \multicolumn{1}{|r|}{5} &
\multicolumn{1}{|r|}{0.09376} & \multicolumn{1}{|r|}{5.656250}\\\hline
\end{tabular}

\end{center}

\noindent\textbf{Table 3. }Node-clearing the \textquotedblleft
tree/grid\textquotedblright\ by the various GSST variants: minimum number of
searchers attained and proportion of minimal solutions; number of spanning
trees generated is $M=5\cdot10^{4}.$\textbf{ }

\bigskip

In this case the uniform variants find four-searcher clearing schedules and
hence outperform the DFS\ generated ones, which can only clear the graph with
five or more searchers\footnote{We have experimented with larger values of $M$
(e.g., $M=10^{6}$) but the DFS\ variants still cannot achieve a four searcher
clearing schedule.}. Note also the lower proportion of minimal solutions, even
for the uniform variants. For example, uniform GSST-L finds $0.00054\cdot
50000=\allowbreak27$ minimal solutions. This in an indication that this
graph\ is indeed harder than the \textquotedblleft simple\textquotedblright%
\ one, at least for the GSST\ algorithm. Also, the tree/grid graph\ has 31529
spanning trees (and correspondingly many cycles) which is an additional
indication that it is (much) harder to search than the simple graph, which has
272 spanning trees. The small number of minimal solutions can also be
appreciated by looking at the histogram (for uniform GSST-L\ variant it is
plotted in Fig.\ref{fig06}).

\begin{figure}[h]
\centering\scalebox{0.35}{\includegraphics{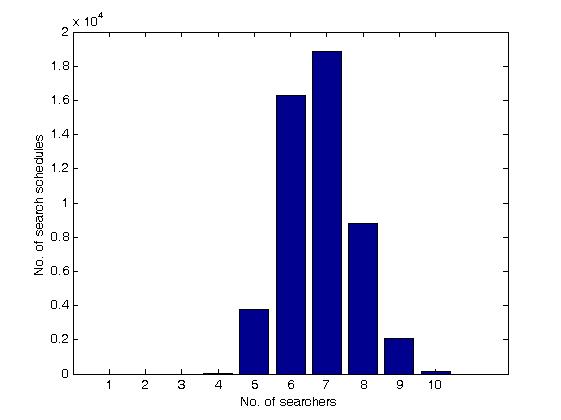}}\caption{The
histogram of the distribution of minimum number of searchers required to node
clear the \textquotedblleft hard graph\textquotedblright\ by the uniform
GSST-L\ variant. There is a small, barely noticeable, bar at $k=4$,
corresponding to the four-searcher clearing schedules.}%
\label{fig06}%
\end{figure}

However, despite the small proportion of minimal solutions, the first one is
always found after a relatively small number of iterations. For example, the
uniform GSST-L variant finds a four searcher schedule with the 420-th spanning
tree generated (out of a total of $5\cdot10^{4}$ spanning trees),
approximately at time $t=5.921875\cdot420/50000=\allowbreak4.\,\allowbreak
974\,4\times10^{-2}\ $sec.

\subsubsection{NSH Graph}

The next graph we use has been obtained by discretization of an actual
floorplan, namely the first floor of the Newel-Simon building in the Carnegie
Mellon University \ campus. In Fig.\ref{fig08} we present the actual floorplan
and its discretization; in Fig.\ref{fig09} we present the resulting graph (the
node numbers in Fig.\ref{fig09} correspond to the cell numbers in
Fig.\ref{fig08}).

\begin{figure}[h]
\centering
\scalebox{0.6}{\includegraphics{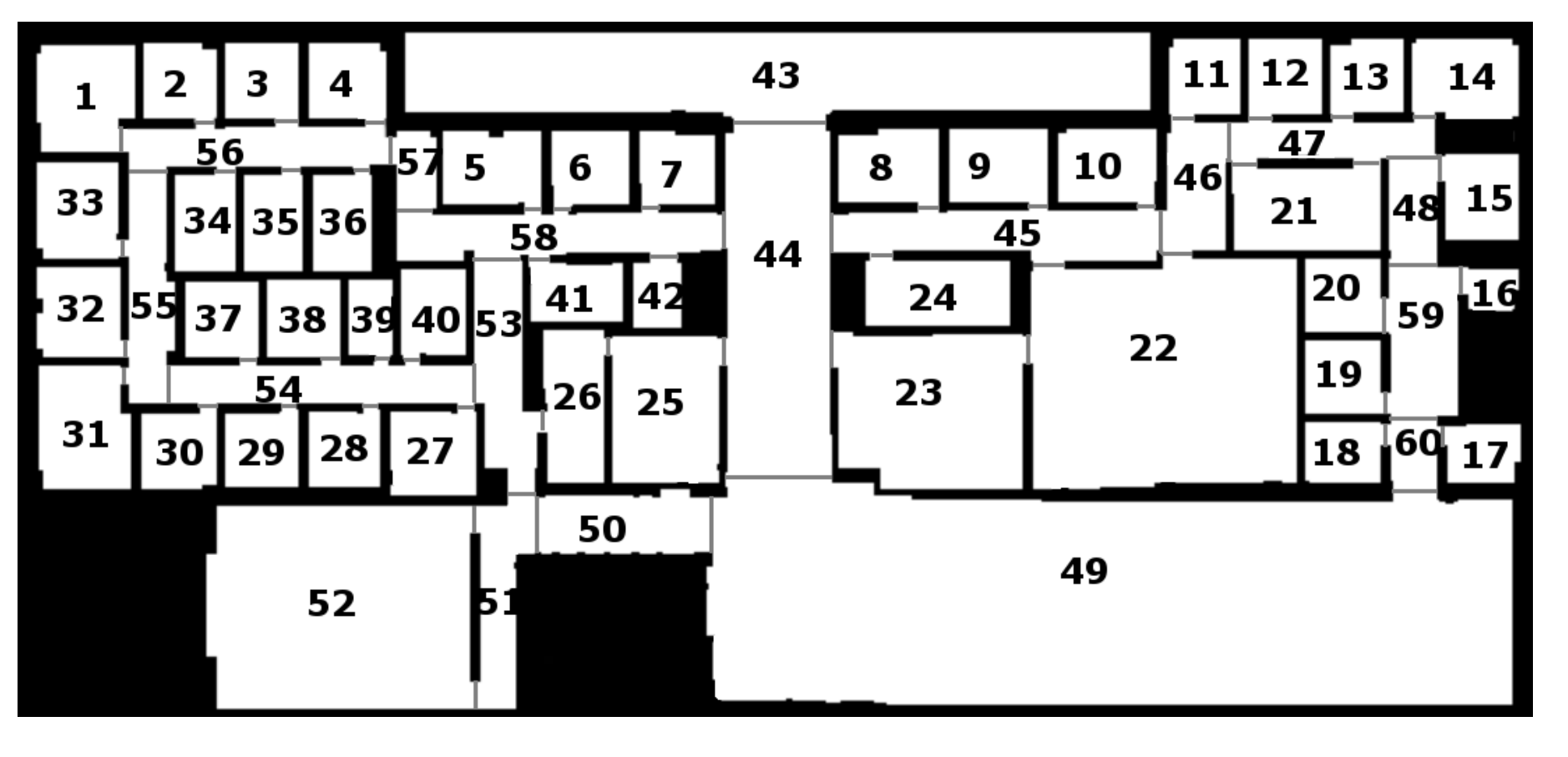}}\caption{The NSH floorplan.}%
\label{fig08}%
\end{figure}

\begin{figure}[h]
\centering
\scalebox{0.4}{\includegraphics{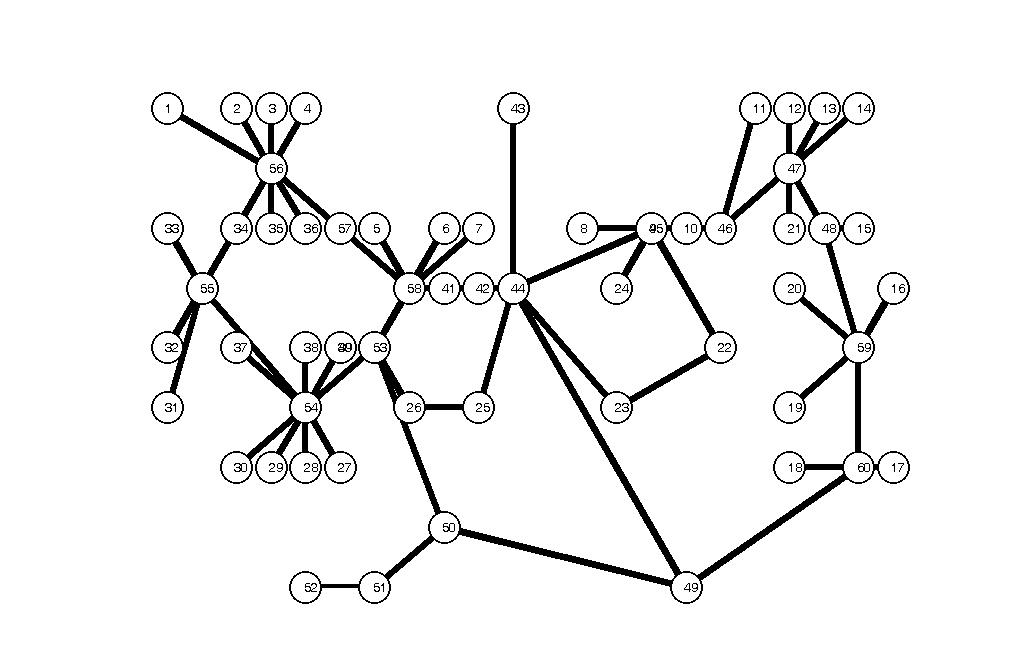}}\caption{The NSH graph.}%
\label{fig09}%
\end{figure}

\clearpage

This graph, with 60 nodes and 64\ edges, appears to be more complex than the
previous ones; however it has a relatively small number of spanning trees
(namely 3604, as computed by Kirchoff's theorem). By visual inspection it
appears very unlikely that the graph can be node-cleared with two searchers;
on the other hand, as will be seen presently, GSST can compute three-searcher
clearing schedules. Hence we conjecture that the search number $s_{N}^{imc}$
is three.

Applying the \textquotedblleft standard\textquotedblright\ variants of
GSST\ and using $M=5\cdot10^{4}$ spanning trees per variant we obtain the
results of Table 4.

\begin{center}%
\begin{tabular}
[c]{|l|l|l|l|l|l|l|}\hline
& \multicolumn{3}{|c}{\textbf{Uniform ST generation}} &
\multicolumn{3}{|c|}{\textbf{DFS ST generation}}\\\hline
\textbf{Edge Traversal} & Min & Prop. of min & Time & Min & Prop. of min &
Time\\\hline
GSST-L & \multicolumn{1}{|r|}{3} & \multicolumn{1}{|r|}{0.00224} &
\multicolumn{1}{|r|}{14.265625} & \multicolumn{1}{|r|}{3} &
\multicolumn{1}{|r|}{0.01038} & \multicolumn{1}{|r|}{14.578125}\\\hline
GSST-R & \multicolumn{1}{|r|}{3} & \multicolumn{1}{|r|}{0.00002} &
\multicolumn{1}{|r|}{9.406250} & \multicolumn{1}{|r|}{3} &
\multicolumn{1}{|r|}{0.00006} & \multicolumn{1}{|r|}{10.140625}\\\hline
GSST-LR & \multicolumn{1}{|r|}{3} & \multicolumn{1}{|r|}{0.00478} &
\multicolumn{1}{|r|}{21.937500} & \multicolumn{1}{|r|}{3} &
\multicolumn{1}{|r|}{0.01026} & \multicolumn{1}{|r|}{21.390625}\\\hline
GSST-LW & \multicolumn{1}{|r|}{3} & \multicolumn{1}{|r|}{0.00328} &
\multicolumn{1}{|r|}{11.546875} & \multicolumn{1}{|r|}{3} &
\multicolumn{1}{|r|}{0.00608} & \multicolumn{1}{|r|}{12.593750}\\\hline
GSST-LD & \multicolumn{1}{|r|}{3} & \multicolumn{1}{|r|}{0.00544} &
\multicolumn{1}{|r|}{12.625000} & \multicolumn{1}{|r|}{3} &
\multicolumn{1}{|r|}{0.00880} & \multicolumn{1}{|r|}{13.578125}\\\hline
\end{tabular}

\end{center}

\noindent\textbf{Table 4. }Node-clearing the NSH\ graph\ \ by the various GSST
variants: minimum number of searchers attained and proportion of minimal
solutions; number of spanning trees generated is $M=5\cdot10^{4}.$

\bigskip

We see that all the GSST\ variants using uniform spanning tree generation
achieve the true search number, namely three. The random edge traversal
variants perform poorest of all: the uniform GSST-R finds only one minimal
search and the DFS GSST-R variant three. Generally, while the time required to
complete $M$ searches is higher for the NSH\ graph than for the
\textquotedblleft tree/grid\textquotedblright\ graph, the proportion of
correct solutions is in some cases higher (at least for the uniform variants).
A histogram of the searchers required appears in Fig.\ref{fig10}, for the
uniform GSST-L\ variant; for this variant the first minimal search schedule is
computed at step 845 (out of 50000), approximately at time $14.265625\cdot
845/50000=\allowbreak0.241\,09\ $sec.

\begin{figure}[h]
\centering\scalebox{0.35}{\includegraphics{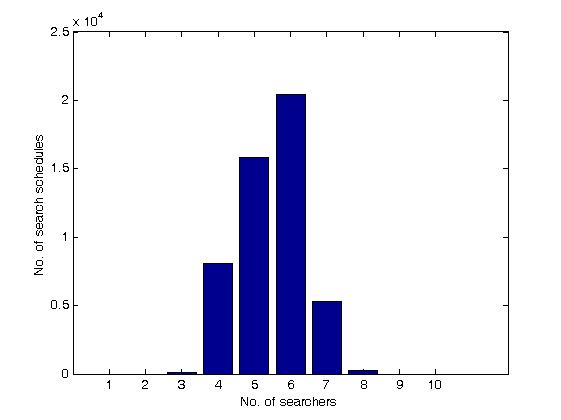}}\caption{The
histogram of the distribution of minimum number of searchers required to node
clear the NSH\ graph\ by the uniform GSST-L\ variant.}%
\label{fig10}%
\end{figure}

\subsubsection{National Art Gallery Graph}

The next graph we use has also been obtained by discretization of an actual
floorplan, namely the first floor of the National Gallery of Art, in
Washington, DC. In Fig.\ref{fig12} we present the actual floorplan and the
discretization we have used; in Fig.\ref{fig13} we plot the corresponding graph.

\begin{figure}[h]
\centering
\scalebox{0.6}{\includegraphics{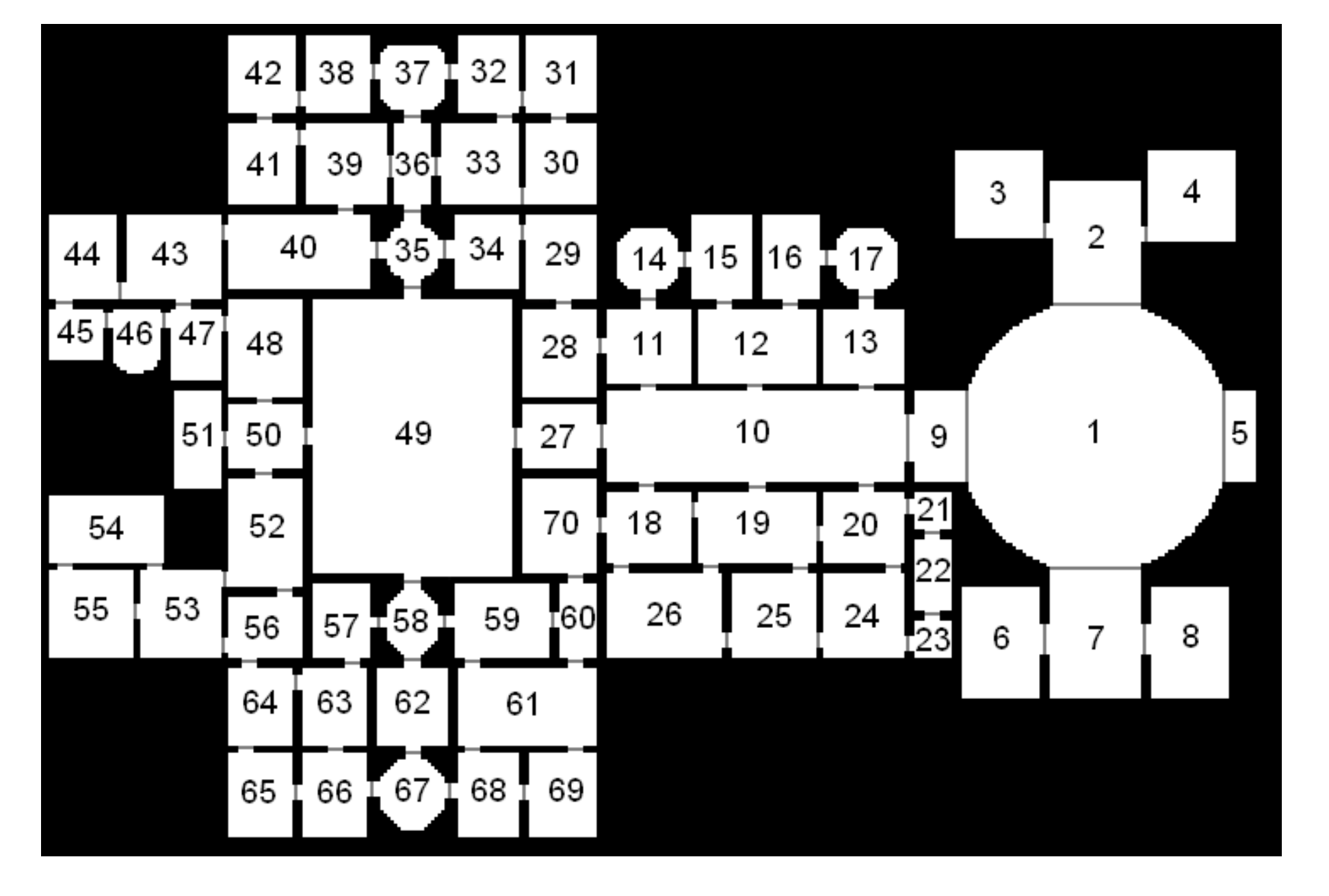}}\caption{The National Gallery
of Art floorplan.}%
\label{fig12}%
\end{figure}

\begin{figure}[h]
\centering\scalebox{0.4}{\includegraphics{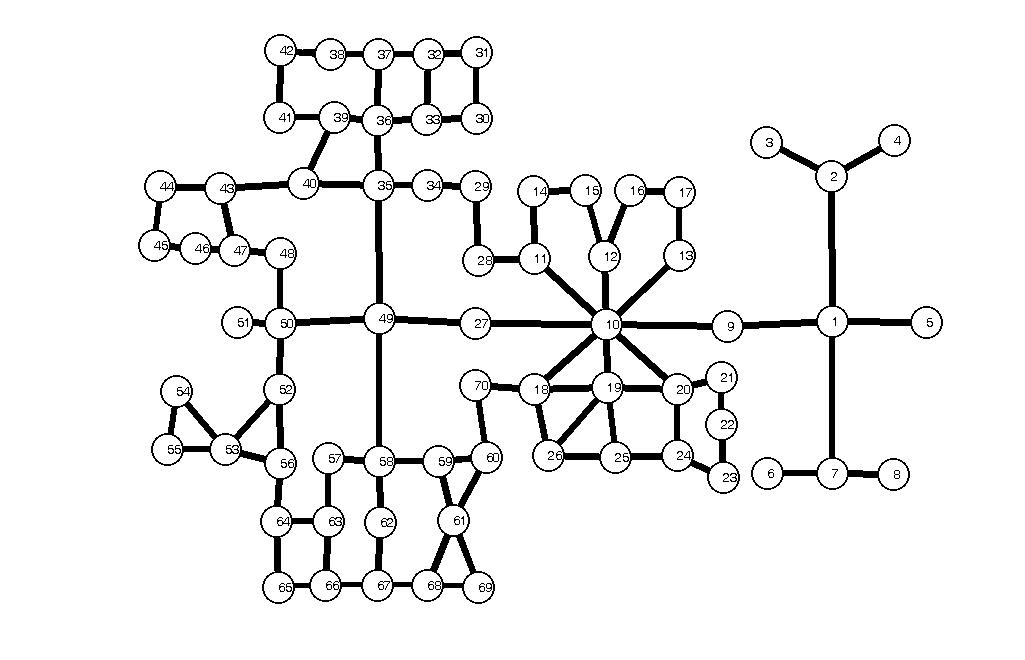}}\caption{The National
Gallery of Art graph.}%
\label{fig13}%
\end{figure}

\clearpage

This graph, has 70 nodes and 93\ edges and is more complex than any of the
previously used graphs. In particular, it is more complex than the NSH\ graph.
While the National Art Gallery graph has about 50\% more edges than the
NSH\ graph, it has a \emph{much }larger number of spanning, approximately
5.3$\cdot10^{14}$, as compared to 3604; this precludes use of the GSST with
exhaustive enumeration of the spanning trees. Also, we have no obvious way to
compute the true node search number. Nevertheless, we proceed to apply the ten
GSST\ variants using $M=1.5\cdot10^{5}$ spanning trees per combination. The
results obtained appear in Table 5.

\begin{center}%
\begin{tabular}
[c]{|l|l|l|l|l|l|l|}\hline
& \multicolumn{3}{|c}{\textbf{Uniform ST generation}} &
\multicolumn{3}{|c|}{\textbf{DFS ST generation}}\\\hline
\textbf{Edge Traversal} & Min & Prop. of min & Time & Min & Prop. of min &
Time\\\hline
GSST-L & \multicolumn{1}{|r|}{5} & \multicolumn{1}{|r|}{0.0000066} &
\multicolumn{1}{|r|}{74.378237} & \multicolumn{1}{|r|}{5} &
\multicolumn{1}{|r|}{0.0004466} & \multicolumn{1}{|r|}{77.214947}\\\hline
GSST-R & \multicolumn{1}{|r|}{6} & \multicolumn{1}{|r|}{0.0013133} &
\multicolumn{1}{|r|}{45.500000} & \multicolumn{1}{|r|}{5} &
\multicolumn{1}{|r|}{0.0004800} & \multicolumn{1}{|r|}{50.984375}\\\hline
GSST-LR & \multicolumn{1}{|r|}{5} & \multicolumn{1}{|r|}{0.0000066} &
\multicolumn{1}{|r|}{78.172631} & \multicolumn{1}{|r|}{5} &
\multicolumn{1}{|r|}{0.0002000} & \multicolumn{1}{|r|}{87.250000}\\\hline
GSST-LW & \multicolumn{1}{|r|}{5} & \multicolumn{1}{|r|}{0.0000200} &
\multicolumn{1}{|r|}{62.859375} & \multicolumn{1}{|r|}{5} &
\multicolumn{1}{|r|}{0.0007600} & \multicolumn{1}{|r|}{77.500000}\\\hline
GSST-LD & \multicolumn{1}{|r|}{5} & \multicolumn{1}{|r|}{0.0000066} &
\multicolumn{1}{|r|}{64.656250} & \multicolumn{1}{|r|}{5} &
\multicolumn{1}{|r|}{0.0009800} & \multicolumn{1}{|r|}{77.546875}\\\hline
\end{tabular}

\end{center}

\noindent\textbf{Table 5. }Node-clearing the National Art Gallery\ graph\ \ by
the various GSST variants: minimum number of searchers attained and proportion
of minimal solutions; number of spanning trees generated is $M=1.5\cdot
10^{5}.$

\bigskip

We see that all the GSST\ variants\footnote{Except for uniform GSST-R; however
5-searcher clearing schedules can also be found by this variant if it is run
for a sufficiently large $M$.} find node clearing searches with five
searchers. While we cannot be sure that the true node search number of the
National Art Gallery is five, we have been unable to find a lower search
number using any method (including extensive inspection by the authors). In
addition, the middle part of the graph resembles a five-by-six grid, which is
known to have a node search number of five (see Section \ref{sec0603}). A
histogram of the searchers required appears in \ref{fig14}, for the uniform
GSST-L variant; for this variant the first minimal search schedule is found at
step 101823 (out of a total $1.5\cdot10^{5}$). Similar results hold for the
other variants.

\begin{figure}[h]
\centering\scalebox{0.35}{\includegraphics{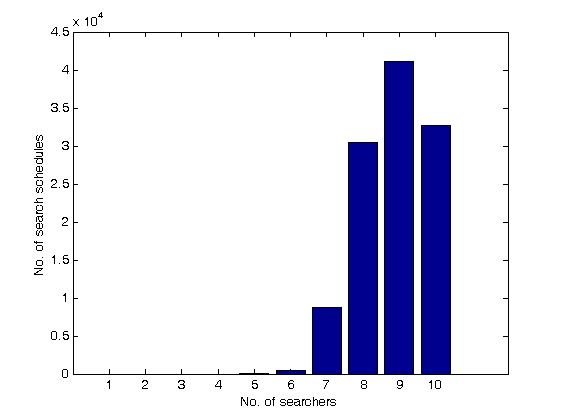}}\caption{The
histogram of the distribution of minimum number of searchers required to node
clear the National Gallery of Art \ graph\ by the uniform GSST-L\ variant.}%
\label{fig14}%
\end{figure}

The National Art Gallery graph results (as well as the NSH\ ones)\ show the
applicability of GSST on graphs that are derived from representations of real
indoor environments (which has been the main motivation for our research). The
methods that incorporate Barriere labeling improve performance on the NSH map
because it is similar to a tree (it becomes a tree if we remove only a
\emph{few} edges). In this case, Barriere labeling, which is based on trees,
helps to improve the schedules generated by GSST. The National Art Gallery, on
the other hand, is more similar to a grid, which lessens the advantage of
using Barriere labeling to guide traversal (though a significant improvement
is still obtained).

\subsection{Interval Graphs}

\label{sec0602}

The next experiment presented involves, unlike the ones of Section
\ref{sec0601}, a large number of graphs belonging to the same \emph{family}.
Our goal is to evaluate the \emph{average }performance of the GSST\ algorithm.
While, strictly speaking, the results are specific to the family of interval
graphs, they also suggest general properties of the GSST\ algorithm.

To evaluate the performance of GSST\ we must have some estimate of the actual
search number of each graph we use; then we can compare this search number
with the minimum $\overline{sn}\left(  \mathbf{S}\right)  $ achieved by GSST.
To satisfy this condition, we will work with \emph{interval graphs}. Briefly,
an interval graph is obtained from a \emph{system of intervals}, i.e. a
collection of intervals of real numbers; each interval corresponds to a graph
node and two nodes are connected by an edge iff the corresponding intervals
intersect. It is known \cite{IntervalFomin,IntervalKaplan} that the
\emph{(non-monotone, non-connected) edge search number} of an interval graph
$\mathbf{G}$ (i.e., $s_{E}\left(  \mathbf{G}\right)  $)\ is equal to its
\emph{interval width}, which is defined to be the size of the largest clique
of the graph. It is also known that the interval width of an interval graph
$\mathbf{G}$ of $N$ nodes can be computed in time O$\left(  N\right)  $
\cite{IntervalFomin,IntervalKaplan}. Of course, while we know the \emph{edge}
search number $s_{E}\left(  \mathbf{G}\right)  $ exactly, this only provides
an estimate of the IMC\ \emph{node}-search number $s_{N}^{imc}\left(
\mathbf{G}\right)  $, in which we are really interested. There is no strict
inequality connecting $s_{E}\left(  \mathbf{G}\right)  $ and $s_{N}%
^{imc}\left(  \mathbf{G}\right)  $; we just know that $s_{N}\left(
\mathbf{G}\right)  \leq s_{E}\left(  \mathbf{G}\right)  $ and $s_{N}\left(
\mathbf{G}\right)  \leq s_{N}^{imc}\left(  \mathbf{G}\right)  $. In general,
we expect that the discrepancy between $s_{E}\left(  \mathbf{G}\right)  $ and
$s_{N}^{imc}\left(  \mathbf{G}\right)  $ is not too large and hence
$s_{E}\left(  \mathbf{G}\right)  $ can be used to evaluate GSST\ performance.

We randomly generate interval systems (and the corresponding interval graphs)
by the following mechanism. First we select two parameters of the family: $N$,
the number of intervals, and $\Delta$, the average interval length. Then we
generate $N$ intervals, the $n$-th interval having its left endpoint at $n$
and the right endpoint at $n+\delta$, where $\delta$ follows an exponential
probability law $f\left(  \delta\right)  \sim e^{-\delta/\Delta}$. We form the
interval graph $\mathbf{G}$ corresponding to this interval system and compute
its interval width. We repeat the process 100 times to obtain $100$ interval
graphs; these form a family characterized by the parameters $N,\Delta$
(specified by us) and also by the average number of edges and the average
interval width. We repeat the process for five different choices of $\left(
N,\Delta\right)  $; this information is summarized in Table 6.

\begin{center}%
\begin{tabular}
[c]{|l|l|c|c|}\hline
$N$ & $\Delta$ & \textbf{Average num. of Edges} & \textbf{Average Interval
Width}\\\hline
\multicolumn{1}{|r|}{30} & \multicolumn{1}{|r|}{3} &
\multicolumn{1}{|r|}{74.22} & \multicolumn{1}{|r|}{5.49}\\\hline
\multicolumn{1}{|r|}{30} & \multicolumn{1}{|r|}{5} &
\multicolumn{1}{|r|}{113.06} & \multicolumn{1}{|r|}{7.66}\\\hline
\multicolumn{1}{|r|}{25} & \multicolumn{1}{|r|}{7} &
\multicolumn{1}{|r|}{182.01} & \multicolumn{1}{|r|}{9.51}\\\hline
\multicolumn{1}{|r|}{35} & \multicolumn{1}{|r|}{10} &
\multicolumn{1}{|r|}{231.84} & \multicolumn{1}{|r|}{11.68}\\\hline
\multicolumn{1}{|r|}{40} & \multicolumn{1}{|r|}{15} &
\multicolumn{1}{|r|}{366.79} & \multicolumn{1}{|r|}{15.53}\\\hline
\end{tabular}

\textbf{Table 6.} Characteristics of the various families of interval graphs
used for the experiments.
\end{center}

\bigskip

Finally, we apply the ten GSST variants to each of the five familes (using
$M=2\cdot10^{4}$ spanning trees per variant). The results are summarized in
Tables 7.a (for the uniform variants) and 7.b (for the DFS\ variants). In
these tables every row corresponds to a family, the families being indexed by
their average interval widths (appearing in the first column). Each of the
remaining columns corresponds to one GSST\ variant and lists the average (over
the 100 graphs)\ minimum number of searchers achieved by the respective
variant. Hence, an estimate of the efficiency of the GSST\ variants can be
obtained by comparing the first column of each table to the remaining ones.
For example, in the first row of Table 7.a we see that all uniform
GSST\ variants node clear the graphs of the first family with fewer searchers
(on the average)\ than those \textquotedblleft predicted\textquotedblright\ by
the interval width of the graphs (e.g., $5.15<5.49$). As we proceed down the
rows of Table 7.a to graphs of higher interval width (and, presumably, of
greater complexity) the average number of searchers required to clear a graph
increases above the average interval width. For example, in the last row the
average interval width is 15.53 and the average minimum number of searchers
required by the uniform GSST-L variant is 19.05; in other words the uniform
GSST-L requires $\frac{19.05-15.53}{15.53}=\allowbreak22.66\%$ more searchers
than expected by the interval width estimate. Things get better with the
DFS\ variants; for example, for the $\left(  N=40,\Delta=15\right)  $ family,
the searcher overhead incurred by the DFS GSST-L\ variant is $\frac
{17.68-15.53}{15.53}=\allowbreak13.84\%$, which is actually quite good for
graphs of such high complexity.

\begin{center}%
\begin{tabular}
[c]{|l|l|l|l|l|l|}\hline
\textbf{Av. Int. Width} & \textbf{GSST-L} & \textbf{GSST-R} & \textbf{GSST-LR}
& \textbf{GSST-LW} & \textbf{GSST-LD}\\\hline
\multicolumn{1}{|r|}{5.49} & \multicolumn{1}{|r|}{5.15} &
\multicolumn{1}{|r|}{5.32} & \multicolumn{1}{|r|}{5.36} &
\multicolumn{1}{|r|}{5.14} & \multicolumn{1}{|r|}{5.16}\\\hline
\multicolumn{1}{|r|}{7.66} & \multicolumn{1}{|r|}{7.82} &
\multicolumn{1}{|r|}{8.08} & \multicolumn{1}{|r|}{8.03} &
\multicolumn{1}{|r|}{7.80} & \multicolumn{1}{|r|}{7.87}\\\hline
\multicolumn{1}{|r|}{9.51} & \multicolumn{1}{|r|}{11.01} &
\multicolumn{1}{|r|}{11.42} & \multicolumn{1}{|r|}{11.45} &
\multicolumn{1}{|r|}{11.02} & \multicolumn{1}{|r|}{11.30}\\\hline
\multicolumn{1}{|r|}{11.68} & \multicolumn{1}{|r|}{13.61} &
\multicolumn{1}{|r|}{13.91} & \multicolumn{1}{|r|}{13.91} &
\multicolumn{1}{|r|}{13.60} & \multicolumn{1}{|r|}{13.63}\\\hline
\multicolumn{1}{|r|}{15.53} & \multicolumn{1}{|r|}{19.05} &
\multicolumn{1}{|r|}{19.37} & \multicolumn{1}{|r|}{19.50} &
\multicolumn{1}{|r|}{19.06} & \multicolumn{1}{|r|}{19.25}\\\hline
\end{tabular}

\end{center}

\noindent\textbf{Table 7.a}: Average minimum number of searchers required to
node-clear each family of interval graphs\ by the uniform GSST variants;
number of spanning trees generated is $M=2\cdot10^{4}.$

\begin{center}
\medskip%

\begin{tabular}
[c]{|l|l|l|l|l|l|}\hline
\textbf{Av. Int. Width} & \textbf{GSST-L} & \textbf{GSST-R} & \textbf{GSST-LR}
& \textbf{GSST-LW} & \textbf{GSST-LD}\\\hline
\multicolumn{1}{|r|}{5.49} & \multicolumn{1}{|r|}{5.04} &
\multicolumn{1}{|r|}{5.13} & \multicolumn{1}{|r|}{5.04} &
\multicolumn{1}{|r|}{5.08} & \multicolumn{1}{|r|}{5.06}\\\hline
\multicolumn{1}{|r|}{7.66} & \multicolumn{1}{|r|}{7.34} &
\multicolumn{1}{|r|}{7.47} & \multicolumn{1}{|r|}{7.39} &
\multicolumn{1}{|r|}{7.44} & \multicolumn{1}{|r|}{7.36}\\\hline
\multicolumn{1}{|r|}{9.51} & \multicolumn{1}{|r|}{10.23} &
\multicolumn{1}{|r|}{10.34} & \multicolumn{1}{|r|}{10.25} &
\multicolumn{1}{|r|}{10.25} & \multicolumn{1}{|r|}{10.24}\\\hline
\multicolumn{1}{|r|}{11.68} & \multicolumn{1}{|r|}{12.70} &
\multicolumn{1}{|r|}{12.80} & \multicolumn{1}{|r|}{12.64} &
\multicolumn{1}{|r|}{12.73} & \multicolumn{1}{|r|}{12.62}\\\hline
\multicolumn{1}{|r|}{15.53} & \multicolumn{1}{|r|}{17.68} &
\multicolumn{1}{|r|}{17.87} & \multicolumn{1}{|r|}{17.75} &
\multicolumn{1}{|r|}{17.72} & \multicolumn{1}{|r|}{17.66}\\\hline
\end{tabular}

\end{center}

\noindent\textbf{Table 7.b}: Average minimum number of searchers required to
node-clear each family of interval graphs\ by the DFS GSST variants; number of
spanning trees generated is $M=2\cdot10^{4}.$

\bigskip

The results on interval graphs show that GSST can yield near-minimal schedules
on a large class of graphs. They also demonstrate that GSST \ scales well with
increasing complexity both in performance and computation time. This is a
direct result of the linear scalability of the algorithm in the number of
nodes in the environment.

\subsection{Grid Graphs}

\label{sec0603}

\subsubsection{Full Grids}

We now present some experiments involving grid graphs, i.e. graphs with nodes
located at points with integer coordinates (an example appears in Fig.
\ref{fig18}). It is easily seen that a grid graph of dimension $J_{1}\times
J_{2}$ (i.e. containing nodes with coordinates $\left(  j_{1},j_{2}\right)
\in\left\{  1,2,...,J_{1}\right\}  \times\left\{  1,2,...,J_{2}\right\}  $)
can be node cleared using $J_{0}=\min\left(  J_{1},J_{2}\right)  $ searchers;
the corresponding search schedule is obvious\footnote{Namely (assuming that
the height $J_{1}$ of the grid is less than or equal to the width $J_{2}$)
place $J_{1}$ searchers on one of the vertical sides of the grid and slide
them horizontally to the other vertical side.}. However, for a \emph{general}
graph search algorithm (such as GSST) which must work without assuming any
special structure of the graph, grid graphs are potentially extremely hard,
because they have many cycles and many spanning trees, only a few of which
correspond to minimal search schedules.

In this experiment we use 6 grid graphs, with dimensions starting at
$5\times5$ and going up to $10\times10$. To each such graph we apply the ten
GSST variants; in Tables 8.a (uniform ST\ generation variants) and 8.b
(DFS\ ST\ generation variants) we list the minimum search number attained by
each variant on each graph ($M$, the number of spanning trees generated per
graph is also listed, in the last column). Let us also note that the total
execution time for this experiment is approximately 25 hours, reflecting (a)
the large number of graphs used, (b) the high complexity of many graphs,
(c)\ the large number of spanning trees used (the \emph{much smaller
}execution time for each individual graph and GSST\ variant is not listed, for
economy of space).

\begin{center}%
\begin{tabular}
[c]{|l|r|r|r|r|r|r|}\hline
\multicolumn{7}{|c|}{\textbf{Uniform random spanning tree generation}}\\\hline
\textbf{Graph Dim.} & \textbf{GSST-L} & \textbf{GSST-LR} & \textbf{GSST-R} &
\textbf{GSST-LW} & \textbf{GSST-LD} & \textbf{No.Trees}\\\hline
\multicolumn{1}{|r|}{5$\times$5} & 5 & 5 & 5 & 5 & 5 & 10$^{5}$\\\hline
\multicolumn{1}{|r|}{6$\times$6} & 7 & 7 & 7 & 7 & 7 & 2$\cdot$10$^{5}%
$\\\hline
\multicolumn{1}{|r|}{7$\times$7} & 8 & 8 & 8 & 8 & 8 & 3$\cdot$10$^{5}%
$\\\hline
\multicolumn{1}{|r|}{8$\times$8} & 10 & 9 & 9 & 9 & 9 & 3$\cdot$10$^{5}%
$\\\hline
\multicolumn{1}{|r|}{9$\times$9} & 11 & 11 & 11 & 10 & 11 & 4$\cdot$10$^{5}%
$\\\hline
\multicolumn{1}{|r|}{10$\times$10} & 13 & 13 & 13 & 13 & 13 & 5$\cdot$10$^{5}%
$\\\hline
\end{tabular}

\end{center}

\noindent\textbf{Table 8.a}. Minimum search number attained by the uniform
GSST variants on the full-grid graphs.

\begin{center}
\bigskip%

\begin{tabular}
[c]{|l|r|r|r|r|r|r|}\hline
\multicolumn{7}{|c|}{\textbf{DFS random spanning tree generation}}\\\hline
\textbf{Graph Dim.} & \textbf{GSST-L} & \textbf{GSST-LR} & \textbf{GSST-R} &
\textbf{GSST-LW} & \textbf{GSST-LD} & \textbf{No.Trees}\\\hline
\multicolumn{1}{|r|}{5$\times$5} & 6 & 6 & 6 & 6 & 6 & 10$^{5}$\\\hline
\multicolumn{1}{|r|}{6$\times$6} & 7 & 7 & 8 & 7 & 7 & 2$\cdot$10$^{5}%
$\\\hline
\multicolumn{1}{|r|}{7$\times$7} & 9 & 9 & 9 & 9 & 9 & 3$\cdot$10$^{5}%
$\\\hline
\multicolumn{1}{|r|}{8$\times$8} & 11 & 11 & 11 & 11 & 11 & 3$\cdot$10$^{5}%
$\\\hline
\multicolumn{1}{|r|}{9$\times$9} & 12 & 13 & 12 & 13 & 12 & 4$\cdot$10$^{5}%
$\\\hline
\multicolumn{1}{|r|}{10$\times$10} & 14 & 14 & 14 & 14 & 15 & 5$\cdot$10$^{5}%
$\\\hline
\end{tabular}

\end{center}

\noindent\textbf{Table 8.b. }Minimum search number attained by the DFS\ GSST
variants on the full-grid graphs.\bigskip

\begin{center}

\end{center}

We see that, on the grids, the uniform variants perform better than the
DFS\ ones; this is exactly the opposite situation from what happens in
interval graphs. With few exceptions, the uniform variants (Table 8.a)\ either
attain the actual minimum search number (for the $5\times5$ graph) or incur a
small overhead of 1 or 2 extra searchers. The situation changes with the
$10\times10$ graph, where the minimum attained search number is 13, with a
30\% overhead over the true $s_{N}^{imc}=10$. However, the $10\times10$ grid
graph is very complex, with 100 nodes, 180 edges and 5.6943$\cdot10^{42}$
spanning trees. Judging from the distribution of the attained search numbers
(which, for economy of space, is not displayed here) lower search numbers
\emph{can} be attained but they require a much larger number of spanning
trees. To get an idea of the complexity of the full grid graphs, we list in
Table 8.c. the number of spanning trees for each of the six graphs used. This
provides a measure of the hardness of node-clearing the graph. This actually
quite reasonable: a graph with many cycles (and many spanning trees)\ is
harder to clear because it contains many escape routes for the evader.

\begin{center}%
\begin{tabular}
[c]{|l|r|}\hline
\textbf{Graph Dim.} & \textbf{No. of spaning trees}\\\hline
\multicolumn{1}{|r|}{5$\times$5} & 5.57560$\cdot10^{8}$\\\hline
\multicolumn{1}{|r|}{6$\times$6} & 3.2566$\cdot10^{13}$\\\hline
\multicolumn{1}{|r|}{7$\times$7} & 1.9872$\cdot10^{19}$\\\hline
\multicolumn{1}{|r|}{8$\times$8} & 1.2623$\cdot10^{26}$\\\hline
\multicolumn{1}{|r|}{9$\times$9} & 8.3266$\cdot10^{33}$\\\hline
\multicolumn{1}{|r|}{10$\times$10} & 5.6943$\cdot10^{42}$\\\hline
\end{tabular}

\textbf{Table 8.c. }Number of spanning trees for each of the full-grid graphs

\end{center}

On the positive side, we see that node-clearing schedules can be computed for
the full grid graphs using slightly more than the minimum number of searchers
and in reasonable computation time. For example, the uniform GSST-L\ variant
finds the first 8-searcher clearing schedule for the $7\times7$ graph in under
2 mins and an 11-searcher clearing schedule for the $9\times9$ graph in under
10 mins. Considering that (due to the large number of cycles)\ grid graphs are
potentially some of the hardest graphs for GSST\footnote{It is interesting to
note that a somewhat related fact is well known in the probabilistic inference
literature: inference is easier on graphs with low tree-width \cite{Chavira}%
.}, we find these results to be quite satisfactory.

\subsubsection{Depleted Grids}

Our final experiment involves what we call \textquotedblleft depleted grid
graphs\textquotedblright. These are obtained as follows. First we choose two
parameters, $J_{1}$ and $J_{2}$, the length and width of the grid. Just like
with the full grid graphs, we place nodes at the positions $\left(
j_{1},j_{2}\right)  \in\left\{  1,2,...,J_{1}\right\}  \times\left\{
1,2,...,J_{2}\right\}  $. We also connect all nearest neighbor nodes along the
following lines

\begin{enumerate}
\item one horizontal line: $\left\{  \left(  j_{1},1\right)  \right\}
_{j_{1}\in\left\{  1,2,...,J_{2}\right\}  }$,

\item $J_{1}$ vertical lines:$\ \left\{  \left(  j_{1},j_{2}\right)  \right\}
_{j_{2}\in\left\{  1,2,...,J_{2}\right\}  }$, for $j_{1}=1,2,...,J_{1},$
\end{enumerate}

\noindent obtaining a tree of the form indicated by the solid edges in
Fig.\ref{fig18}. Finally, we consider all pairs of nearest neighbors $\left(
i_{1},i_{2}\right)  $, $\left(  j_{1},j_{2}\right)  $ which are not already
connected and add to the graph an edge connecting each such pair with
probability $p$.

\begin{figure}[h]
\centering\scalebox{0.7}{\includegraphics{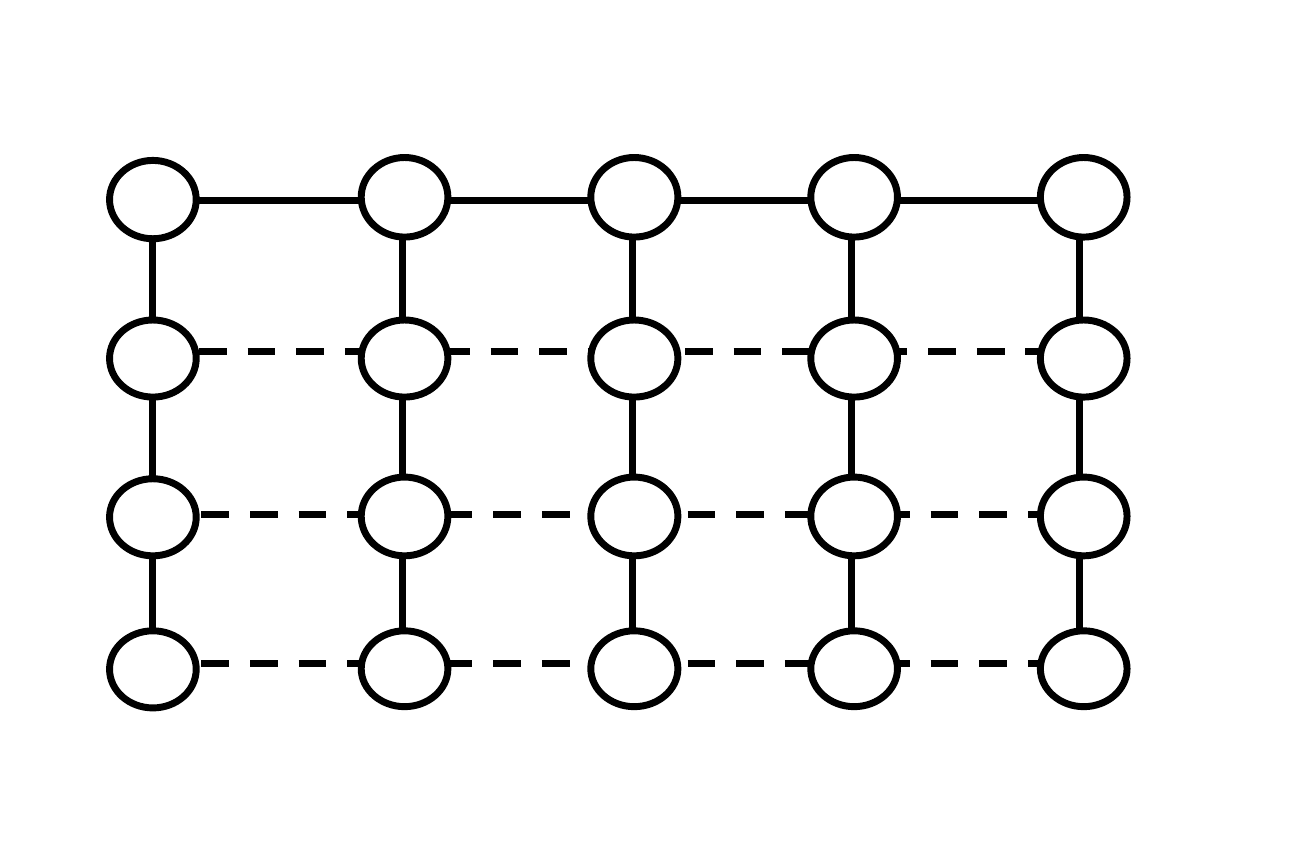}}\caption{Generating a
depleted grid graph. The solid lines correspond to edges which are always
present in the graph. Each dotted line becomes and edge with probability $p.$}%
\label{fig18}%
\end{figure}

The resulting graph $\mathbf{G}$ is a subgraph of the full $J_{1}\times J_{2}$
grid and has $s_{N}^{imc}\left(  \mathbf{G}\right)  \leq\min\left(
J_{1},J_{2}\right)  $; in other words, $\min\left(  J_{1},J_{2}\right)  $
yields an upper bound (useful for estimating the performance of GSST) of
$s_{N}^{imc}\left(  \mathbf{G}\right)  $. It seems reasonable that, on the
average, $s_{N}^{imc}\left(  \mathbf{G}\right)  $ is an increasing function of
$p$ and for $p=1$ the bound is tight, i.e. $s_{N}^{imc}\left(  \mathbf{G}%
\right)  =\min\left(  J_{1},J_{2}\right)  $; it will be useful to keep this in
mind when evaluating the results of the experiment. Note also that, as $p$
approaches 0, the graph becomes more similar to a tree (has fewer cycles).

We generate twelve families of depleted grids; namely we use dimensions
$5\times5$, $7\times7$, $8\times8$, $10\times10$ and $p$ values 0.4375, 0.7500
and 0.9375, yielding twelve combinations. We generate \emph{fifty }graphs from
each family and apply to each of these the ten GSST\ variants, \emph{using
}$2\cdot10^{5}$\emph{ spanning trees for each graph}. The results appear in
Tables 9.a (uniform variants) and 9.b (DFS\ variants). Namely, for each
combination we present the minimum search number attained, averaged over the
fifty graphs of the corresponding family. It can be seen that the resulting
numbers are quite low, especially for the uniform variants, often going under
the $\min\left(  J_{1},J_{2}\right)  $ bound.

\begin{center}%
\begin{tabular}
[c]{|c|c|c|c|c|c|c|}\hline
\textbf{Graph Dim.} & $p$ & \textbf{GSST-L} & \textbf{GSST-R} &
\textbf{GSST-LR} & \textbf{GSST-LW} & \textbf{GSST-LD}\\\hline
\multicolumn{1}{|r|}{5$\times$5} & \multicolumn{1}{|r|}{0.4375} &
\multicolumn{1}{|r|}{2.86} & \multicolumn{1}{|r|}{2.86} &
\multicolumn{1}{|r|}{2.86} & \multicolumn{1}{|r|}{2.86} &
\multicolumn{1}{|r|}{2.86}\\\hline
\multicolumn{1}{|r|}{5$\times$5} & \multicolumn{1}{|r|}{0.7500} &
\multicolumn{1}{|r|}{3.68} & \multicolumn{1}{|r|}{3.68} &
\multicolumn{1}{|r|}{3.50} & \multicolumn{1}{|r|}{3.48} &
\multicolumn{1}{|r|}{3.48}\\\hline
\multicolumn{1}{|r|}{5$\times$5} & \multicolumn{1}{|r|}{0.9375} &
\multicolumn{1}{|r|}{4.53} & \multicolumn{1}{|r|}{4.53} &
\multicolumn{1}{|r|}{4.12} & \multicolumn{1}{|r|}{4.18} &
\multicolumn{1}{|r|}{4.20}\\\hline
\multicolumn{1}{|r|}{7$\times$7} & \multicolumn{1}{|r|}{0.4375} &
\multicolumn{1}{|r|}{3.63} & \multicolumn{1}{|r|}{3.56} &
\multicolumn{1}{|r|}{3.64} & \multicolumn{1}{|r|}{3.57} &
\multicolumn{1}{|r|}{3.62}\\\hline
\multicolumn{1}{|r|}{7$\times$7} & \multicolumn{1}{|r|}{0.7500} &
\multicolumn{1}{|r|}{5.24} & \multicolumn{1}{|r|}{5.14} &
\multicolumn{1}{|r|}{5.32} & \multicolumn{1}{|r|}{5.08} &
\multicolumn{1}{|r|}{5.06}\\\hline
\multicolumn{1}{|r|}{7$\times$7} & \multicolumn{1}{|r|}{0.9375} &
\multicolumn{1}{|r|}{6.62} & \multicolumn{1}{|r|}{6.40} &
\multicolumn{1}{|r|}{6.56} & \multicolumn{1}{|r|}{6.30} &
\multicolumn{1}{|r|}{6.28}\\\hline
\multicolumn{1}{|r|}{8$\times$8} & \multicolumn{1}{|r|}{0.4375} &
\multicolumn{1}{|r|}{4.22} & \multicolumn{1}{|r|}{4.14} &
\multicolumn{1}{|r|}{4.08} & \multicolumn{1}{|r|}{4.08} &
\multicolumn{1}{|r|}{4.00}\\\hline
\multicolumn{1}{|r|}{8$\times$8} & \multicolumn{1}{|r|}{0.7500} &
\multicolumn{1}{|r|}{6.10} & \multicolumn{1}{|r|}{6.14} &
\multicolumn{1}{|r|}{5.90} & \multicolumn{1}{|r|}{5.86} &
\multicolumn{1}{|r|}{5.84}\\\hline
\multicolumn{1}{|r|}{8$\times$8} & \multicolumn{1}{|r|}{0.9375} &
\multicolumn{1}{|r|}{7.70} & \multicolumn{1}{|r|}{7.76} &
\multicolumn{1}{|r|}{7.52} & \multicolumn{1}{|r|}{7.40} &
\multicolumn{1}{|r|}{7.46}\\\hline
\multicolumn{1}{|r|}{10$\times$10} & \multicolumn{1}{|r|}{0.4375} &
\multicolumn{1}{|r|}{5.44} & \multicolumn{1}{|r|}{5.50} &
\multicolumn{1}{|r|}{5.32} & \multicolumn{1}{|r|}{5.30} &
\multicolumn{1}{|r|}{5.28}\\\hline
\multicolumn{1}{|r|}{10$\times$10} & \multicolumn{1}{|r|}{0.7500} &
\multicolumn{1}{|r|}{7.94} & \multicolumn{1}{|r|}{8.02} &
\multicolumn{1}{|r|}{7.84} & \multicolumn{1}{|r|}{7.68} &
\multicolumn{1}{|r|}{7.80}\\\hline
\multicolumn{1}{|r|}{10$\times$10} & \multicolumn{1}{|r|}{0.9375} &
\multicolumn{1}{|r|}{10.20} & \multicolumn{1}{|r|}{10.40} &
\multicolumn{1}{|r|}{10.24} & \multicolumn{1}{|r|}{10.06} &
\multicolumn{1}{|r|}{10.94}\\\hline
\end{tabular}

\end{center}

\noindent\textbf{Table 9.a.}\emph{ }Average minimum number of searchers
required to node-clear each family of depleted-grid graphs\ by the uniform
GSST variants; number of spanning trees generated is $M=2\cdot10^{5}$.

\bigskip

\begin{center}%
\begin{tabular}
[c]{|c|c|c|c|c|c|c|}\hline
\textbf{Graph Dim.} & $p$ & \textbf{GSST-L} & \textbf{GSST-R} &
\textbf{GSST-LR} & \textbf{GSST-LW} & \textbf{GSST-LD}\\\hline
\multicolumn{1}{|r|}{5$\times$5} & \multicolumn{1}{|r|}{0.4375} &
\multicolumn{1}{|r|}{3.36} & \multicolumn{1}{|r|}{3.36} &
\multicolumn{1}{|r|}{3.36} & \multicolumn{1}{|r|}{3.36} &
\multicolumn{1}{|r|}{3.36}\\\hline
\multicolumn{1}{|r|}{5$\times$5} & \multicolumn{1}{|r|}{0.7500} &
\multicolumn{1}{|r|}{4.38} & \multicolumn{1}{|r|}{4.36} &
\multicolumn{1}{|r|}{4.36} & \multicolumn{1}{|r|}{4.36} &
\multicolumn{1}{|r|}{4.36}\\\hline
\multicolumn{1}{|r|}{5$\times$5} & \multicolumn{1}{|r|}{0.9375} &
\multicolumn{1}{|r|}{5.04} & \multicolumn{1}{|r|}{5.04} &
\multicolumn{1}{|r|}{5.02} & \multicolumn{1}{|r|}{5.04} &
\multicolumn{1}{|r|}{4.98}\\\hline
\multicolumn{1}{|r|}{7$\times$7} & \multicolumn{1}{|r|}{0.4375} &
\multicolumn{1}{|r|}{4.24} & \multicolumn{1}{|r|}{4.24} &
\multicolumn{1}{|r|}{4.24} & \multicolumn{1}{|r|}{4.24} &
\multicolumn{1}{|r|}{4.24}\\\hline
\multicolumn{1}{|r|}{7$\times$7} & \multicolumn{1}{|r|}{0.7500} &
\multicolumn{1}{|r|}{5.82} & \multicolumn{1}{|r|}{5.78} &
\multicolumn{1}{|r|}{5.84} & \multicolumn{1}{|r|}{5.80} &
\multicolumn{1}{|r|}{5.77}\\\hline
\multicolumn{1}{|r|}{7$\times$7} & \multicolumn{1}{|r|}{0.9375} &
\multicolumn{1}{|r|}{7.24} & \multicolumn{1}{|r|}{7.30} &
\multicolumn{1}{|r|}{7.28} & \multicolumn{1}{|r|}{7.28} &
\multicolumn{1}{|r|}{7.32}\\\hline
\multicolumn{1}{|r|}{8$\times$8} & \multicolumn{1}{|r|}{0.4375} &
\multicolumn{1}{|r|}{4.68} & \multicolumn{1}{|r|}{4.66} &
\multicolumn{1}{|r|}{4.66} & \multicolumn{1}{|r|}{4.63} &
\multicolumn{1}{|r|}{4.64}\\\hline
\multicolumn{1}{|r|}{8$\times$8} & \multicolumn{1}{|r|}{0.7500} &
\multicolumn{1}{|r|}{6.56} & \multicolumn{1}{|r|}{6.62} &
\multicolumn{1}{|r|}{6.62} & \multicolumn{1}{|r|}{6.64} &
\multicolumn{1}{|r|}{6.62}\\\hline
\multicolumn{1}{|r|}{8$\times$8} & \multicolumn{1}{|r|}{0.9375} &
\multicolumn{1}{|r|}{8.58} & \multicolumn{1}{|r|}{8.52} &
\multicolumn{1}{|r|}{8.62} & \multicolumn{1}{|r|}{8.54} &
\multicolumn{1}{|r|}{8.56}\\\hline
\multicolumn{1}{|r|}{10$\times$10} & \multicolumn{1}{|r|}{0.4375} &
\multicolumn{1}{|r|}{5.80} & \multicolumn{1}{|r|}{5.80} &
\multicolumn{1}{|r|}{5.78} & \multicolumn{1}{|r|}{5.78} &
\multicolumn{1}{|r|}{5.80}\\\hline
\multicolumn{1}{|r|}{10$\times$10} & \multicolumn{1}{|r|}{0.7500} &
\multicolumn{1}{|r|}{8.62} & \multicolumn{1}{|r|}{8.58} &
\multicolumn{1}{|r|}{8.68} & \multicolumn{1}{|r|}{8.72} &
\multicolumn{1}{|r|}{8.62}\\\hline
\multicolumn{1}{|r|}{10$\times$10} & \multicolumn{1}{|r|}{0.9375} &
\multicolumn{1}{|r|}{11.34} & \multicolumn{1}{|r|}{11.38} &
\multicolumn{1}{|r|}{11.54} & \multicolumn{1}{|r|}{11.38} &
\multicolumn{1}{|r|}{11.26}\\\hline
\end{tabular}

\end{center}

\noindent\textbf{Table 9.b}.Average minimum number of searchers required to
node-clear each family of depleted-grid graphs\ by the DFS\ GSST variants;
number of spanning trees generated is $M=2\cdot10^{5}$.

\bigskip

The depleted grid results show the high performance of GSST on graphs with
large numbers of spanning trees and many cycles. Better results are obtained
for smaller values of $p$, i.e. when the graph becomes more similar to a tree;
this is to be expected, since a basic component of the GSST algorithm is
search along a spanning tree. It is interesting to note that even for a quite
high value of $p$, namely 0.9375, results (at least for uniform
GSST\ variants)\ are quite good; taking into account also the results of the
full grids, it appears that the increase of attained search number over
$\min\left(  J_{1},J_{2}\right)  $ happens only when $p$ gets \emph{very}
close to 1. Once again we observe that the uniform variants perform better
than the DFS\ ones.

\subsection{Discussion}

\label{grid0604}

Our experiments demonstrate the performance of GSST on several different
classes of graphs. On many complex graphs, GSST was able to find a minimal
search schedule in reasonable time. On \emph{all} graphs, at least one
near-minimal schedule is computed early in the execution of the algorithm. In
fact, the \textquotedblleft anytime\textquotedblright\ operation of GSST\ (see
also the discussion in Section \ref{sec0501})\ yields reasonably good
solutions in very short time and then keeps improving on these solutions as
long as additional computation time is available. While the minimal schedules
form a small percentage of the computed solutions, many more near-minimal
solutions are available. These properties of GSST\ are highly desirable,
especially given the fact that (as far as we are aware) no other algorithm has
been \emph{implemented} which can tackle graphs of the size and complexity
presented here.

We can highlight some specific conclusions supported by the experiments on the
various graphs considered in Section \ref{sec0601} -- \ref{sec0603}.

\begin{enumerate}
\item The NSH and National Art Gallery graphs show the application of GSST on
graphs that are derived from representations of indoor environments. The
methods that incorporate Barriere labeling improve performance on the NSH map
because it is similar to a tree. The National Art Gallery, on the other hand,
is more similar to a grid, which lessens the advantage of using Barriere
labeling to guide traversal.

\item The results on interval graphs show that GSST can yield near-minimal
schedules on a large class of graphs. They also demonstrate that, with
increasing complexity, GSST \ scales well both in performance and computation time.

\item Finally, the grid results show the high performance of GSST on graphs
with large numbers of spanning trees and many cycles. These are potentially
some of the hardest graphs for GSST. Also, it appears that the the increase of
attained search number over $\min\left(  J_{1},J_{2}\right)  $ happens only
when $p$ gets \emph{very} close to 1.
\end{enumerate}

Comparing spanning tree generation methods, we see that DFS\ generation
performs better on all graphs considered except for grid graphs (and the
tree/grid graph) where the uniform ST\ generation is better. At this point we
do not have a good explanation but we intend to further research this fact
because we believe it will give us a better understanding of which are the
\textquotedblleft good\textquotedblright\ spanning trees, i.e. the ones
associated with minimal searches. See Section \ref{sec07} for some additional
remarks on this issue.

Regarding edge traversal methods, there is no single GSST\ variant which
consistently outperforms all others. However, uniform GSST-L never does too
poorly and is fast; GSST-LW has the same advantages and, in addition, is
provably complete. Hence, in case node-clearing schedules must be quickly
produced (i.e. computation time is at premium), one can use a
\textquotedblleft reduced suite\textquotedblright\ of GSST\ variants,
consisting of GSST-L and GSST-LW (with both uniform and DFS\ spanning tree
generation) and expect to obtain nearly as good results (at a fraction of the
computing time)\ as when using the \textquotedblleft full\textquotedblright%
\ suite of the ten GSST\ variants.

Let us also stress that all variants of GSST\ can attain lower search numbers
than the ones presented here if sufficient running time is available.

\section{Conclusion}

\label{sec07}

Motivated by the problem of robotic pursuit / evasion, in the current paper we
have investigated \emph{node search}, i.e., the capture by a team of searchers
of an invisible evader located in the \emph{nodes} of a graph. This problem
has so far received little attention in the graph theoretic literature, with
most related publications concentrating on the problem of edge-located evader.
The basic contributions of the current paper are of two kinds.

\begin{enumerate}
\item From the theoretical point of view, we have shown that, in general
graphs, the problem of node search is easier than that of edge search, in the
sense that every edge clearing search is also a node clearing one; but the
converse does not hold in general. We have then concentrated on the internal
monotone connected (IMC)\ node search of trees and shown that it is
essentially equivalent to edge search under the same restrictions; hence
Barriere's tree search algorithm, originally designed for edge search, can
also be used for node search.

\item From the algorithmic point of view, we have presented GSST, a new
algorithm which performs IMC\ search on general graphs. This algorithm is
based on the fact that every node clearing search generates a spanning tree;
hence node-clearing a preselected spanning tree (by the use of \emph{tree
searchers}) and simultaneously blocking recontamination (by the use of
\emph{guards}) \ monotonically node-clears the graph. Because spanning tree
generation and search can both be performed very efficiently, a large number
of spanning trees can be tried by GSST\ until the one yielding the minimum
number of searchers is discovered. Experiments have shown that GSST can
quickly node-clear large and complex graphs using a small number of searchers.
\end{enumerate}

Many issues touched upon in the current paper require additional research. We
conclude by listing some future research directions, categorized as follows.

\begin{enumerate}
\item \textbf{Theory}.

\begin{enumerate}
\item \emph{Optimality}. Given a graph $\mathbf{G}$ with IMC node search
number $s_{N}^{imc}\left(  \mathbf{G}\right)  $. Suppose we have $K$ searchers
with $K\geq s_{N}^{imc}\left(  \mathbf{G}\right)  $. What is the best we can
do? Of course \textquotedblleft best\textquotedblright\ depends on some
optimality criterion. For example we may want to clear the graph in the
minimum number of steps. Or, we may want a search strategy which minimizes the
cumulative \textquotedblleft area\textquotedblright\ of the dirty set (i.e.,
$\sum_{t=1}^{T}V_{N}^{D}\left(  t\right)  $). We would like to obtain
algorithms which solve such optimization problems.

\item Our experiments show that the uniform GSST\ variants perform better than
DFS\ on grids and grid-like graphs. We want to discover a mathematical
explanation of this fact, e.g., to find necessary and / or sufficient
conditions under which uniform GSST\ outperforms DFS\ GSST. A result of this
type will probably have useful computational implications -- see also item 2.(a)\ below.\ 

\item We want to establish a sequence of inequalities between node search
numbers. In other words, to refine (\ref{eq0452}), either for a general graph
$\mathbf{G}$, or for the special case when $\mathbf{G}$ is a tree. For the
latter case, we conjecture that we can re-establish the Barriere et al.
inequalities (\ref{eq0451}) so as to hold for \emph{node-}\ rather than
edge-search numbers.

\item Finally, we are interested in questions of \emph{consistency}. Suppose
that we are given a graph along with a \emph{configuration} of each nodes
(i.e., a specification of the clear and dirty nodes). Can this configuration
be obtained by a sequence of moves in the node game? In the edge game? Also,
given a configuration of clear and dirty nodes and another one of clear and
dirty edges, are these configurations compatible in the node game? In the edge game?
\end{enumerate}

\item \textbf{Computation. }The GSST\ algorithm searches the space of spanning
trees in a random manner. Is there a way to bias the search in some useful
way?\ For this we need a characterization of \textquotedblleft
good\textquotedblright\ spanning trees, i.e., trees which yield a low number
of searchers (this includes tree searchers \emph{and} guards). We have no such
characterization at present. Perhaps we can obtain it by understanding why
uniform variants perform better than DFS\ on grid graphs (and conversely on
interval graphs).

\item \textbf{Extensions}. We want to study theoretically and develop
algorithms for the following variants of the search problem.

\begin{enumerate}
\item For less restricted types of node search, e.g., internal monotone (but
not connected), internal connected (but not monotone); or for completely
unrestricted node search.

\item For different types of \ pursuer / evader behavior; e.g., for pursuers
with extended (non-local)\ visibility, evaders with finite speed etc.

\item How does the graph search problem look from the evader's point of view?
Given a graph $\mathbf{G}$ and $K$ searchers, what is the best the evader can
do if $K<s\left(  \mathbf{G}\right)  $? What if $K\geq s\left(  \mathbf{G}%
\right)  $? In the latter case capture is guaranteed, but can the evader, for
example, maximize the number of steps until capture?\ Many other interesting
problems can be posed from the evader's point of view.
\end{enumerate}
\end{enumerate}

\bigskip

\noindent\textbf{Acknowledgement}. We gratefully acknowledge Sanjiv Singh for
many useful discussions and valuable comments on the current paper.

\newpage

\appendix

\section{Appendix: Edge Search on Trees}

\label{secA}

In this Appendix we present the \textbf{R-Label} and \textbf{R-Search}
algorithms, which perform rooted IMC\ search of \emph{trees }and are also used
by the GSST\ algorithm. \textbf{R-Label} and \textbf{R-Search} are simplified
versions of corresponding algorithms presented by Barriere et al.
\cite{Barriere1}; some theoretical results from \cite{Barriere1} are also
reviewed. In our presentation the terminology and notation of \cite{Barriere1}
are somewhat changed, to conform with the current paper.

The next Lemma and Definition play a crucial role in computing the IMC\ search
number of a tree.

\begin{lemma}
\label{prp0407}\cite{Barriere1} Given a rooted tree $\mathbf{T}_{x}$ and a
node $y$ of $\mathbf{T}_{x}$, if $y$ has more than one children, enumerate
them as $y_{1},y_{2},...,y_{K}$, so that they satisfy $s_{E}^{imc}\left(
\mathbf{T}_{x}\left[  y_{k}\right]  \right)  \geq s_{E}^{imc}\left(
\mathbf{T}_{x}\left[  y_{k+1}\right]  \right)  $ for $k=1,2,...,K-1$. Then%
\begin{equation}
s_{E}^{imc}\left(  \mathbf{T}_{x}\left[  y\right]  \right)  =\max\left(
s_{E}^{imc}\left(  \mathbf{T}_{x}\left[  y_{1}\right]  \right)  ,s_{E}%
^{imc}\left(  \mathbf{T}_{x}\left[  y_{2}\right]  \right)  +1\right)  .
\label{eqA001}%
\end{equation}

\end{lemma}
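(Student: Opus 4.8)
The plan is to prove the identity by establishing the two inequalities $s_E^{imc}(\mathbf{T}_x[y]) \ge \max\!\left(s_E^{imc}(\mathbf{T}_x[y_1]),\, s_E^{imc}(\mathbf{T}_x[y_2])+1\right)$ and $s_E^{imc}(\mathbf{T}_x[y]) \le \max\!\left(s_E^{imc}(\mathbf{T}_x[y_1]),\, s_E^{imc}(\mathbf{T}_x[y_2])+1\right)$ separately; throughout I would write $\sigma_k = s_E^{imc}(\mathbf{T}_x[y_k])$, so that the children are indexed with $\sigma_1 \ge \sigma_2 \ge \cdots \ge \sigma_K$. The single structural fact I would lean on repeatedly is that, since $\mathbf{T}$ is a tree, each child subtree $\mathbf{T}_x[y_k]$ is attached to the rest of $\mathbf{T}_x[y]$ only through the node $y$ (via the single edge $yy_k$); hence $y$ is a cut vertex separating the subtrees of the different children.

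For the upper bound I would exhibit an explicit rooted IMC edge search of $\mathbf{T}_x[y]$. First place a searcher on $y$ as a guard, then clear the subtrees in the order $y_K, y_{K-1}, \dots, y_2, y_1$ (hardest last). To clear $\mathbf{T}_x[y_k]$ I invoke the optimal rooted IMC search of $\mathbf{T}_x[y_k]$ guaranteed by Theorem \ref{prp0404} and Corollary \ref{prp0405}, which begins by sliding a searcher from $y$ across $yy_k$ and uses $\sigma_k$ searchers in all, located strictly inside $\mathbf{T}_x[y_k]$; the guard stays on $y$ throughout. After $\mathbf{T}_x[y_k]$ is clear I retract its searchers back to $y$ along already-cleared edges (the guard on $y$ blocks recontamination) and reuse them. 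While some later child is still dirty the peak is $\sigma_k+1 \le \sigma_2+1$; but during the final, hardest subtree $\mathbf{T}_x[y_1]$ every other child is already clear, so $y$ need no longer be separately guarded and that phase costs only $\sigma_1$. Hence the schedule uses $\max(\sigma_1,\sigma_2+1)$ searchers, with monotonicity and connectedness inherited from the sub-searches.

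For the lower bound I must show every rooted IMC edge search $\mathbf{S}$ of $\mathbf{T}_x[y]$ uses at least $\sigma_1$ and at least $\sigma_2+1$ searchers. The bound $\ge \sigma_1$ is immediate: restricting $\mathbf{S}$ to the moves inside $\mathbf{T}_x[y_1]$ yields a valid rooted IMC search of that subtree (connectivity of the cleared set is preserved because the subtree meets the rest only at $y$), which at some instant must hold $\sigma_1$ searchers. For $\ge \sigma_2+1$ I would consider the two subtrees $\mathbf{T}_x[y_1]$ and $\mathbf{T}_x[y_2]$ and let $p \in \{1,2\}$ index whichever of them $\mathbf{S}$ finishes clearing first, $q$ the other, so the last dirty edge of $\mathbf{T}_x[y_q]$ is cleared strictly after that of $\mathbf{T}_x[y_p]$. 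At a peak instant $t_p$ of the induced sub-search on $\mathbf{T}_x[y_p]$ there are $\sigma_p$ searchers strictly below $y$ inside $\mathbf{T}_x[y_p]$, while $\mathbf{T}_x[y_q]$ still carries a dirty edge; since $y$ is the cut vertex joining the partly cleared subtree $p$ to the dirty subtree $q$, rules \textbf{E3}--\textbf{E4} together with the monotonicity of $\mathbf{S}$ force $y$ itself to be guarded at $t_p$. This guard is an extra searcher, giving at least $\sigma_p+1 \ge \sigma_2+1$ (using $\sigma_p \ge \sigma_2$ for either value of $p$).

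The main obstacle, and the step I would write most carefully, is the claim that the restriction of a global search to a single child subtree is a legitimate rooted IMC search of that subtree and therefore must reach $\sigma_k$ searchers — in particular verifying that the cleared portion of $\mathbf{T}_x[y_k]$ stays connected and that the searchers realizing the peak sit strictly below $y$, so that the guard on $y$ is genuinely additional. All of this hinges on $y$ being a cut vertex, which is exactly where the tree hypothesis enters; the recontamination bookkeeping (through rules \textbf{E3}--\textbf{E4} and the e-unguarded-path condition) is the routine but delicate part to get exactly right.
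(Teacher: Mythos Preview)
The paper does not prove this lemma at all: it is quoted verbatim from Barri\`ere et al.\ \cite{Barriere1} and used as a black box, so there is no in-paper proof to compare against. Your two-inequality strategy (explicit schedule for the upper bound, restriction/cut-vertex argument for the lower bound) is exactly the standard argument behind the cited result, and it is correct in outline.

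Two places deserve tightening. First, the restriction of $\mathbf{S}$ to a child subtree $\mathbf{T}_x[y_k]$ is \emph{not} literally an internal search: slides $y_k\to y$ and $y\to y_k$ become removals and re-placements at $y_k$ in the restricted picture. What survives is that the restriction is a rooted \emph{monotone connected} search of $\mathbf{T}_x[y_k]$ (both properties are inherited precisely because $y$ is a cut vertex). You then need Theorem~\ref{prp0406} (namely $s_E^{mc}(\mathbf{T})=s_E^{imc}(\mathbf{T})$ for trees) to conclude that the restricted search still requires at least $\sigma_k$ searchers. You flagged this step as the delicate one, and that is the right fix.

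Second, in the $\sigma_2+1$ argument your assertion that ``$y$ itself must be guarded at $t_p$'' is slightly too strong. If $yy_q$ has already been traversed and some searcher deeper inside $\mathbf{T}_x[y_q]$ sits on the path to the remaining dirty edge, then $y$ can be unguarded without violating monotonicity. The conclusion you want still holds, because that blocking searcher lies outside $\mathbf{T}_x[y_p]$ and therefore supplies the extra $+1$; just rephrase the claim as ``at time $t_p$ at least one searcher lies outside $\mathbf{T}_x[y_p]$'' rather than pinning it to $y$.
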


\begin{definition}
\label{prp0410}\cite{Barriere1} Given a tree $\mathbf{T=}\left(  V,E\right)  $
and a node $x\in V$, consider all the edges $xy$ incident on $x$ and label
them as follows.

\begin{enumerate}
\item If $y$ is a leaf, then $\lambda_{x}\left(  xy\right)  =1$.

\item If $y$ is not a leaf, let $z_{1},z_{2},...,z_{K}$ be the neighbors of
$y$ other than $x$, enumerated so that $\lambda_{y}\left(  yz_{k}\right)
\geq\lambda_{y}\left(  yz_{k+1}\right)  $\ (for $k=1,2,...,K-1$) and define%
\begin{equation}
\lambda_{x}\left(  xy\right)  =\max\left(  \lambda_{y}\left(  yz_{1}\right)
,\lambda_{y}\left(  yz_{2}\right)  +1\right)  . \label{eqA002}%
\end{equation}

\end{enumerate}
\end{definition}

Note that the above definition assigns \emph{two} labels to each edge $xy$,
namely $\lambda_{x}\left(  xy\right)  $ and $\lambda_{y}\left(  xy\right)  $.
Intuitively, $\lambda_{x}\left(  xy\right)  $ is the number of searchers that
must cross the \emph{directed }$xy$ edge in an IMC\ edge-clearing of the tree.
In \cite{Barriere1} Barriere et al. present the \textbf{Label }algorithm which
computes the labels $\left\{  \lambda_{x}\left(  xy\right)  \right\}  _{xy\in
E}$. Furthermore they prove the following.

\begin{lemma}
\label{prp0411}\cite{Barriere1} For every tree $\mathbf{T}$ and every edge
$xy$ of $\mathbf{T}$, we have $s_{E}^{imc}\left(  \mathbf{T}_{x}[y]\right)
=\lambda_{x}\left(  xy\right)  $.
\end{lemma}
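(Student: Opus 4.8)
The plan is to prove $s_E^{imc}(\mathbf{T}_x[y]) = \lambda_x(xy)$ by induction on the height (equivalently, the number of nodes) of the rooted subtree $\mathbf{T}_x[y]$, exploiting the fact that the two quantities obey \emph{the same} recursion. Comparing Definition \ref{prp0410} with Lemma \ref{prp0407}, one sees that the defining recurrence for $\lambda_x(xy)$ in (\ref{eqA002}) is formally identical to the recurrence for $s_E^{imc}(\mathbf{T}_x[y])$ in (\ref{eqA001}). Thus the entire argument reduces to a ``both sides satisfy the same recurrence with the same initial condition'' style of proof, and the real content is checking that the two recursions are driven by the same inputs and bottom out at the same base case.

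First I would record the combinatorial correspondence that makes the induction go through. Since $\mathbf{T}$ is rooted at $x$ and $xy\in E$, the node $x$ is the parent of $y$, so the neighbors $z_1,\ldots,z_K$ of $y$ other than $x$ appearing in Definition \ref{prp0410} are exactly the children of $y$, i.e. the nodes $y_1,\ldots,y_K$ of Lemma \ref{prp0407}. The key observation is that the rooted subtree hanging below such a child is independent of which ancestor we root at: because the parent of $z_k$ is $y$ in both rootings, we have $\mathbf{T}_x[z_k]=\mathbf{T}_y[z_k]$ as rooted trees, and each has strictly smaller height than $\mathbf{T}_x[y]$. This is precisely what licenses applying the induction hypothesis to the edges $yz_k$.

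For the base case I would take $y$ a leaf, so $\mathbf{T}_x[y]$ is the single node $y$; then $\lambda_x(xy)=1$ by part 1 of Definition \ref{prp0410}, which I would reconcile with the (one-searcher) value of $s_E^{imc}$ for this trivial subtree. For the inductive step, assuming the claim for every edge whose associated subtree is shorter, applying it to each $yz_k$ gives $\lambda_y(yz_k)=s_E^{imc}(\mathbf{T}_y[z_k])=s_E^{imc}(\mathbf{T}_x[z_k])$. Hence the ordering of the children by decreasing $\lambda_y(yz_k)$ coincides with the ordering by decreasing $s_E^{imc}(\mathbf{T}_x[z_k])$ demanded in Lemma \ref{prp0407}, and substituting the induction hypothesis into (\ref{eqA002}) turns it verbatim into (\ref{eqA001}), yielding $\lambda_x(xy)=s_E^{imc}(\mathbf{T}_x[y])$.

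The main obstacle, and where I would spend the most care, is reconciling the degenerate cases with the cleanly-stated two-child recursion. Lemma \ref{prp0407} is asserted only for $y$ with more than one child, so I would separately treat the single-child case, where both (\ref{eqA002}) and the (suitably interpreted) search-number recurrence collapse to $\lambda_x(xy)=\lambda_y(yz_1)$ and $s_E^{imc}(\mathbf{T}_x[y])=s_E^{imc}(\mathbf{T}_x[z_1])$, reflecting that prepending a degree-one node at the root of a subtree changes neither quantity. I would likewise pin down the convention for $s_E^{imc}$ of the single-node subtree so that it is consistent with the $\lambda=1$ base case. Once these boundary conventions are fixed coherently, the structural identity of the two recurrences does all the work and the induction closes.
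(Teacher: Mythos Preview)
Your approach is correct and is essentially the natural one: both $\lambda_x(xy)$ and $s_E^{imc}(\mathbf{T}_x[y])$ satisfy the same recursion with the same base case, so they coincide by induction on the subtree. Note, however, that the paper does not actually prove this lemma; it is quoted from Barri\`ere et al.\ \cite{Barriere1} and stated without proof, so there is no ``paper's own proof'' to compare against. Your argument is precisely the kind of verification one would expect, and the boundary cases you flag (single-node subtree, single-child node) are indeed the only places requiring care; once the convention $s_E^{imc}(\text{single node})=1$ is adopted---consistent with Lemma~\ref{prp0411A} and with the fact that one searcher placed at the lone node vacuously edge-clears it---the induction closes exactly as you describe.
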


\begin{lemma}
\label{prp0411A}\cite{Barriere1} Given a tree $\mathbf{T=}\left(  V,E\right)
$, for every $x\in V$ define $\mu\left(  x\right)  $ as follows%
\[
\mu\left(  x\right)  =\left\{
\begin{array}
[c]{ll}%
\lambda_{x}\left(  xy_{1}\right)  & \text{if }x\text{ has a single neighbor
}y_{1}\\
\max\left(  \lambda_{x}\left(  xy_{1}\right)  ,\lambda_{x}\left(
xy_{2}\right)  +1\right)  & \text{otherwise;}%
\end{array}
\right.
\]
in the second line of the definition, $y_{1},y_{2},...$ are the neighbors of
$x$, enumerated so that they satisfy $s_{E}^{imc}\left(  \mathbf{T}_{x}\left[
y_{k}\right]  \right)  \geq s_{E}^{imc}\left(  \mathbf{T}_{x}\left[
y_{k+1}\right]  \right)  $ for $k=1,2,...\ $. Then
\[
s_{E}^{imc}\left(  \mathbf{T}_{x}\right)  =\mu\left(  x\right)  \ \text{and
}s_{E}^{imc}\left(  \mathbf{T}\right)  =\min_{x\in V}\mu\left(  x\right)  .
\]

\end{lemma}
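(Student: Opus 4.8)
The plan is to derive both equalities from the recursion of Lemma \ref{prp0407} together with the label characterization of Lemma \ref{prp0411}, and then to pass from the rooted to the unrooted search number using Barriere's rootedness results (Theorem \ref{prp0404} and Corollary \ref{prp0405}). First I would prove the identity $s_E^{imc}(\mathbf{T}_x)=\mu(x)$ for a fixed root $x$. The key observation is that $\mathbf{T}_x[x]=\mathbf{T}_x$, since every node of the tree is $x$ or a descendant of $x$; hence I may apply Lemma \ref{prp0407} with $y=x$, noting that the children of the root are exactly its neighbors $y_1,y_2,\ldots$. When $x$ has at least two neighbors this gives directly
\[
s_E^{imc}(\mathbf{T}_x)=s_E^{imc}(\mathbf{T}_x[x])=\max\left(s_E^{imc}(\mathbf{T}_x[y_1]),\,s_E^{imc}(\mathbf{T}_x[y_2])+1\right),
\]
and substituting $s_E^{imc}(\mathbf{T}_x[y_k])=\lambda_x(xy_k)$ from Lemma \ref{prp0411} yields exactly the second branch of the definition of $\mu(x)$. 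I would also remark that the ordering used in the definition of $\mu$ (neighbors sorted by decreasing $s_E^{imc}(\mathbf{T}_x[y_k])$) coincides, via Lemma \ref{prp0411}, with sorting by decreasing $\lambda_x(xy_k)$, so the two enumerations agree and the indices $y_1,y_2$ are consistent.

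Next I would dispose of the single-neighbor case, which Lemma \ref{prp0407} does not cover literally. If $x$ has a unique neighbor $y_1$, then $x$ is a leaf and $\mathbf{T}_x[y_1]$ is simply $\mathbf{T}_x$ with the pendant root removed. I would argue that attaching (or removing) a pendant node at the root cannot change the IMC edge search number: given an optimal rooted IMC search of $\mathbf{T}_x[y_1]$, which by Theorem \ref{prp0404} may begin by clearing an edge incident to $y_1$, prepending the moves $0\rightarrow x$, $x\rightarrow y_1$ produces a rooted IMC search of $\mathbf{T}_x$ with the same maximum number of searchers; conversely any rooted IMC search of $\mathbf{T}_x$ restricts to one of $\mathbf{T}_x[y_1]$ using no more searchers. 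This gives $s_E^{imc}(\mathbf{T}_x)=s_E^{imc}(\mathbf{T}_x[y_1])=\lambda_x(xy_1)=\mu(x)$, matching the first branch. Equivalently, the single-child case can be read as the formula of Lemma \ref{prp0407} under the convention $s_E^{imc}(\mathbf{T}_x[y_2])=0$, since $\max(\lambda_x(xy_1),1)=\lambda_x(xy_1)$ because every label is at least $1$.

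For the second identity $s_E^{imc}(\mathbf{T})=\min_{x\in V}\mu(x)$, I would reduce the unrooted search number to a minimum over rooted ones. Any search rooted at $x$ is in particular an IMC search, so $s_E^{imc}(\mathbf{T})\le s_E^{imc}(\mathbf{T}_x)$ for every $x$, whence $s_E^{imc}(\mathbf{T})\le\min_x s_E^{imc}(\mathbf{T}_x)$. Conversely, Corollary \ref{prp0405} (via Theorem \ref{prp0404}) guarantees that some optimal IMC edge clearing of $\mathbf{T}$ can be taken rooted at a single node $u_0$, so $s_E^{imc}(\mathbf{T})=s_E^{imc}(\mathbf{T}_{u_0})\ge\min_x s_E^{imc}(\mathbf{T}_x)$. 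Combining the two inequalities gives $s_E^{imc}(\mathbf{T})=\min_x s_E^{imc}(\mathbf{T}_x)$, which equals $\min_x\mu(x)$ by the first identity.

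The main obstacle I anticipate is not the algebra of the recursion, which becomes immediate once one recognizes $\mathbf{T}_x[x]=\mathbf{T}_x$, but rather the careful justification of the single-neighbor (pendant-root) case and of the rooted-versus-unrooted reduction, both of which lean on the precise form of Barriere's rootedness guarantee. In particular one must verify that prepending the placement and first slide, and dually that restricting a rooted IMC search to a subtree, each preserve monotonicity and connectedness while not increasing the searcher count; this bookkeeping is where the argument is most delicate, even though each individual step is elementary.
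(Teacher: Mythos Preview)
The paper does not give its own proof of this lemma; it is simply quoted from \cite{Barriere1}. So there is no in-paper argument to compare against, and the relevant question is only whether your reconstruction is sound.

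Your argument is correct. The identification $\mathbf{T}_x[x]=\mathbf{T}_x$ lets you feed $y=x$ into Lemma~\ref{prp0407}, and Lemma~\ref{prp0411} then converts the subtree search numbers into the $\lambda$-labels, giving exactly the second branch of $\mu(x)$; your observation that the two orderings (by $s_E^{imc}(\mathbf{T}_x[y_k])$ and by $\lambda_x(xy_k)$) coincide is the right consistency check. The leaf-root case is handled cleanly by the pendant-edge argument, and the one point worth making explicit is why the prepended edge $xy_1$ never recontaminates: any e-unguarded path from $xy_1$ to a dirty edge must pass through $y_1$, and if such a path existed it would equally recontaminate the first edge cleared inside $\mathbf{T}_x[y_1]$, contradicting monotonicity of the original schedule. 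For the unrooted identity, your two-inequality reduction via Theorem~\ref{prp0404}/Corollary~\ref{prp0405} is exactly what is needed. The closing caveat you flag (checking that the prepend/restrict operations preserve IMC) is genuine but routine once the recontamination remark above is in place.
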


The algorithms and results of \cite{Barriere1} are geared towards computing
$s_{E}^{imc}\left(  \mathbf{T}\right)  $. The GSST\ algorithm, on the other
hand, requires the computation of $s_{E}^{imc}\left(  \mathbf{T}_{x}\right)  $
and rooted searches. Hence we introduce the \textbf{R-Label} and
\textbf{R-Search} algorithms, which are simplified, \textquotedblleft
rooted\textquotedblright\ versions of the corresponding Barriere algorithms.
The pseudocode of these algorithms is listed in the next page.

The \textbf{R-Label} algorithm takes as input a tree $\mathbf{T=}\left(
V,E\right)  $ \emph{and} a root node $u_{0}\in V$. The algorithm is a
straightforward implementation of Definition \ref{prp0410}. \textbf{R-Label}
does \emph{not} compute all the labels $\left\{  \lambda_{x}\left(  xy\right)
\right\}  _{xy\in E}$ but only the ones for which $y$ is a child of $x$ in the
rooted tree $\mathbf{T}_{u_{0}}$; for these edges, \textbf{R-Label} produces
the same $\lambda$ labels as \textbf{Label}. \textbf{R-Label }uses two
subroutines:\ given that $u_{0}$ is the root, the subroutine \textbf{Depth}%
$\left(  \mathbf{T},u_{0}\right)  $ returns a partition $\left\{  V^{\left(
l\right)  }\right\}  _{l=1}^{L}$ of \ the node set $V$, where $V^{\left(
l\right)  }$ contains all nodes of depth $l$; the subroutine
\textbf{SortChildren}$\left(  y,\mathbf{T},u_{0},\lambda\right)  $ returns the
children of $y$ sorted in decreasing order of their $\lambda$ labels.

The \textbf{R-Search} algorithm uses the $\lambda$ labels to produce a rooted
IMC (edge and node)\ clearing search of $\mathbf{T}$. \textbf{R-Search} is
almost identical to Barriere's \textbf{Search}, the only difference being that
the starting node $u_{0}$ is given, rather than chosen by the algorithm (this
implies that \textbf{R-Search} produces a minimal \emph{rooted }IMC search of
$\mathbf{T}_{u_{0}}$). \textbf{R-Search} uses Barriere's subroutine
\textbf{Move}. The notation $\mathbf{S}=\mathbf{S}|\left(  u,v,J\right)  $
means:\ append $J$ moves of the form $u\rightarrow v$ to the search schedule
$\mathbf{S}$.

\begin{algorithm}[h]
\label{label}
\caption{\textbf{R-Label}$\left(  \mathbf{T} , u_0\right)$}
\begin{algorithmic}
\STATE \textbf{Input:} Tree $\mathbf{T}=(V,E)$, start node $u_0$
\STATE $\{ V^{(l)} \}_{l=1}^L=$ \textbf{Depth}$\left(  \mathbf{T} , u_0\right)$
\FOR{$l=L-1$ \textbf{with step} $-1$ \textbf{until} 1}
\FOR{$x \in V^{(l)}$}
\FOR{$y \in$ \textbf{SortChildren}$\left(x,\mathbf{T},u_{0},\lambda\right)$}
\STATE $[z_1,z_2,...,z_K]=$ \textbf{SortChildren}$\left(y,\mathbf{T},u_{0},\lambda\right)$
\IF{\textbf{Length}$([z_1,z_2,...,z_K])$=0}
\STATE $\lambda_x(xy) = 1$
\ELSIF{\textbf{Length}$([z_1,z_2,...,z_K])=1$}
\STATE $\lambda_x(xy)=\lambda_y(yz_1)$
\ELSE
\STATE $\lambda_x(xy)=\max(\lambda_y(yz_1),\lambda_y(yz_2)+1)$
\ENDIF
\ENDFOR
\ENDFOR
\ENDFOR
\STATE \textbf{Output:} Edge labeling $\{ \lambda_x(xy) \}_{xy \in E}$
\end{algorithmic}
\end{algorithm}

\begin{algorithm}[h]
\caption{\textbf{R-Search}$\left(  \mathbf{T}, u_0, \lambda \right)$}
\begin{algorithmic}
\STATE \textbf{Input:} Tree $\mathbf{T}=(V,E)$, start node $u_0$, edge labeling $\lambda$
\STATE $\mathbf{S}=\emptyset$
\STATE $[y_1,y_2,...,y_K]=$ \textbf{SortChildren}$\left(u_0,\mathbf{T},u_{0},\lambda\right)$
\FOR{$k=K$ \textbf{with step} $-1$ \textbf{until} 1}
\STATE $\mathbf{S}$=\textbf{Move}$(x,y_k,\lambda_x(xy_k),\mathbf{S})$
\ENDFOR
\STATE \textbf{Output:} Search $\mathbf{S}$
\end{algorithmic}
\label{alg:search}
\end{algorithm}

\begin{algorithm}[h]
\caption{\textbf{Move}$(u,v,J,\mathbf{S})$}
\begin{algorithmic}
\STATE \textbf{Input:} Tree $\mathbf{T}=(V,E)$, start node $u_0$, edge labeling $\lambda$
\STATE $\mathbf{S}=\mathbf{S} | (u,v,J)$
\STATE $[w_1,w_2,...,w_K]=$ \textbf{SortChildren}$\left(v,\mathbf{T},u_{0},\lambda\right)$
\FOR{$k=K$ \textbf{with step} $-1$ \textbf{until} 1}
\STATE $\mathbf{S}$=\textbf{Move}$(v,w_k,\lambda_y(yw_k),\mathbf{S})$
\ENDFOR
\STATE $\mathbf{S}=\mathbf{S} | (v,u,J)$
\STATE \textbf{Output:} Search $\mathbf{S}$
\end{algorithmic}
\label{alg:search}
\end{algorithm}

\clearpage

\newpage

\section{Appendix:\ The relationship between node search and \emph{mixed} edge
search}

\label{secB}

Theorem \ref{prp0303} tells us that edge search is \textquotedblleft
weaker\textquotedblright\ than node search in the sense that, for every graph
and every search schedule, we have $V_{E}^{C}\left(  t\right)  \subseteq
V_{N}^{C}\left(  t\right)  $ and $E_{E}^{C}\left(  t\right)  \subseteq
E_{N}^{C}\left(  t\right)  $. We will now consider a variant of edge search,
the so-called \emph{mixed edge search }which, as we will see, is
\emph{equivalent} to node search.

Mixed edge search is obtained by augmenting the clearing rules. In the
\textquotedblleft\emph{Mixed Search Edge Game}\textquotedblright\ an edge can
be cleared not only by sliding, but also by guarding both its endpoints. More
precisely, to obtain the rules of the mixed edge search game, take the rules
of the edge game (presented in Section \ref{sec02}), change the terms
\textquotedblleft e-clear\textquotedblright\ and \textquotedblleft
e-dirty\textquotedblright\ to \textquotedblleft m-clear\textquotedblright\ and
\textquotedblleft m-dirty\textquotedblright\ and add the following rule:

\begin{enumerate}
\item[\textbf{{E2}}$^{\prime}$] An m-dirty edge $uv$ becomes m-clear when both
$u$ and $v$ are occupied by searchers.
\end{enumerate}

In other words, we can clear an edge either by traversing it or by guarding
both its endpoints.

We will use $V_{M}^{C}\left(  t\right)  $ to denote the set of m-clear nodes
at time $t$, and will analogously use the notations $E_{M}^{C}\left(
t\right)  $, $V_{M}^{D}\left(  t\right)  $, $E_{M}^{D}\left(  t\right)  $,
$\mathbf{G}_{M}^{C}\left(  t\right)  $ etc.

We emphasize that, in the mixed search edge game, the evader resides in the
edges of the graph and recontamination occurs by the same rules as in the
\textquotedblleft plain\textquotedblright\ edge game.

\begin{theorem}
\label{prpA01}Given a graph $\mathbf{G}$ and an internal search schedule
$\mathbf{S}$. Then, we have
\begin{equation}
\text{for }t=0,1,2,...:\qquad V_{N}^{C}\left(  t\right)  =V_{M}^{C}\left(
t\right)  \quad\text{and\quad}E_{N}^{C}\left(  t\right)  =E_{M}^{C}\left(
t\right)  . \label{eq0313}%
\end{equation}

\end{theorem}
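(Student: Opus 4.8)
The plan is to prove both equalities by a single induction on $t$, after reducing the statement about edges to the statement about nodes. First I would transfer to the mixed game the two structural lemmas already proved for the plain edge game. The proof of Lemma~\ref{prp0301} uses only the recontamination rule and the node-clearing convention, neither of which changes in the mixed game, so a node that is m-dirty forces every incident edge to be m-dirty, and an m-clear edge forces both of its endpoints to be m-clear. The proof of Lemma~\ref{prp0302} adapts as well: if an unguarded node becomes m-clear exactly at $t$, the only new mechanism, rule~E2$^{\prime}$, requires that node to be guarded and hence cannot apply, so the node must have been entered by a searcher. I would then record a key invariant, proved by a short induction on $t$: an edge both of whose endpoints are guarded is m-clear and stays m-clear. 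Once both endpoints are occupied, E2$^{\prime}$ fires, and any later e-unguarded path that would recontaminate the edge must reach it through one of its endpoints \emph{as an interior node of the path} (as in case (1.ii) of the proof of Lemma~\ref{prp0301}), which is impossible while that endpoint is guarded.

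Next I would use these facts to show that in \emph{both} games the dirty edge set is determined by the dirty node set,
\[
E_N^D(t)=\{uv\in E:\ u\in V_N^D(t)\ \text{or}\ v\in V_N^D(t)\},\qquad E_M^D(t)=\{uv\in E:\ u\in V_M^D(t)\ \text{or}\ v\in V_M^D(t)\}.
\]
The first identity is immediate from rule~N4. For the second, $\supseteq$ is the first structural lemma, while $\subseteq$ follows by contradiction: if $uv$ were m-dirty with both endpoints m-clear, then any unguarded endpoint would, by rule~E4, force all its incident edges (in particular $uv$) to be m-clear, so both endpoints must be guarded, whence the invariant makes $uv$ m-clear. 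Since the clear sets are the complements of the dirty sets, these two identities reduce the whole theorem to proving $V_N^D(t)=V_M^D(t)$ for every $t$.

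Finally I would carry out the induction on $V_N^D(t)=V_M^D(t)$. The base case $t=0$ is trivial, as all nodes are dirty in both games. For the step I would analyze the move $u\to v$ at $t+1$, treating fresh clearings and fresh recontaminations separately. For clearings, rule~N2 (node game) and the mixed analogue of Lemma~\ref{prp0302} (mixed game) show that a node can be newly cleared only by being entered, so the sole candidate is $v$, cleared in one game exactly when cleared in the other. The crux is the recontamination analysis, because the two games use different path notions: node recontamination travels along n-unguarded paths (\emph{all} nodes unguarded), whereas mixed recontamination travels along e-unguarded paths (only \emph{interior} nodes unguarded). I would bridge these with the two implications recorded in Section~\ref{sec02} --- an n-unguarded path is e-unguarded, and the interior of an e-unguarded path is n-unguarded --- together with the edge/node dictionary above: an n-unguarded path from a recontaminated node to a node dirty at $t$ is also e-unguarded and, via the dictionary, exhibits an m-dirty edge reaching that node, and conversely trimming an e-unguarded path to its interior yields an n-unguarded path to a node that is m-dirty, hence n-dirty by the inductive hypothesis. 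The main difficulty I anticipate is bookkeeping at the \emph{endpoints} of these paths, which is exactly where ``n-unguarded'' and ``e-unguarded'' disagree; here I expect to lean on the both-endpoints-guarded invariant and on the fact that only the single node $v$ changes its guard status at $t+1$, so that guardedness of a frontier node blocks recontamination identically in the two games.
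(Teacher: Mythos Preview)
Your plan is sound and yields a correct proof, but it is organized differently from the paper's. The paper argues \emph{edges first}: under the inductive hypothesis it shows $E_N^C(t+1)\subseteq E_M^C(t+1)$ by two pieces of case analysis (an edge that gets n-cleared also gets m-cleared; an edge that gets m-dirtied also gets n-dirtied), obtains the reverse inclusion $E_M^C(s)\subseteq E_N^C(s)$ for \emph{all} $s$ as a wholesale corollary of the proof of Theorem~\ref{prp0303} with ``e-'' replaced by ``m-'' (using the mixed analogues of Lemmas~\ref{prp0301}--\ref{prp0302}, exactly as you propose), and only then reads off $V_M^D(t+1)=V_N^D(t+1)$ from $E_M^D(t+1)=E_N^D(t+1)$ via rules N4 and E4. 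You go \emph{nodes first}: your ``both-endpoints-guarded'' invariant gives the dictionary $E_M^D(t)=\{uv:u\in V_M^D(t)\text{ or }v\in V_M^D(t)\}$, which collapses the whole statement to $V_N^D=V_M^D$ and a single clearing/recontamination analysis on nodes. Your route is arguably cleaner conceptually --- the dictionary explains \emph{why} the two games coincide --- while the paper's route avoids proving the separate invariant by recycling Theorem~\ref{prp0303}.

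Two small points to tighten. First, it is not true that ``only the single node $v$ changes its guard status at $t+1$'': when the move is $u\to v$ and $u$ held exactly one searcher, $u$ becomes unguarded as well; you will need the case $w=u$ in the recontamination analysis. Second, your ``converse'' recontamination sketch elides one step. Tracing an e-unguarded path $p_0p_1\cdots p_k$ from the recontaminated edge back to a source edge $p_{k-1}p_k\in E_M^D(t)\setminus C$, you need to land on a node in $V_M^D(t)\setminus\{v\}$ along an n-unguarded path from $w$. The interior $p_1,\ldots,p_{k-1}$ is indeed n-unguarded, but the m-dirty endpoint of the source edge guaranteed by your dictionary may be $p_k$, and $p_k$ can be guarded at $t+1$ (precisely when $p_k=v$). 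In that case one must argue that $p_{k-1}$ is already m-dirty at $t$: since $p_{k-1}$ is interior it is unguarded at $t+1$, and $p_{k-1}\neq u$ (else $p_{k-1}p_k=uv$ would be traversed and lie in $C$), so $p_{k-1}$ was unguarded at $t$ as well; being adjacent to the m-dirty edge $p_{k-1}p_k$ it is m-dirty at $t$, and since $p_{k-1}\neq v$ it lies in $V_M^D(t)\setminus\{v\}$. With that wrinkle handled, your induction closes.
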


\begin{proof}
The proof is inductive and consists of several steps. \noindent

\noindent\textbf{I.} At $t=0$ we have%
\begin{equation}
V_{N}^{C}\left(  0\right)  =V_{M}^{C}\left(  0\right)  =\emptyset
\quad\text{and}\quad E_{N}^{C}\left(  0\right)  =E_{M}^{C}\left(  0\right)
=\emptyset.
\end{equation}
Suppose that we have
\begin{equation}
\text{for }s=0,1,...,t:\qquad V_{N}^{C}\left(  s\right)  =V_{M}^{C}\left(
s\right)  \quad\text{and}\quad E_{N}^{C}\left(  s\right)  =E_{M}^{C}\left(
s\right)  . \label{eq0302}%
\end{equation}
and let us next consider time $t+1$. \noindent

\noindent\textbf{II.} We will now prove that%
\begin{equation}
E_{N}^{C}\left(  t+1\right)  \subseteq E_{M}^{C}\left(  t+1\right)  .
\label{eq0303}%
\end{equation}
To this end we will show two things.

\begin{enumerate}
\item[\textbf{II.a}] \emph{If a previously n-dirty edge becomes n-clear at
}$t+1$\emph{, then it also becomes m-clear}. Suppose that $uv\in E_{N}%
^{D}\left(  t\right)  \cap E_{N}^{C}\left(  t+1\right)  $. In other words,
$uv$ is n-cleared exactly at $t+1$; this means that one of $u,v$ was already
n-clear by $t$ and the other was n-cleared \emph{exactly} at $t+1$. Without
loss of generality, assume%
\begin{equation}
u\in V_{N}^{C}\left(  t\right)  \cap V_{N}^{C}\left(  t+1\right)  ,\qquad v\in
V_{N}^{D}\left(  t\right)  \cap V_{N}^{C}\left(  t+1\right)  \label{eq0322}%
\end{equation}
Since a node can be n-cleared only by moving into it, the $\left(  t+1\right)
$-th move must be $w\rightarrow v$. We distinguish the following cases.

\textbf{(i.1)} Suppose $w\neq u$. Since $v\in V_{N}^{D}\left(  t\right)  $ and
$u\in V_{N}^{C}\left(  t\right)  $, $u$ was guarded at $t$; since $w\neq u$,
$u$ remains guarded at $t+1$. Hence, at $t+1$, $uv$ has both ends guarded and
so $uv\in E_{M}^{C}\left(  t+1\right)  $. (Note that this analysis holds even
when $w=0$, i.e., when the move was to place a new searcher).

\textbf{(i.2)} Suppose $w=u$ and $u$ is still guarded at $t+1$ (i.e., at $t$
there was more than one searcher in $u$). Then at $t+1$ both $u$ and $v$ are
guarded and $uv\in E_{M}^{C}\left(  t+1\right)  $.

\textbf{(i.3)} The remaining possibility is that $w=u$ and $u$ is unguarded at
$t+1$ (i.e., at $t$ there was exactly one searcher in $u$). Note that
$u\rightarrow v$ at $t+1$ means that $u$ is guarded at $t$ and hence $u\in
V_{M}^{C}\left(  t\right)  $. If $u\in V_{M}^{C}\left(  t+1\right)  $ as well,
then $uv\in V_{M}^{C}\left(  t+1\right)  $. So we now will show that $u\in
V_{M}^{D}\left(  t+1\right)  $ is not possible. Indeed $u\in V_{M}^{D}\left(
t+1\right)  $ can only happen in one of the following two ways.

\textbf{(i.3.1)} There exists a node $x_{1}\neq v$ such that $ux_{1}\in
E_{M}^{D}\left(  t\right)  \cap E_{M}^{D}\left(  t+1\right)  $. But $ux_{1}\in
E_{M}^{D}\left(  t\right)  =E_{N}^{D}\left(  t\right)  $ implies $x_{1}\in
V_{N}^{D}\left(  t\right)  $ (because $u$ is guarded at $t$) and $x_{1}\in
V_{N}^{D}\left(  t\right)  $ implies $x_{1}\in V_{N}^{D}\left(  t+1\right)  $
(because $x_{1}$ was not entered at $t+1$). Finally, from $x_{1}\in V_{N}%
^{D}\left(  t+1\right)  $ and $u$ unguarded at $t+1$, we conclude $u\in
V_{N}^{D}\left(  t+1\right)  $ which contradicts assumption (\ref{eq0322}).

\textbf{(i.3.2)} Alternatively, there exists a path $vux_{1}...x_{K}$ (with
$K\geq2$), e-unguarded at $t+1$ and with $x_{K-1}x_{K}\in E_{M}^{D}\left(
t\right)  \cap E_{M}^{D}\left(  t+1\right)  $ (note also that, since
$vux_{1}...x_{K}$ is a path, $\left\{  u,v\right\}  \cap\left\{  x_{1}%
,x_{2},...,x_{K}\right\}  =\emptyset$). In this case, we have%
\[
\left.
\begin{array}
[c]{l}%
vux_{1}...x_{K}\text{ e-unguarded at }t+1\text{ }\\
u\rightarrow v\text{ at }t+1\\
u\notin\left\{  x_{1},x_{2},...,x_{K}\right\}
\end{array}
\right\}  \Rightarrow ux_{1}...x_{K}\text{ is e-unguarded at }t\text{;}%
\]
and%
\[
\left.
\begin{array}
[c]{l}%
ux_{1}...x_{K}\text{ e-unguarded at }t\\
x_{K-1}x_{K}\in E_{M}^{D}\left(  t\right)
\end{array}
\right\}  \Rightarrow x_{K-1}\in V_{M}^{D}\left(  t\right)  =V_{N}^{D}\left(
t\right)  ;
\]
and finally
\[
\left.
\begin{array}
[c]{r}%
\left.
\begin{array}
[c]{l}%
x_{K-1}\in V_{N}^{D}\left(  t\right)  \text{ }\\
u\rightarrow v\text{ at }t+1\\
v\notin\left\{  x_{1},x_{2},...,x_{K}\right\}
\end{array}
\right\}  \Rightarrow x_{K-1}\in V_{N}^{D}\left(  t+1\right) \\
ux_{1}...x_{K-1}\text{ is n-unguarded at }t+1
\end{array}
\right\}  \Rightarrow u\in V_{N}^{D}\left(  t+1\right)
\]
But $u\in V_{N}^{D}\left(  t+1\right)  $ contradicts assumption (\ref{eq0322}%
). Hence $u\notin V_{M}^{D}\left(  t+1\right)  .$ \medskip

\qquad In short, by examining cases (i.1)-(i.3)$\ $we have shown that%
\begin{equation}
uv\in E_{N}^{D}\left(  t\right)  \cap E_{N}^{C}\left(  t+1\right)  \Rightarrow
uv\in E_{M}^{C}\left(  t+1\right)  . \label{eq0304}%
\end{equation}
i.e., if at $t+1$ some $uv$ is n-cleared it is also m-cleared.

\item[\textbf{II.b}] \emph{If a previously m-clear edge becomes m-dirty at
}$t+1$\emph{, then it also becomes n-dirty.} Suppose that $uv\in E_{M}%
^{C}\left(  t\right)  \cap E_{M}^{D}\left(  t+1\right)  $. Without loss of
generality, we can assume that there exists a $x_{K-1}x_{K}\in E_{M}%
^{D}\left(  t\right)  \cap E_{M}^{D}\left(  t+1\right)  $ and a path
$uvx_{1}...x_{K}$ which was e-guarded at $t$ but became e-unguarded at $t+1$.
Since the path became e-unguarded at $t+1$, exactly one node in it was guarded
at $t$ (and became unguarded at $t+1$). Call this node $x_{\overline{k}}$,
with $\overline{k}=0$ if the node in question is $v$ (i.e., $x_{0}=v$)\ and
$\overline{k}\in\left\{  1,2,...,K-1\right\}  $ otherwise. The move at $t+1$
is $x_{\overline{k}}\rightarrow z$, where $z\notin\left\{  v,x_{1}%
,...,x_{K-1}\right\}  $ (since $uvx_{1}...x_{K}$ is e-unguarded at $t+1$).
Since $x_{K-1}x_{K}\in$ $E_{M}^{D}\left(  t\right)  =E_{N}^{D}\left(
t\right)  $, either $x_{K-1}\in V_{N}^{D}\left(  t\right)  $ or $x_{K}\in
V_{N}^{D}\left(  t\right)  $. We consider the following cases.

\textbf{(ii.1)} If $x_{K-1}\in V_{N}^{D}\left(  t\right)  $ then $x_{K-1}\in
V_{N}^{D}\left(  t+1\right)  $ as well (it was not entered) and the path
$vx_{1}...x_{K-1}$ is n-unguarded at $t+1$. So $v\in V_{N}^{D}\left(
t+1\right)  $ and $uv\in E_{N}^{D}\left(  t+1\right)  $.

\textbf{(ii.2)} If $x_{K-1}\notin V_{N}^{D}\left(  t\right)  $, then $x_{K}\in
V_{N}^{D}\left(  t\right)  $ and so $x_{K-1}$ is the guarded node in the path
$uvx_{1}...x_{K}$ (i.e., $\overline{k}=K-1$). We distinguish two subcases.

\textbf{(ii.2.1)} The move at $t+1$ is $x_{K-1}\rightarrow z$ and $z\neq
x_{K}$. Then $x_{K}\in V_{N}^{D}\left(  t+1\right)  $ and so $x_{K-1}\in
V_{N}^{D}\left(  t+1\right)  $ and, by the same argument as in (ii.1), we get
$uv\in E_{N}^{D}\left(  t+1\right)  $.

\textbf{(ii.2.2)} The move at $t+1$ is $x_{K-1}\rightarrow x_{K}$. Since we
have assumed $x_{K-1}x_{K}\in E_{M}^{D}\left(  t\right)  \cap E_{M}^{D}\left(
t+1\right)  $, $x_{K-1}x_{K}$ must be adjacent to some \emph{other} edge
$x_{K-1}y\in$ $E_{M}^{D}\left(  t\right)  \cap E_{M}^{D}\left(  t+1\right)  $;
so we can use $y$ in place of $x_{K}$, $uvx_{1}...x_{K-1}y$ in place of
$uvx_{1}...x_{K}$ and conclude by the reasoning of case (ii.2.1) that $uv\in
E_{N}^{D}\left(  t+1\right)  $. \medskip

\qquad In short, we have shown that%
\begin{equation}
uv\in E_{M}^{C}\left(  t\right)  \cap E_{M}^{D}\left(  t+1\right)  \Rightarrow
uv\in E_{N}^{D}\left(  t+1\right)  , \label{eq0305}%
\end{equation}
i.e., if at $t+1$ some $uv$ is m-dirtied then it is also n-dirtied.
\end{enumerate}

Hence every edge added at $t+1$ to $E_{N}^{C}\left(  t\right)  $ is also added
to $E_{M}^{C}\left(  t\right)  $ and every edge removed at $t+1$ from
$E_{M}^{C}\left(  t\right)  $ is also removed from $E_{N}^{C}\left(  t\right)
$. These facts, combined with $E_{N}^{C}\left(  t\right)  =E_{M}^{C}\left(
t\right)  $, yield%
\begin{equation}
E_{N}^{C}\left(  t+1\right)  \subseteq E_{M}^{C}\left(  t+1\right)  .
\label{eq0306}%
\end{equation}
\noindent\textbf{III.} To strengthen (\ref{eq0306}) to set equality, we need
the reverse set inclusion. In fact we can prove
\begin{equation}
\text{for }s=0,1,...,t,t+1,...:\qquad E_{M}^{C}\left(  s\right)  \subseteq
E_{N}^{C}\left(  s\right)  \label{eq0314}%
\end{equation}
by following the proof of Theorem \ref{prp0303} and replacing
\textquotedblleft e-clear\textquotedblright, \textquotedblleft
e-dirty\textquotedblright, ... with \textquotedblleft
m-clear\textquotedblright, \textquotedblleft m-dirty\textquotedblright, ...
(analogs of Lemmas \ref{prp0301}, \ref{prp0302} can also be proved
easily)\footnote{Basically the proof remains valid because the
\emph{recontamination rules} remain the same.}. From (\ref{eq0306})\ and
(\ref{eq0314})\ we obtain
\begin{equation}
E_{M}^{D}\left(  t+1\right)  =E_{N}^{D}\left(  t+1\right)  . \label{eq0315}%
\end{equation}
\noindent\textbf{IV.} Now suppose \ $u\in V_{M}^{D}\left(  t+1\right)  $. Then
$u$ is unguarded and adjacent to some $uv\in E_{M}^{D}\left(  t+1\right)
=E_{N}^{D}\left(  t+1\right)  $. Hence either $u$ or $v$ is n-dirty and so
(since $u$ is unguarded) $u\in V_{N}^{D}\left(  t+1\right)  $. On the other
hand, if $u\in V_{N}^{D}\left(  t+1\right)  $, then $u$ is unguarded and there
exists a $uv\in E_{N}^{D}\left(  t+1\right)  =E_{M}^{D}\left(  t+1\right)  $
and (since $u$ is unguarded) $u\in V_{M}^{D}\left(  t+1\right)  $. It follows
that
\[
V_{M}^{D}\left(  t+1\right)  =V_{N}^{D}\left(  t+1\right)  .
\]
\noindent\textbf{V.} In short, we have established that%
\begin{equation}
\left(
\begin{array}
[c]{c}%
E_{M}^{C}\left(  t\right)  =E_{N}^{C}\left(  t\right) \\
V_{M}^{C}\left(  t\right)  =V_{N}^{C}\left(  t\right)
\end{array}
\right)  \Rightarrow\left(
\begin{array}
[c]{c}%
E_{M}^{C}\left(  t+1\right)  =E_{N}^{C}\left(  t+1\right) \\
V_{M}^{C}\left(  t+1\right)  =V_{N}^{C}\left(  t+1\right)
\end{array}
\right)  . \label{eq0323}%
\end{equation}
Eq.(\ref{eq0323}) and the fact $V_{M}^{C}\left(  0\right)  =V_{N}^{C}\left(
0\right)  $ and $E_{M}^{C}\left(  0\right)  =E_{N}^{C}\left(  0\right)  $
complete the proof of the theorem.
\end{proof}

Several facts follow from Theorem \ref{prpA01}. First, it is well known that
mixed edge search is NP-complete \cite{Ellis} hence node search is also NP-complete.

Second, Theorem \ref{prpA01} can be used to obtain a Theorem \ref{prp0303} as
a corollary with a short proof (which, of course, presupposes the lengthy
proof of Theorem \ref{prpA01}). We give a sketch of such a proof. Suppose a
graph $\mathbf{G}$ and a search schedule $\mathbf{S}$ are given. Then, by
Theorem \ref{prpA01} we have (for every $t$) $V_{N}^{C}\left(  t\right)
=V_{M}^{C}\left(  t\right)  $ and $E_{N}^{C}\left(  t\right)  =E_{M}%
^{C}\left(  t\right)  $. Now suppose $\mathbf{S}$ is applied to $\mathbf{G}$
under the rules of the edge game and, up to time $t$ we have $V_{E}^{C}\left(
t\right)  \subseteq V_{M}^{C}\left(  t\right)  $ and $E_{E}^{C}\left(
t\right)  \subseteq E_{M}^{C}\left(  t\right)  $. Then, at time $t+1$, every
e-cleared edge is also m-cleared (since the mixed game has all the clearing
rules of the edge game and an additional one)\ and every m-dirtied edge is
also e-dirtied (since the edge and mixed games have the same recontamination
rules) except if an edge was already e-dirty. Hence $E_{E}^{C}\left(
t+1\right)  \subseteq E_{M}^{C}\left(  t+1\right)  $, which also shows that
$V_{E}^{C}\left(  t+1\right)  \subseteq V_{M}^{C}\left(  t+1\right)  $.

Finally, the above Theorem \ref{prpA01} and the monotonicity result proved in
\cite{Bienstock} (\textquotedblleft if there is a mixed edge clearing search
of $\mathbf{G}$ using $\leq K$ guards, there is a monotone mixed edge clearing
\ search of $\mathbf{G}$ using $\leq K$ guards\textquotedblright) makes the
following conjecture seem almost trivially true.

\begin{conjecture}
\label{prpA02}If there is a node clearing search $\mathbf{S}$ of $\mathbf{G}$
using at most $K$ guards, there is a monotone node clearing search
$\mathbf{S}^{\prime}$ of $\mathbf{G}$ using at most $K$ guards.
\end{conjecture}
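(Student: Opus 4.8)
The plan is to route the argument through the node/mixed-edge equivalence of Theorem \ref{prpA01} together with the known monotonicity of mixed edge search \cite{Bienstock}. First I would reduce to internal searches: since $\overline{sn}(\mathbf{S})\le K$ we have $s_{N}(\mathbf{G})\le K$, and by the identity $s_{N}(\mathbf{G})=s_{N}^{i}(\mathbf{G})$ recorded in (\ref{eq0452}) there is an \emph{internal} node-clearing search of $\mathbf{G}$ using at most $K$ guards; relabel it $\mathbf{S}$. Being internal, $\mathbf{S}$ satisfies the hypotheses of Theorem \ref{prpA01}, so $V_{N}^{C}(t)=V_{M}^{C}(t)$ and $E_{N}^{C}(t)=E_{M}^{C}(t)$ for all $t$. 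Because $\mathbf{S}$ node-clears $\mathbf{G}$, at its final step every node is n-clear, hence (by rule N4 and the equivalence) every node is m-clear and therefore every edge is m-clear; thus $\mathbf{S}$ is a mixed-edge clearing search of $\mathbf{G}$ using the same searchers, i.e.\ at most $K$ guards.

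Next I would invoke the mixed-search monotonicity result of \cite{Bienstock}: since $\mathbf{G}$ admits a mixed-edge clearing search with at most $K$ guards, it admits a \emph{monotone} mixed-edge clearing search $\mathbf{S}^{\prime}$ with at most $K$ guards, which I may take to be internal (a monotone search never needs to remove a searcher). Applying Theorem \ref{prpA01} to $\mathbf{S}^{\prime}$ again yields $V_{N}^{C}(t)=V_{M}^{C}(t)$ and $E_{N}^{C}(t)=E_{M}^{C}(t)$ for all $t$, so $\mathbf{S}^{\prime}$ also node-clears $\mathbf{G}$ (all edges m-clear $\Rightarrow$ all nodes m-clear $\Rightarrow$ all nodes n-clear) with at most $K$ guards. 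It then remains only to verify that $\mathbf{S}^{\prime}$ is monotone \emph{in the node game}.

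This last point is where the real work lies and is the reason the statement is a conjecture rather than an immediate corollary. Monotonicity in the mixed game, as produced by \cite{Bienstock}, asserts that the \emph{edge} clear set is non-decreasing, $E_{M}^{C}(t)\subseteq E_{M}^{C}(t+1)$, whereas node monotonicity requires the \emph{vertex} clear set to be non-decreasing, $V_{N}^{C}(t)\subseteq V_{N}^{C}(t+1)$; a priori these differ, and transporting through the equivalence gives only $E_{N}^{C}(t)\subseteq E_{N}^{C}(t+1)$. To close the gap I would use the fact (Remark \ref{prp0203}) that $E_{N}^{C}(t)$ is exactly the set of edges with both endpoints n-clear, so that any $u\in V_{N}^{C}(t)$ carrying an incident n-clear edge remains n-clear at $t+1$ once $E_{N}^{C}$ is non-decreasing; consequently the \emph{only} possible node recontamination is the abandonment of an \emph{isolated} guarded clear node (one all of whose neighbors are n-dirty). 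I would then eliminate such stray recontaminations by a \emph{lazy-guard} reorganization of $\mathbf{S}^{\prime}$: postpone a searcher's arrival at a fresh node until the step at which it actually clears an incident edge, and never vacate an isolated guarded clear node. The main obstacle is proving that this reshuffling preserves both the clearing property and the bound of $K$ simultaneous guards; once that is established, the remainder is bookkeeping on top of Theorem \ref{prpA01} and the cited monotonicity result.
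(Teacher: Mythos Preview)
The paper does \emph{not} prove this statement: it is explicitly labeled a Conjecture, and the paragraph following it explains why the obvious route fails. Your plan reproduces that route almost verbatim --- pass through Theorem~\ref{prpA01} to recast the node-clearing search as a mixed-edge search, invoke the Bienstock--Seymour monotonicity result \cite{Bienstock}, then pull back through Theorem~\ref{prpA01} again --- and you correctly isolate the same obstruction the authors do: \cite{Bienstock} gives only \emph{edge} monotonicity $E_{M}^{C}(t)\subseteq E_{M}^{C}(t+1)$, and (as the paper notes around (\ref{eq0324})) this does \emph{not} imply $V_{N}^{C}(t)\subseteq V_{N}^{C}(t+1)$. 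The paper's counterexample (Remark~\ref{prp0202}) is exactly the ``isolated guarded clear node'' scenario you identify. So up to this point your analysis and the paper's are the same; neither constitutes a proof.

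Where you go further is the ``lazy-guard reorganization'': postpone arrivals and never vacate an isolated clear node. This is a plausible attack, but as you yourself flag, the crux --- showing the reshuffling preserves the $K$-searcher bound --- is left open, so the proposal remains a sketch, not a proof. Two smaller points also need care. First, your claim that the monotone mixed search from \cite{Bienstock} ``may be taken to be internal (a monotone search never needs to remove a searcher)'' is not immediate: monotonicity forbids recontamination, not removal/teleporting, and Theorem~\ref{prpA01} as stated requires internality. Second, the paper observes that \emph{strong} edge monotonicity (a traversed edge is never re-dirtied) does imply node monotonicity (see (\ref{eq0325})); if you could upgrade the output of \cite{Bienstock} to strong monotonicity, that would close the gap more directly than the lazy-guard rearrangement. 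Either way, what you have is a reasonable program that matches the authors' own assessment of where the difficulty lies, not a finished argument.
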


However some additional work is required to prove the conjecture, because of
the following detail:\ edge monotonicity does not necessarily imply node
monotonicity. In other words, the implication
\begin{equation}
E_{N}^{C}\left(  t\right)  \subseteq E_{N}^{C}\left(  t+1\right)  \Rightarrow
V_{N}^{C}\left(  t\right)  \subseteq V_{N}^{C}\left(  t+1\right)
\label{eq0324}%
\end{equation}
is not necessarily true! Consider again the search schedule of Remark
\ref{prp0202}. In this case $E_{N}^{C}\left(  1\right)  =E_{N}^{C}\left(
2\right)  =\emptyset$, but $V_{N}^{C}\left(  1\right)  \subseteq V_{N}%
^{C}\left(  2\right)  $ does not hold. It is true however (and easy to
prove)\ that
\begin{equation}
\text{(\emph{strong }edge monotonicity)}\Rightarrow\text{(node monotonicity)}.
\label{eq0325}%
\end{equation}

Unfortunately, the result proved by Bienstock in \cite{Bienstock} concerns
\textquotedblleft simple\textquotedblright, not strong monotonicity. Hence one
of our future research goals is to extend Bienstock's result to node search
and thus prove that $s_{N}\left(  \mathbf{G}\right)  =s_{N}^{m}\left(
\mathbf{G}\right)  $.\newpage

\section{Appendix:\ Vertex separation and pathwidth}

\label{secC}

Vertex separation and pathwidth are strongly (but not obviously) related to
graph search and search number, as discussed in (among other
papers)\ \cite{Megiddo,FominThilikos}, where the author can find a more
detailed discussion; here we just give the basic definitions and the main theorem.

\begin{definition}
\label{def0204}Given a graph $\mathbf{G}=\left(  V,E\right)  $, a \emph{path
decomposition} of $\mathbf{G}$ is a pair $(X,\mathbf{P})$, where
$X=\{X_{1},...,X_{M}\}$ is a family of subsets of $V$ and $\mathbf{P}$ is a
path whose nodes are the subsets $X_{i}$ and they satisfy the following properties

\begin{enumerate}
\item $\cup_{m=1}^{M}X_{i}=V$.

\item For every edge $vw$ in $E$, there is a subset $X_{m}$ that contains both
$v$ and $w$.

\item If $1\leq i\leq j\leq k\leq M$ then $X_{i}\cap X_{k}\subseteq X_{j}$.
\end{enumerate}
\end{definition}

\begin{definition}
\label{def0205}The \emph{width} of a path decomposition $(X,\mathbf{P})$ (with
$X=\{X_{1},...,X_{n}\}$) is denoted by $pdw\left(  X,\mathbf{P}\right)  $ and
is defined by
\[
pdw\left(  X,\mathbf{P}\right)  =\max_{1\leq i\leq n}\left\vert X_{i}%
\right\vert .
\]

\end{definition}

\begin{definition}
\label{def0206}The \emph{pathwidth} of a graph $\mathbf{G}=\left(  V,E\right)
$ is denoted by $pw\left(  \mathbf{G}\right)  $ and defined by
\[
pw\left(  \mathbf{G}\right)  =\min_{\text{all path decompositions }\left(
X,\mathbf{P}\right)  \text{ of }\mathbf{G}\text{ }}pdw\left(  X,\mathbf{P}%
\right)  .
\]

\end{definition}

\begin{definition}
\label{def0207}Given a graph $\mathbf{G}=\left(  V,E\right)  $ (with
$\left\vert V\right\vert =N$)\ and a permutation $\mathcal{P}$ of $\left\{
1,2,...,N\right\}  $. The \emph{vertex separation }of $\mathbf{G}$ with
respect to $\mathcal{P}$ is denoted by $vs\left(  \mathbf{G},\mathcal{P}%
\right)  $ and defined by%
\[
vs\left(  \mathbf{G},\mathcal{P}\right)  =\max_{1\leq n\leq N}\left\vert
\left\{  u\in V:\mathcal{P}\left(  u\right)  \leq n\text{ and }\exists v\text{
such that: }\mathcal{P}\left(  v\right)  >n\text{ and }uv\in E\right\}
\right\vert .
\]
The \emph{vertex separation }of $G$ is denoted by $vs\left(  \mathbf{G}%
\right)  $ and defined by%
\[
vs\left(  \mathbf{G}\right)  =\min_{\text{all permutations }\mathcal{P}%
}vs\left(  \mathbf{G},\mathcal{P}\right)  .
\]

\end{definition}

The following theorem shows the connection between $vs\left(  \mathbf{G}%
\right)  $ and $pw\left(  \mathbf{G}\right)  $ and search number.

\begin{theorem}
For every graph $\mathbf{G}$, $vs\left(  \mathbf{G}\right)  =pw\left(
\mathbf{G}\right)  $ and $\left\vert vs\left(  \mathbf{G}\right)
-s_{E}\left(  \mathbf{G}\right)  \right\vert \leq1$.
\end{theorem}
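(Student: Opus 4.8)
The plan is to split the theorem into its two assertions and to treat each through explicit translations between the three combinatorial objects involved: vertex orderings (permutations), path decompositions, and edge-search strategies.

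For the identity $vs(\mathbf{G}) = pw(\mathbf{G})$, I would set up a two-way correspondence between permutations and path decompositions. First, given a permutation $\mathcal{P}$, relabel the nodes $v_1,\dots,v_N$ so that $\mathcal{P}(v_i)=i$ and define the separator $S_i = \{v_j : j\le i,\ \exists k>i,\ v_jv_k\in E\}$, so that $vs(\mathbf{G},\mathcal{P}) = \max_i |S_i|$. Setting $X_i = S_{i-1}\cup\{v_i\}$ (with $S_0=\emptyset$) and stringing the $X_i$ along a path, I would verify the three axioms of the path-decomposition definition: every $v_i$ lies in $X_i$ (so the bags cover $V$); each edge $v_jv_k$ with $j<k$ lies in $X_k$; and the bags containing a fixed node $v_j$ form the contiguous interval $\{j,\dots,k^*\}$, where $k^*$ is the largest index of a neighbour of $v_j$. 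Conversely, given a path decomposition with bags $X_1,\dots,X_M$, I would order the nodes by first appearance, $f(v)=\min\{i: v\in X_i\}$, and show that at each cut the separator together with the next node embeds into a single bag $X_q$ with $q=f(v_{n+1})$, using the interval property to place both a left-boundary node and the separating node in $X_q$. Tracking sizes, the first construction gives $|X_i| = |S_{i-1}|+1$ and the second gives $|S_n| \le |X_q|-1$, so the two constructions pin the two quantities to within the single additive constant dictated by the width convention, yielding $vs(\mathbf{G})=pw(\mathbf{G})$.

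For $|vs(\mathbf{G}) - s_E(\mathbf{G})| \le 1$, I would prove $s_E(\mathbf{G}) \le vs(\mathbf{G})+1$ and $vs(\mathbf{G}) \le s_E(\mathbf{G})$ separately. For the upper bound I start from an optimal layout $v_1,\dots,v_N$ with $|S_i|\le vs(\mathbf{G})$ and build a monotone edge-clearing strategy that, after clearing all edges among $\{v_1,\dots,v_i\}$, keeps exactly one searcher on each node of $S_i$. To pass from step $i-1$ to $i$, I bring in one auxiliary searcher, place it on $v_i$, and slide it out and back along each edge $v_jv_i$ with $j<i$; every such neighbour $v_j$ is still guarded because it lies in $S_{i-1}$, so no recontamination occurs, and afterwards I retire the searchers on nodes that have dropped out of the separator. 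The peak occupancy is $\max_i |S_i| + 1 = vs(\mathbf{G})+1$. For the lower bound I invoke LaPaugh's monotonicity theorem to assume $s_E(\mathbf{G})$ is realized by a monotone search; letting $C_t$ be the set of cleared nodes, the guarded frontier $\partial C_t$ carries at most $s_E(\mathbf{G})$ searchers at all times, and ordering the nodes by the time they are permanently absorbed into the interior produces a layout whose every separator $S_n$ is contained in some frontier $\partial C_t$, whence $vs(\mathbf{G}) \le s_E(\mathbf{G})$.

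Combining $vs \le s_E \le vs + 1$ with $vs = pw$ yields both claims. The main obstacle is the lower bound $vs(\mathbf{G}) \le s_E(\mathbf{G})$: extracting a bounded-width layout from a search is only clean once the search is known to be monotone, so the argument genuinely depends on importing LaPaugh's theorem (that monotone and non-monotone edge-search numbers coincide) rather than on anything established earlier in the paper. A secondary nuisance is the additive-constant bookkeeping in the width/separation translation, which must be reconciled with the precise convention adopted in the width definition so that the constants cancel and leave exactly the equalities claimed.
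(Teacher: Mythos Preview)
The paper does not actually prove this theorem: Appendix~C only states it as a known result and points to the literature (Kinnersley for $vs=pw$; Ellis--Sudborough--Turner and others for the relation to $s_E$). There is therefore no ``paper's own proof'' to compare against; what you have written is essentially the standard argument that those references contain.

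Your treatment of $vs(\mathbf G)=pw(\mathbf G)$ is the Kinnersley construction and is fine in spirit. Note, however, that the paper's Definition of width takes $pdw(X,\mathbf P)=\max_i|X_i|$ \emph{without} the usual ``$-1$''. With that convention your own calculation $|X_i|=|S_{i-1}|+1$ yields $pw(\mathbf G)=vs(\mathbf G)+1$, not equality; the ``additive-constant nuisance'' you flag is not merely bookkeeping here but an actual mismatch between the paper's definition and the stated theorem.

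Your upper bound $s_E(\mathbf G)\le vs(\mathbf G)+1$ has a real gap. When the auxiliary searcher on $v_i$ slides out to $v_j$, the node $v_i$ becomes unguarded; if $v_i$ has a later neighbour $v_k$ (so the edge $v_iv_k$ is still e-dirty) and you have already cleared some other edge $v_{j'}v_i$ earlier in the same round, then the e-unguarded path $v_{j'}v_iv_k$ recontaminates $v_{j'}v_i$. Sliding the $v_j$-guard instead fails symmetrically whenever $v_j\in S_i$. The usual fix keeps one searcher parked on $v_i$ and uses a \emph{second} free searcher to traverse the back-edges, which gives $s_E(\mathbf G)\le vs(\mathbf G)+2$; this is the bound actually established in \cite{Ellis}. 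Your lower bound $vs(\mathbf G)\le s_E(\mathbf G)$ via LaPaugh's monotonicity theorem is the standard route and is correct.
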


\newpage

\section{Appendix: Implementations of the GSearch Algorithm}

\label{secD}

We have implemented the GSST\ algorithm in two forms which are publicly available.

\begin{enumerate}
\item As a command line executable, which runs on Windows and Linux computers.
The program and supporting material are available at the URL\ 

\qquad\qquad\texttt{http://www.frc.ri.cmu.edu/\symbol{126}%
gholling/home/software.html}.

\item As a graphical user interface (GUI), available at

\qquad\qquad\qquad\texttt{http://users.auth.gr/\symbol{126}%
kehagiat/KehagiasSoftware.htm}.
\end{enumerate}

We discuss each of these implementations separately.

\subsection{The Command Line program}

This is an executable, \texttt{gsearch.exe,} which has the following usage

\bigskip
\begin{verbatim}
       USAGE: gsearch -m [graph] -n [no.trees] -s [startnode] -g [gen.tree] 
          -y [wr-search] -w [wr-tree] -v [visualize] -t [edge traversal]
          -i [improve tree] -l [low number] -r [redundancy check]
       EXAMPLE: gsearch -m graphs/Edge01.txt -g exhaustive -s 1 -n 500000 -t bh
\end{verbatim}

\bigskip

The various options available are as follows.

\bigskip

\texttt{-m [graph]: string, name of file with edge list of the graph.}

\texttt{-n [no.trees]: int, how many sp.trees to generate (DEFAULT is 1).}

\texttt{-s [start node]: int, which node to start }

\ \ \ \ \ \ \texttt{(DEFAULT is 1, random choice is 0)}

\texttt{-g [gen.tree]: string, method of generating spanning trees}

\texttt{ \ \ (acceptable values: readtree, exhaustive, uniform, dfsrand; }

\ \ \ \ \ \ \ \texttt{DEFAULT is uniform)}

\texttt{-y [wr-search]: string, how many best searches to write}

\texttt{ \ \ (acceptable values: one, all, none; DEFAULT is one)}

\texttt{-w [wr-tree]: string, how many best sp.trees to write}

\texttt{ \ \ (acceptable values: one, all, none; DEFAULT is one)}

\texttt{-t [traversal]: string, how to break edge label ties}

\texttt{ \ \ (acceptable values: bh, random, bhrand, bhweight, bhdom; }

\ \ \ \ \ \ \ \texttt{DEFAULT is bh).}

\texttt{-i [improve tree]: boolean, use tree improve technique or not.}

\texttt{ \ \ (acceptable values are 0 / 1; DEFAULT is 0: do not use it)}

\texttt{-r [redundancy check]: boolean, check for redundant trees}

\texttt{ \ \ (acceptable values are 0 / 1; DEFAULT 1: }

\ \ \ \ \ \ \ \texttt{check for redundancy)}

\texttt{-l [low number]: break if a tree is found with this number of }

\ \ \ \ \ \ \ \texttt{searchers (DEFAULT is 0: do not break)}

\texttt{-v [visualize]: boolean, use visualizer (only supported }

\ \ \ \ \ \ \ \texttt{on linux) (DEFAULT is 0: do not use)}

\texttt{-h [help] }

\bigskip

The \texttt{-m} option indicates the file which contains the graph
description. This must be an ascii (plain text)\ file containing a list of the
edges of the graph, one edge per line, indicated as a pair of nodes. The nodes
must be continuously numbered from 1 to $N$ and these numbers are used as
labels. $N$ is assumed to be the largest number appearing in the edge list
(the graph is assumed to be undirected and connected). Examples of edge lists
can be found in the \texttt{graphs} directory.

The remaining options of \texttt{gsearch} correspond to the description of the
algorithm in Section \ref{sec05}. The \texttt{-n} option corresponds to the
$M$ parameter of GSST (number of spanning trees), \texttt{-s} is the root of
the search, \texttt{-g} describes the uniform and DFS\ methods for generating
spanning trees (there are also options for exhaustive enumeration of all
spanning trees and for reading a specific spanning tree from file) and the
\texttt{-t} options (\texttt{bh, random, bhrand, bhweight, bhdom}) corresponds
to the traversal methods (L, R, LR, LW and LD, respectively). The \texttt{-i,
-r }and \texttt{-l} options are self explanatory. The \ \texttt{-y} option
writes one or more minimal strategies in the file(s)
\texttt{output/strat*.txt}; each row of the file corresponds to one step of
the strategy and shows the nodes in which the searchers are currently located.
The \ \texttt{-w} option writes the rooted tree(s) corresponding to optimal
strategies, in the file(s) \texttt{output/tree*.txt}; each row of the file
shows one parent and her child. The \texttt{-v} visualization option only
works on Linux computers (for a visualization of the search on Windows
computers use the GUI).

The command line \ \texttt{gsearch} was implemented by G. Hollinger in ANSI\ C
using the gcc 4.1 compiler.

\subsection{The Graphical I\textbf{nterface}}

This is an executable, \texttt{gsearchGUI.exe,} which corresponds closely to
the command line program. (In fact the GUI\ is a front end for the command
line program.)\ Launching the program brings up the window of Figure
\ref{fig15}. On the left side we see several input boxes. All of these
correspond to abovementioned options of the command line program with one
exception: in addition to the edge list file (\texttt{graphs/Edge11.txt} in
Figure \ref{fig15}) there is also a file containing the $x$- and
$y$-coordinates of the nodes (\texttt{graphs/Node11.txt} in Figure \ref{fig15}
-- this is required for visualization). The GUI\ starts with the default
selection of \texttt{Edge01.txt} and \texttt{Node01.txt} ; if you want a
different graph, type the corresponding file names in the input boxes.

\begin{figure}[h]
\centering\scalebox{0.5}{\includegraphics{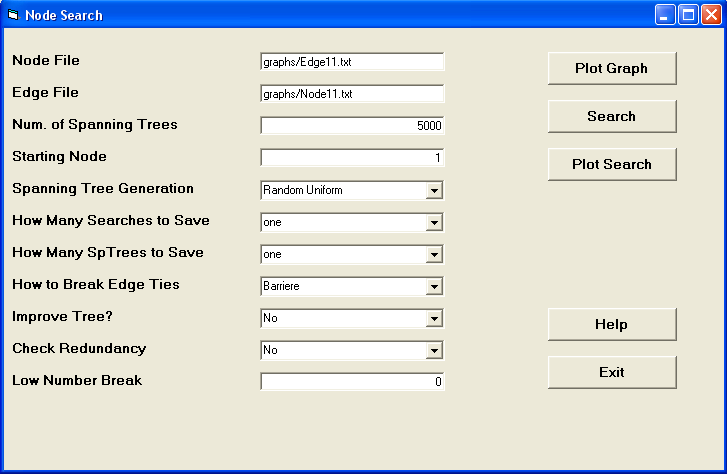}}\caption{A
screenshot of the gsearch GUI.}%
\label{fig15}%
\end{figure}

Having completed the input boxes on the left side of the window, you can now
use the buttons on the right side.

\begin{enumerate}
\item The \textsf{Search} button performs the graph search (it actually
invokes the \texttt{gsearch} executable and stores results in the
\texttt{output} directory). Time to complete the search depends on the size of
the graph and the number of spanning trees used. Do \emph{not} use the
exhaustive spanning tree enumeration option on large graphs because the
program may take too long to terminate.

\item The \textsf{Plot Graph} button plots the graph. An image of the graph is
generated and an image viewer is invoked which can be used to view the image
file (this is always named \texttt{graph.jpg}). A screenshot of the image
viewer appears in Figure \ref{fig16}.

\item The \textsf{Plot Search} button plots the graph search. Actually, first
the search strategy files are used to generate a sequence of image files, one
image corresponding to each step of the search (these files are stored in the
directory \texttt{pix}) and then the image viewer is invoked to view the
files. The user can step through the images using the arrow keys, or use the
\textsf{start} / \textsf{stop }buttons to run a slide show of the search. A
screenshot of the image viewer appears in Figure \ref{fig16}.
\end{enumerate}

\begin{figure}[h]
\centering\scalebox{0.5}{\includegraphics{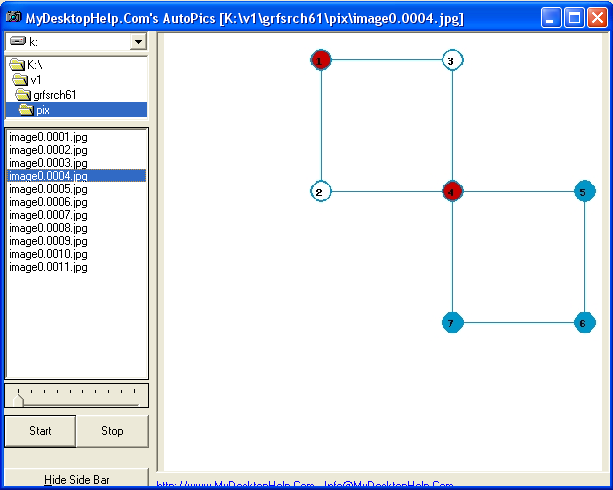}}\caption{A
screenshot of the image viewer.}%
\label{fig16}%
\end{figure}

\noindent \emph{Nota bene}: do not run \textsf{Plot Search} before running
\textsf{Search}! The \textsf{Plot Search} function uses the latest results of
\textsf{Search}; if these correspond to an earlier graph (from a previous run
of the GUI) the plotted search will produce nonsense reults.

The GUI was implemented by Ath. Kehagias in MS Visual Basic 5.0 (and a little
bit of C, using the dev-cpp 4.9.9.2 environment). The image viewer is a
freeware program called AutoPics and made available by the company
\emph{Mydesktophelp} (at \texttt{http://www.mydesktophelp.com}). The
GD\ graphics library (version gd-2.0.34-win32) has also been used; this
library is available at \texttt{http://www.libgd.org/}.

Finally it must be stressed that the current version , \texttt{gsearchGUI
v.0.9,} is still \emph{beta} and requires further development. However we make
it publicly available in the hope that it will prove useful to the graph
search community.

\newpage

\clearpage


\begin{thebibliography}{99}                                                                                               %


\bibitem {Alspach}B. Alspach, \textquotedblleft Searching and sweeping graphs:
a brief survey\textquotedblright, \emph{Le Matematiche}, vol. 59, pp. 5--37, 2006.

\bibitem {Barriere1}L. Barriere, P. Flocchini, P. Fraigniaud, and N. Santoro,
\textquotedblleft Capture of an intruder by mobile agents\textquotedblright,
in \emph{Proc. of the 14th annual ACM symposium on Parallel algorithms and
architectures }(SPAA 2002), pp. 200--209, 2002.

\bibitem {Barriere2}L. Barriere, P. Fraigniaud, N. Santoro, and D. M.
Thilikos, \textquotedblleft Searching is not jumping\textquotedblright, in
\emph{Proc. of the 29th International Workshop on Graph-Theoretic Concepts in
Computer Science }(WG 2003), vol. 2880 of Springer LNCS, pp. 34--45, 2003.

\bibitem {BienstockReview}D. Bienstock, \textquotedblleft Graph searching,
path-width, tree-width and related problems (a survey)\textquotedblright, in
\emph{Reliability of computer and communication networks}, vol. 5 of DIMACS
Ser. Discrete Math. Theoret. Comput. Sci., pp. 33--49, 1991.

\bibitem {Bienstock}D. Bienstock and P. Seymour, \textquotedblleft
Monotonicity in graph searching\textquotedblright, \emph{J. Algorithms},
vol.12, pp. 239--245, 1991.

\bibitem {Breisch}R. Breisch, \textquotedblleft An intuitive approach to
speleotopology\textquotedblright, \emph{Southwestern Cavers}, vol.6, pp.
72--78, 1967.

\bibitem {Chavira}M. Chavira. \emph{Beyond Treewidth in Probabilistic
Inference}. Ph.D. dissertation, UCLA Dept. of Computer Science, 2007.

\bibitem {Dendris}N. D. Dendris, L. M. Kirousis, and D. M. Thilikos,
\textquotedblleft Fugitive-search games on graphs and related
parameters\textquotedblright, \emph{Theoret. Comput. Sci}., vol. 172, pp.
233--254, 1997.

\bibitem {Ellis}J. A. Ellis, I. H. Sudborough, and J. S. Turner,
\textquotedblleft The vertex separation and search number of a
graph\textquotedblright, \emph{Inform. and Comput}., vol. 113, pp. 50--79, 1994.

\bibitem {IntervalFomin}F.V. Fomin, P. Heggernes and R. Mihai.
\textquotedblleft Mixed search number and linear-width of interval and split
graphs\textquotedblright. \emph{Springer LNCS}, vol. 4769, pp.304-315, 2007

\bibitem {Fomin}F. V. Fomin, D. M. Thilikos, and I. Todinca, \textquotedblleft
Connected graph searching in outerplanar graphs\textquotedblright,
\emph{Electronic Notes in Discrete Mathematics}, vol. 22, pp. 213--216, 2005.

\bibitem {FominThilikos}F. V. Fomin, D. M. Thilikos, \textquotedblleft An
annotated bibliography on guaranteed graph searching\textquotedblright.
\emph{Theor. Comp. Sci., }vol.399, pp.236-245, 2008.

\bibitem {Fraigniaud1}P. Fraigniaud, F. V. Fomin, and D. M. Thilikos,
\textquotedblleft Connected branch decomposition and graph
searching\textquotedblright, in \emph{Proc. of SIAM Conference on Discrete
Mathematics}, 2006.

\bibitem {Fraigniaud2}P. Fraigniaud and N. Nisse, \textquotedblleft Connected
treewidth and connected graph searching\textquotedblright, in \emph{Proc. of
the 7th Latin American Symposium on Theoretical Informatics} (LATIN 2006),
vol. 3887 of Springer LNCS., pp. 479--490, 2006.

\bibitem {Gerkey1}B.P. Gerkey, S. Thrun and G. Gordon. \textquotedblleft
Visibility-based pursuit-evasion with limited field of view\textquotedblright%
.\ \emph{Proc. of the Natl. Conf. on Artificial Intelligence (AAAI 2004)}, pp.
20-27, 2004.

\bibitem {Gerkey2}B.P. Gerkey, S. Thrun and G. Gordon. \textquotedblleft
Parallel stochastic hill-climbing with small teams\textquotedblright. In
\emph{Multi-Robot Systems: From Swarms to Intelligent Automata} \emph{III},
pp. 65-77, 2005.

\bibitem {Guibas}L. J. Guibas, J.-C. Latombe, S. M. Lavalle, D. Lin, and R.
Motwani, \textquotedblleft A visibility-based pursuit-evasion
problem\textquotedblright, \emph{Int. J. Comput. Geom. Appl.,} vol. 9, pp.
471--493, 1999.

\bibitem {HarrisKirchoff}J.M. Harris, J.L. Hirst and M.J. Mossinghoff.
\emph{Combinatorics and Graph Theory}. Springer, 2008.

\bibitem {Hollinger2}G. Hollinger, Ath. Kehagias and S. Singh.
\textquotedblleft Probabilistic strategies for pursuit in cluttered
environments with multiple robots\textquotedblright. In \emph{Proc.
ICRA\ 2007}, 2007.

\bibitem {Hollinger1}G. Hollinger, Ath. Kehagias, S. Singh, D. Ferguson, and
S. Srinivasa. \textquotedblleft Anytime Guaranteed Search using Spanning
Trees\textquotedblright. Tech. Report CMU-RI-TR-08-36, Robotics Institute,
Carnegie Mellon University, 2008.

\bibitem {Jayakumar}R. Jayakumar, K. Thulasiraman and M.N.S. Swamy.
\textquotedblleft MOD-CHAR: an implementation of Char's spanning tree
enumeration algorithm and its complexity analysis\textquotedblright.
\emph{IEEE Trans. on Circuits and Systems}, vol. 36, pp.219-228, 1989.

\bibitem {IntervalKaplan}H Kaplan, R Shamir. \textquotedblleft Pathwidth,
bandwidth, and completion problems to proper interval graphs with small
cliques\textquotedblright. \emph{SIAM Journal on Computing}, vol. 25,
pp.540-561, 1996.

\bibitem {Kehagias1}Ath. Kehagias, G. Hollinger, and S. Singh.
\textquotedblleft A Graph Search Algorithm for Indoor Pursuit /
Evasion\textquotedblright. Tech. Report CMU-RI-TR-08-38, Robotics Institute,
Carnegie Mellon University, 2008.

\bibitem {Kinnersley}N. G. Kinnersley, \textquotedblleft The vertex separation
number of a graph equals its pathwidth\textquotedblright, \emph{Inf. Process.
Lett.}, vol. 42, pp. 345--350, 1992.

\bibitem {Kiroussis}L. M. Kirousis and C. H. Papadimitriou, \textquotedblleft
Searching and pebbling\textquotedblright, \emph{Theoret. Comput. Sci}., vol.
47, pp. 205--218, 1986.

\bibitem {LaPaugh}A. S. LaPaugh, \textquotedblleft Recontamination does not
help to search a graph\textquotedblright, \emph{J. Assoc. Comput. Mach.}, vol.
40, pp. 224--245, 1993.

\bibitem {Lavalle}S. M. Lavalle, B. H. Simov, and G. Slutzki,
\textquotedblleft An algorithm for searching a polygonal region with a
flashlight\textquotedblright, \emph{Int. J. Comput. Geom. Appl.}, vol.12, pp.
87--113, 2002.

\bibitem {LavalleBook}S.M. LaValle. \emph{Planning Algorithms}. Cambridge
University Press, 2008.

\bibitem {Megiddo}N. Megiddo, S. L. Hakimi, M. R. Garey, D. S. Johnson, and C.
H. Papadimitriou, \textquotedblleft The complexity of searching a
graph\textquotedblright, \emph{J. Assoc. Comput. Mach.}, vol. 35, pp. 18--44.
16, 1988.

\bibitem {Nowakowski}R. Nowakowski and P. Winkler, \textquotedblleft
Vertex-to-vertex pursuit in a graph\textquotedblright, \emph{Discrete Math.},
vol. 43, pp. 235--239, 1983.

\bibitem {Parsons1}T. D. Parsons, \textquotedblleft Pursuit-evasion in a
graph\textquotedblright, in \emph{Theory and applications of graphs}, Springer
LNM, vol. 642, pp. 426-- 441, 1978.

\bibitem {Petrov}N. N. Petrov, A problem of pursuit in the absence of
information on the pursued, \emph{Differentsialnye Uravneniya}, vol. 18, pp.
1345--1352, 1982.

\bibitem {Sing87IJRA}S. Singh and M. Wagh. \textquotedblleft Robot path
planning using intersecting convex shapes: Analysis and
simulation\textquotedblright, \emph{IEEE Journal of Robotics and Automation},
vol.?, pp.??-??, 1987.

\bibitem {Smith}J. D. H. Smith, \textquotedblleft Minimal trees of given
search number\textquotedblright, \emph{Discrete Math}., vol. 66, pp. 191--202, 1987.

\bibitem {Takahashi}A. Takahashi, S. Ueno, and Y. Kajitani, Mixed searching
and proper- path-width, \emph{Theoret. Comput. Sci.}, 137 (1995), pp. 253--268.

\bibitem {ThilikosParams}D. M. Thilikos, \textquotedblleft Algorithms and
obstructions for linear-width and related search parameters\textquotedblright,
\emph{Discrete Appl. Math.}, vol. 105, pp. 239--271, 2000.

\bibitem {Wilson1}D.B. Wilson, \textquotedblleft Generating random spanning
trees more quickly than the cover time\textquotedblright, \emph{Proc. of the
28th annual ACM symposium on Theory of Computation}, pp.296-303, 1996.

\bibitem {Yang1}B. Yang, \textquotedblleft Strong-mixed searching and
pathwidth\textquotedblright, \emph{J. Comb. Optim.}, vol. 13, pp. 47--59, 2007.

\bibitem {Yang2}B. Yang, D. Dyer, and B. Alspach, \textquotedblleft Sweeping
graphs with large clique number\textquotedblright, in \emph{Proc. of the 15th
International Symposium on Algorithms and Computation }(ISAAC 2004), Springer
LNCS, vol. 3341, pp. 908--920, 2004.

\bibitem {Zilberstein}S Zilberstein. \textquotedblleft Using anytime
algorithms in intelligent systems\textquotedblright. \emph{AI magazine}, vol.
17, pp. 73-83, 1996.
\end{thebibliography}
\end{document}